\newtheoremstyle{definitionstyle}% name
  {0.2cm}% Space above
  {0.2cm}% Space below
  {\it}% Body font
  {}% Indent amount (empty = no indent, \parindent = para indent)
  {\it\bfseries}% Thm head font
  {.}% Punctuation after thm head
  { }% Space after thm head: " " = normal interword space; \newline = linebreak
  {\thmname{#1}\thmnumber{~#2}\thmnote{~(#3)}}% Thm head spec
\newtheoremstyle{nameddefinitionstyle}% name
  {\baselineskip\@plus.2\baselineskip\@minus.2\baselineskip}% Space above
  {\baselineskip\@plus.2\baselineskip\@minus.2\baselineskip}% Space belowqi2008A
  {}% Body font\ocite
  {}% Indent amount (empty = no indent, \parindent = para indent)
  {\bfseries}% Thm head font
  {.}% Punctuation after thm head
  { }% Space after thm head: " " = normal interword space; \newline = linebreak
  {\thmnote{#3}}% Thm head spec
\newtheoremstyle{framednameddefinitionstyle}% name
  {0.2cm}% Space above
  {0.2cm}% Space below
  {\it}% Body font
  {}% Indent amount (empty = no indent, \parindent = para indent)
  {\it\bfseries}% Thm head font
  {.}% Punctuation after thm head
  { }% Space after thm head: " " = normal interword space; \newline = linebreak
  {\thmnote{#3}}% Thm head spec
\newtheoremstyle{theoremstyle}% name
  {0.2cm}% Space above
  {0.2cm}% Space below
  {\it}% Body font
  {}% Indent amount (empty = no indent, \parindent = para indent)
  {\it\bfseries}% Thm head font
  {.}% Punctuation after thm head
  { }% Space after thm head: " " = normal interword space; \newline = linebreak
  {\thmname{#1}\thmnumber{~#2}\thmnote{~(#3)}}% Thm head spec
\newtheoremstyle{framedtheoremstyle}% name
  {\baselineskip\@plus.2\baselineskip\@minus.2\baselineskip}% Space above
  {\baselineskip\@plus.2\baselineskip\@minus.2\baselineskip}% Space below
  {\sl}% Body font
  {}% Indent amount (empty = no indent, \parindent = para indent)
  {\bfseries}% Thm head font
  {.}% Punctuation after thm head
  { }% Space after thm head: " " = normal interword space; \newline = linebreak
  {\thmname{#1}\thmnumber{~#2}\thmnote{~(#3)}}% Thm head spec
\newtheoremstyle{proofstyle}% name
  {\baselineskip\@plus.2\baselineskip\@minus.2\baselineskip}% Space above
  {\baselineskip\@plus.2\baselineskip\@minus.2\baselineskip}% Space below
  {}% Body font
  {}% Indent amount (empty = no indent, \parindent = para indent)
  {}% Thm head font
  {.}% Punctuation after thm head
  { }% Space after thm head: " " = normal interword space; \newline = linebreak
  {\textsc{\thmname{#1}\thmnote{~#3}}}% Thm head spec
\theoremstyle{theoremstyle}
\newtheorem{prp}{Proposition}
\newtheorem{cor}{Corollary}
\theoremstyle{framedtheoremstyle}
\theoremstyle{definitionstyle}
\theoremstyle{definitionstyle}
\theoremstyle{definitionstyle}
\theoremstyle{definitionstyle}
\theoremstyle{nameddefinitionstyle}
\theoremstyle{framednameddefinitionstyle}
\newtheorem*{framednameddef}{}
\theoremstyle{proofstyle}
\theoremstyle{definitionstyle}
\newcommand{\fromto}{\rightarrow}
\newcommand{\xfromto}[1]{\xrightarrow{#1}}
\newcommand{\ZZZ}{\mathbb{Z}}
\newcommand{\NNN}{\mathbb{N}}
\newcommand{\RRR}{\mathbb{R}}
\newcommand{\CCC}{\mathbb{C}}
\renewcommand{\SS}{\mathbf{S}}
\newcommand{\T}{\mathcal{T}}
\newcommand{\W}{\mathcal{W}}
\DeclareMathOperator{\Map}{Map}
\DeclareMathOperator{\identity}{id}
\DeclareMathOperator{\image}{im}
\DeclareMathOperator{\Aut}{Aut}
\DeclareMathOperator{\Tor}{Tor}
\DeclareMathOperator{\Ext}{Ext}
\DeclareMathOperator{\kernel}{ker}
\newcommand{\coloneq}{\mathrel{\mathop:}=}
\newcommand{\homotopic}{\simeq}
\newcommand{\isomorphic}{\cong}
\newcommand{\paren}[1]{\left( #1 \right)}
\newcommand{\angles}[1]{\left\langle #1 \right\rangle}
\newcommand{\brackets}[1]{\left[ #1 \right]}
\newcommand{\braces}[1]{\left\{ #1 \right\}}
\newcommand{\ket}[1]{\left|#1\right\rangle}
\def\ind\hspace{0.2in}
\newcommand{\SPT}{\operatorname{\mathcal{SPT}}}
\newcommand{\wSPT}{\operatorname{w\mathcal{SPT}}}
			\newcommand{\e}[1]{\begin{align}{#1}\end{align}}	
		\newcommand{\la}[1]{\label{#1}}
		\newcommand{\q}[1]{Eq.\ (\ref{#1})}
		\newcommand{\s}[1]{Sec.\ \ref{#1}}
		\newcommand{\fig}[1]{Fig.\ \ref{#1}}
\newcommand{\bk}{\boldsymbol{k}}
\newcommand{\bE}{\boldsymbol{E}}
\newcommand{\bB}{\boldsymbol{B}}
\newcommand{\cals}{{\cal S}}
\newcommand{\bpm}{\begin{pmatrix}}
\newcommand{\epm}{\end{pmatrix}}
\newcommand{\bal}{\begin{align}}
\newcommand{\sma}[1]{\scriptscriptstyle{#1}}
\newcommand{\noc}{n_{\sma{{o}}}}
\begin{document}

\title{Organizing symmetry-protected topological phases by layering and symmetry reduction: a minimalist perspective}
\author{Charles Zhaoxi Xiong$^1$}
\email{zxiong@g.harvard.edu}
\author{A. Alexandradinata$^2$}
\affiliation{$^1$Department of Physics, Harvard University, Cambridge, Massachusetts 02138, USA}
\affiliation{$^2$Department of Physics, Yale University, New Haven, Connecticut 06520, USA}
\date{\today}

\begin{abstract}
It is demonstrated that fermionic/bosonic symmetry-protected topological (SPT) phases across different dimensions and symmetry classes can be organized using geometric constructions that increase dimensions and symmetry-reduction maps that change symmetry groups. Specifically, it is shown that the interacting classifications of SPT phases with and without glide symmetry fit into a short exact sequence, so that the classification with glide is constrained to be a direct sum of cyclic groups of order 2 or 4. Applied to fermionic SPT phases in the Wigner-Dyson class AII, this implies that the complete interacting classification in the presence of glide is ${\mathbb Z}_4{\oplus}{\mathbb Z}_2{\oplus}{\mathbb Z}_2$ in 3 dimensions. In particular, the hourglass-fermion phase recently realized in the band insulator KHgSb must be robust to interactions. Generalizations to spatiotemporal glide symmetries are discussed.
\end{abstract}

\maketitle

\section{Introduction\label{sec:introduction}}

The recent intercourse between band theory, crystalline symmetries, and topology has been highly fruitful in both theoretical and experimental laboratories. The recent experimental discovery \cite{Ma_discoverhourglass} of hourglass-fermion surface states in the material class KHgSb \cite{Hourglass,Cohomological} heralds a new class of topological insulators (TIs) protected by glide symmetry \cite{ChaoxingNonsymm,unpinned,Shiozaki2015,Nonsymm_Shiozaki,Poyao_mobiuskondo,Ezawa_hourglass, shiozaki_review,singlediraccone} 
-- a reflection composed with a translation by half the lattice period. Despite the manifold successes of band theory, electrons are fundamentally interacting. To what extent are topological phases predicted from band theory robust to interactions?

In this work, we demonstrate that the question of (a) robustness to interactions is intimately linked to two other seemingly unrelated questions: (b) how glide-symmetric topological phases can be constructed by layering lower-dimensional topological phases, and (c) in what ways can the classification of topological phases be altered by the inclusion of glide symmetry.

In fact, question (c) is very close in spirit to the types of questions asked in a symmetry-based classification of solids: how many ways are there to combine discrete translational symmetry with rotations and/or reflections to form a space group -- the full symmetries of a crystalline solid? This has been recognized as a group extension problem, and its solution through group cohomology has led to the classification of 230 space groups of 3D solids \cite{hiller1986crystallography}. Here, we are proposing that the same mathematical structure ties together (a-c). More precisely, we are proposing a short exact sequence of abelian groups, which classify gapped, \emph{interacting} phases of matter, also known as symmetry-protected topological (SPT) phases \cite{SPT_origin, Wen_Definition}, that carries the information of (a-c).

\begin{figure}[t]
\centering
\includegraphics[width=3in]{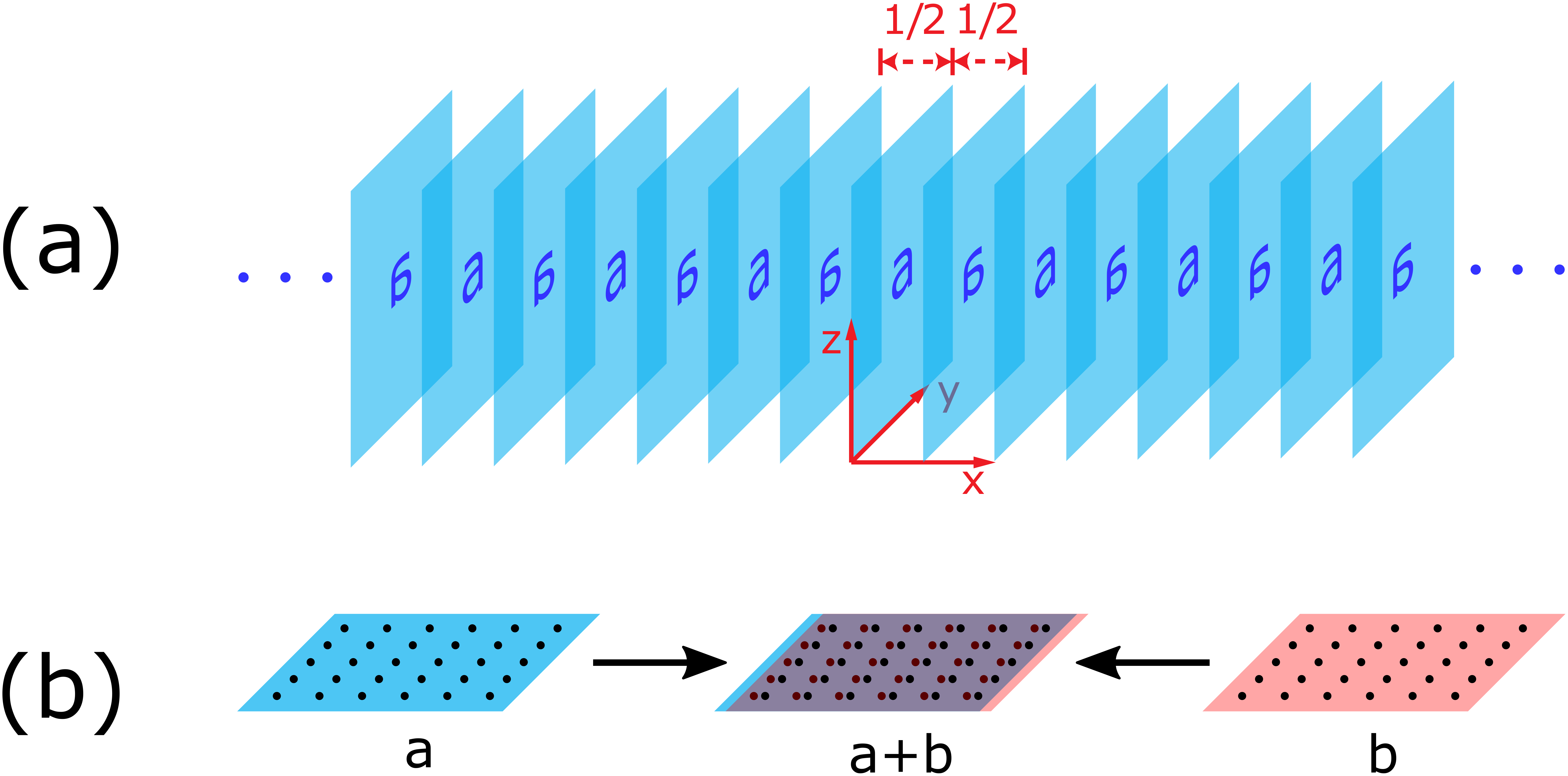}
\caption{(a) The ``alternating-layer construction" repeats a given $G$-symmetric system $a$ and its mirror image in an alternating fashion to produce a one higher-dimensional system that respects glide symmetry in addition to $G$. (b) The ``stacking" operation combines \emph{two} systems $a$ and $b$ respecting a given symmetry into a new system of the \emph{same} dimension respecting the \emph{same} symmetry. Illustration is given for particular dimensions but the constructions are general.
}
\label{fig:alternating_layer_construction}
\label{fig:stacking}
\end{figure}

In the symmetry class of hourglass fermions, i.e., spin-orbit-coupled solids with charge-conservation [$U(1)$], time-reversal ($\T$, with $\T^2 = -1$ on single fermions), and glide symmetries \footnote{More precisely, the hourglass-fermion phase belongs to a nonsymmorphic space group which includes at least one glide symmetry.}, there is a pair of consecutive maps between abelian groups classifying TIs in two and three dimensions,
\begin{equation}
\ZZZ_2 \xfromto{\times 2} \ZZZ_4 \xfromto{\rm mod~ 2} \ZZZ_2,\label{SES_AII}
\end{equation}
that can be viewed as a non-interacting analog of our short exact sequence.  The nontrivial element of the first $\ZZZ_2$, which distinguishes the two phases of 2D TIs that respect $\T$ and $U(1)$,  can be realized by a 2D quantum spin Hall (QSH) system \cite{kane2005B, kane2005A}.
By placing \emph{decoupled} copies of a QSH system on all planes of constant $x\in \ZZZ$, and its mirror image on all $x\in \ZZZ + 1/2$ planes, one constructs a 3D system that respects the glide symmetry $(x,y,z)\mapsto(x+1/2,-y,z)$.\phantom{\setcounter{footnote}{10}\footnote{Since there is no known generalization of the $\ZZZ_4$ invariant to disordered systems, we assume discrete translational symmetry in all three directions at first and quotient out phases that can be obtained by layering lower-dimensional phases in the $y$- and $z$-directions in the end \cite{Nonsymm_Shiozaki}. In the same spirit, both $\ZZZ_2$'s in Eq.\,(\ref{SES_AII}) are obtained after quotienting out layered phases; that is, they are the strong \cite{kitaev2009} classifications.}} This ``alternating-layer construction" (see Fig.\,\ref{fig:alternating_layer_construction}) takes one from the first $\ZZZ_2$ to $\ZZZ_4$ -- in particular the QSH phase to the hourglass fermion phase  \cite{Nonsymm_Shiozaki,Ezawa_hourglass} -- where $\ZZZ_4$ distinguishes the four phases of 3D TIs that respect glide in addition to $\T$ and $U(1)$ \cite{Nonsymm_Shiozaki,Note11}. By dropping the glide symmetry constraint, one can in turn view a 3D TI respecting glide, $\T$, and $U(1)$ as an element of the second $\ZZZ_2$, which is the strong classification of 3D TIs respecting $\T$ and $U(1)$ but not necessarily glide \cite{moore2007,fu2007b,Rahul_3DTI}. In this ``symmetry-forgetting" process, certain distinct classes in the $\ZZZ_4 = \braces{0,1,2,3}$ classification are identified: classes 0 and 2 (resp.\,1 and 3) can be connected to each other if glide symmetry is not enforced. This gives the second, ``glide-forgetting" map between $\ZZZ_4$ and $\ZZZ_2$ in Eq.\,(\ref{SES_AII}). We stress that glide forgetting only conceptually expands the space of allowed Hamiltonians by letting go of the glide constraint, and does not involve an immediate, actual perturbation to a particular system that is under consideration.

\setcounter{footnote}{20}

Our short exact sequence of abelian groups classifying SPT phases works in essentially the same manner as Eq.\,(\ref{SES_AII}) but with the non-interacting classification replaced by the classification of bosonic or fermionic SPT phases. 
In its full generality, the sequence applies to all symmetries $G$, including those that are represented \cite{Jiang_sgSPT, Thorngren_sgSPT} antiunitarily, and all dimensions $d$, where the analog of glide is
\begin{equation}
(x_1, x_2, x_3, \ldots, x_d) \mapsto (x_1 + 1/2, -x_2, x_3, \ldots, x_d).\label{glide_definition}
\end{equation}
Writing $\ZZZ$ for the symmetry generated by Eq.\,(\ref{glide_definition}), the existence of the short exact sequence implies that all $d$-dimensional $\ZZZ\times G$-protected SPT phases must have order 1, 2, or 4 and that their classification must be a direct sum of $\ZZZ_4$'s {and/or} $\ZZZ_2$'s, where the order of an SPT phase is defined with respect to a ``stacking" operation (imagine interlaying two systems \emph{without} coupling them; see Fig.\,\ref{fig:stacking}) 
\footnote{In the literature on non-interacting fermionic phases, the stacking operation is also known as the {``Whitney sum,"} which refers to the direct sum of vector spaces corresponding to two sets of fermion-filled bands over the same Brillouin torus. Note that a direct sum of single-particle Hilbert spaces corresponds to a tensor product of Fock spaces or many-body Hilbert spaces.}
`$+$' that makes the set of $d$-dimensional $G$-protected SPT phases into an abelian group. The short exact sequence also implies that, in general, not all $d$-dimensional $G$-protected SPT phases have glide-symmetric representatives and that a necessary and sufficient condition for such representatives to exist is for the given $G$-protected SPT phase to square to the trivial phase, where the square of an SPT phase $[a]$ is by definition $2[a] \coloneq [a]+[a]$. Note that this implication is non-obvious because certain $G$-protected SPT phases are known to be incompatible with certain symmetries outside the group $G$: e.g., a Chern insulator that conserves charge is not compatible with time reversal. From the perspective gained through our short exact sequence, it is then not surprising, in the symmetry class of the hourglass-fermion phase, that there exist four non-interacting 3D phases in the presence of glide and that the nontrivial 3D $\ZZZ_2$ TI, which squares to the trivial phase, can be made glide-symmetric \cite{Cohomological}.

In fact, by combining our general result with the proposed complete classifications of 2D \cite{2dChiralBosonicSPT_erratum} and 3D \cite{WangChong_3DSPTAII} fermionic SPT phases in the Wigner-Dyson class AII, we can show that the complete classification of 3D fermionic SPT phases with an additional glide symmetry must be $\ZZZ_4 \oplus \ZZZ_2 \oplus \ZZZ_2$, such that the first summand can be identified with the $\ZZZ_4$ in Eq.\,(\ref{SES_AII}).
We will do so in two steps. First, we will argue that the hourglass-fermion phase is robust to interactions using a corollary of the general result and known arguments \cite{qi_spincharge,essin2009} for the robustness of QSH systems and 3D TIs without glide. Assuringly, the same conclusion was recently drawn in Ref.\,\cite{Lu_sgSPT} through the construction of an anomalous surface topological order. Then, we will show that the exactness of the sequence
\begin{equation}
0 \fromto \ZZZ_2 \fromto \mbox{?} \fromto \ZZZ_2 \oplus \ZZZ_2 \oplus \ZZZ_2 \fromto 0,
\end{equation}
where $0$ denotes the trivial group (also written $\ZZZ_1$),
is simply constraining enough that $\ZZZ_4\oplus\ZZZ_2\oplus\ZZZ_2$ is the \emph{unique} solution that is compatible with the robustness of the hourglass-fermion phase.

We will derive our general result within a bare-bones, minimalist framework that one of us developed \cite{Xiong} based on Kitaev's argument that the classification of SPT phases should carry the structure of a generalized cohomology theory \cite{Kitaev_Stony_Brook_2011_SRE_1, Kitaev_Stony_Brook_2013_SRE, Kitaev_IPAM}.
The framework assumes minimally that SPT phases form abelian groups satisfying certain axioms, and applies to all existing non-dimension-specific proposals for the classification of SPT phases \cite{Wen_Boson, Wen_Fermion, Kapustin_Boson, Freed_SRE_iTQFT,  Freed_ReflectionPositivity, Kitaev_Stony_Brook_2011_SRE_1, Kitaev_Stony_Brook_2013_SRE, Kitaev_IPAM}.
The axioms provide for the switching from one symmetry group to another and from one dimension to another, which makes the derivation of our short exact sequence possible.
The results mentioned above are far from an exhaustive list of implications of the short exact sequence, which we will elaborate upon in this work.

This paper is organized as follows. In Sec.\,\ref{sec:minimalist_framework}, we will review the minimalist framework. In Sec.\,\ref{sec:hourglass_fermions}, we will argue that the hourglass-fermion phase and its square roots are robust to interactions. In Sec.\,\ref{sec:general_relations}, we will deal with SPT phases with glide more systematically. We will give a more precise definition of SPT phases, derive our general result, and explore its implications. In Sec.\,\ref{sec:physical_picture}, we will break down the general result into individual statements and offer the physical intuition behind some of them. In Sec.\,\ref{sec:applications}, we will apply the general result to 3D fermionic SPT phases in Wigner-Dyson classes A and AII, where the complete classification with glide will be derived. In Sec.\,\ref{sec:computations}, we will do the same for bosonic SPT phases for a variety of symmetries. In Sec.\,\ref{sec:discussions}, we will discuss generalized, spatiotemporal glide symmetries, the difference between glide and pure translation, a 0-dimensional proof that time reversal gives the inverse of an SPT phase, and the consistency among the arguments for the robustness of various phases. We conclude in \s{sec:summaryoutlook} with a summary of our results and a tentative discussion of the potential existence of relations for other spatial symmetries (e.g.\,reflection) that may be derived from the Hypothesis.

\section{The minimalist framework\label{sec:minimalist_framework}}

Existing proposals for the classification of SPT phases fall roughly into two categories: those of a constructive nature \cite{Wen_Boson, Cirac, Wen_Fermion,Kitaev_Stony_Brook_2011_SRE_1, Kitaev_Stony_Brook_2013_SRE, Kitaev_IPAM, Jiang_sgSPT, Wang_intrinsic_fermionic, Huang_dimensional_reduction}, and those that postulate topological invariants \cite{Kapustin_Boson, Kapustin_Fermion, Freed_SRE_iTQFT,  Freed_ReflectionPositivity, Else_edge, 2dFermionGExtension, Thorngren_sgSPT, Wang_Levin_invariants, Wang_intrinsic_fermionic, SO_infty}. Let us understand this through two examples. The group cohomology proposal \cite{Wen_Boson} for bosonic SPT phases is of a constructive nature, in that each group cocycle serves as the input to the construction of a concrete lattice model. It was shown that two equivalent group cocycles give rise to lattice models belonging to the same bosonic SPT phase, so one can say that given any group cohomology class -- that is an equivalence class of group cocycles -- a concrete bosonic SPT phase can be constructed. In principle, this correspondence can be either incomplete (the construction does not produce all phases) or degenerate (distinct cohomology classes correspond to the same phase). The former is manifested by the existence of the 3D $E_8$-phase \cite{3dBTScVishwanathSenthil, 3dBTScWangSenthil, 3dBTScBurnell}, while the latter was suggested in Ref.\,\cite{Kapustin_Boson} to occur in 6 (spatial) dimensions.

The Freed-Hopkins proposal \cite{Freed_SRE_iTQFT,Freed_ReflectionPositivity} is one that postulates topological invariants in that, given any physical system, say a lattice model, by suitably taking the long-distance limit, one expects to obtain a topological field theory of some kind, which can in turn be classified. A deformation of the physical system should not affect the resulting topological field theory, which can hence be thought of as a topological invariant. In principle, this topological invariant can be either incomplete (cannot resolve all distinct phases) or superfluous (not all values are physically realizable) \cite{Freed_SRE_iTQFT,Freed_ReflectionPositivity}. While long-distance limits are rarely explicitly taken, topological responses (e.g. Hall conductance) and derived topological invariants (e.g. ground-state degeneracy) can often be computed from the topological field theories themselves.

\begin{figure}[t]
\centering
\includegraphics[width=2.3in]{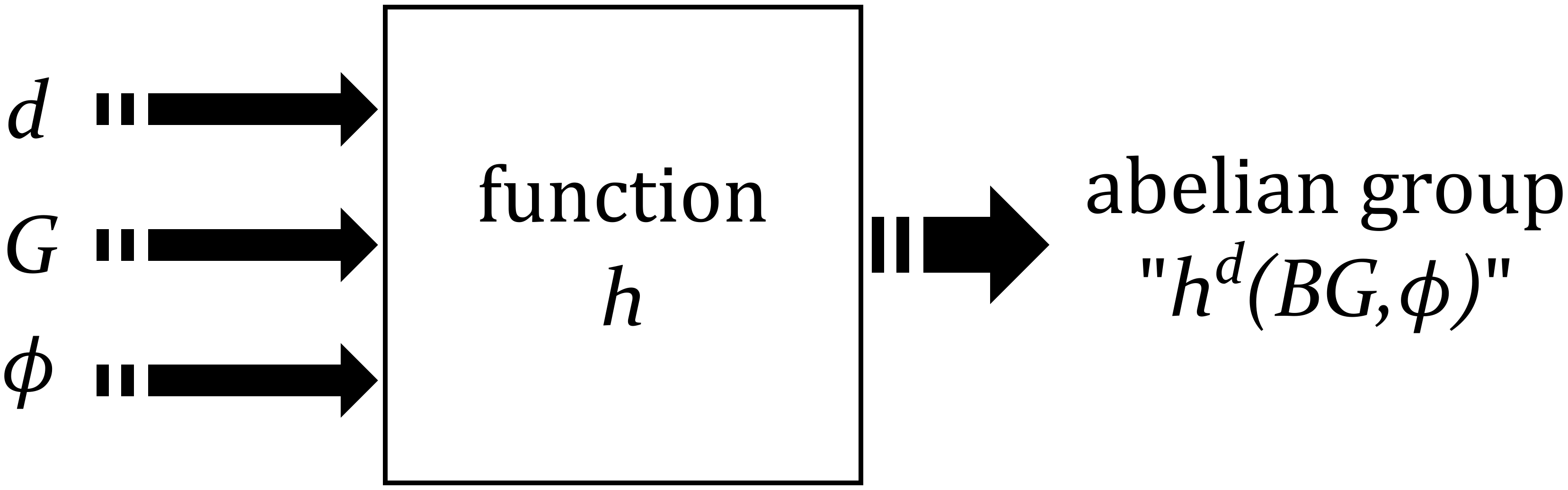}
\caption{The minimalist framework. By relinquishing all other ingredients, one places the focus entirely on the formal classification: a function $h$ that takes a triple $(d,G,\phi)$ as the input and returns an abelian group as the output.}
\label{fig:minimalist_framework}
\end{figure}

In contrast, the minimalist framework adopted in Ref.\,\cite{Xiong} is stripped of all such constructive procedures and invariant- or response-computing protocols. It places the focus entirely on the formal classification: a function $h$ that takes a triple $(d,G,\phi)$ as the input and returns a set, which we shall denote by $h^d(BG,\phi)$, as the output (see Fig.\,\ref{fig:minimalist_framework}). Of course the ultimate goal is to find the correct $h$, the one whose output gives precisely the classification of SPT phases when $d$ is interpreted as the spatial dimension and $\paren{G,\phi}$ as the symmetry. (Here, $G$ is the symmetry group, and $\phi: G \fromto \braces{\pm 1}$ is a map that keeps track of whether a symmetry reverses the orientation of spacetime \cite{Jiang_sgSPT, Thorngren_sgSPT} and is often dropped from the notation when no confusion may arise.)
Two approaches to this challenging task seem natural: either we look for the best humanly possible approximations to the correct $h$, or we make as few assumptions about $h$ and keep things as general as possible. Existing proposals \cite{Wen_Boson, Cirac, Wen_Fermion,Kitaev_Stony_Brook_2011_SRE_1, Kitaev_Stony_Brook_2013_SRE, Kitaev_IPAM, Jiang_sgSPT,Kapustin_Boson, Kapustin_Fermion, Freed_SRE_iTQFT,  Freed_ReflectionPositivity, Else_edge, 2dFermionGExtension, Thorngren_sgSPT, Huang_dimensional_reduction, Wang_Levin_invariants, Wang_intrinsic_fermionic, SO_infty} represent the former approach. Here and in Ref.\,\cite{Xiong}, we advocate for the latter.

The minimalist assumption we make about the function $h$ is that it satisfies the Eilenberg-Steenorod axioms for generalized cohomology with local coefficients \cite{Hatcher, DavisKirk, Adams1, Adams2}. We assume there is one such $h$ for bosonic SPT phases and one for fermionic SPT phases (the subtlety of fermion parity will be addressed in App.\,\ref{app:TGCH}), but we will omit such qualifiers as ``bosonic" and ``fermionic" since the discussion applies to both. The axioms endow each $h^d(BG,\phi)$ with the structure of an abelian group, and we demand that this matches the abelian group structure of SPT phases defined by stacking (see Fig.\,\ref{fig:stacking}). The axioms also imply that homomorphism $f: (G_1,\phi_1) \fromto (G_2,\phi_2)$ will naturally give rise to maps $h^d(BG_2,\phi_2) \fromto h^d(BG_1, \phi_1)$, and we demand that these match the corresponding symmetry-forgetting maps in case $f$ is an inclusion of subgroup. This minimalist assumption, which we shall refer to as the Twisted Generalized Cohomology Hypothesis, is based on Kitaev's argument that the classification of SPT phases should carry the structure of a generalized cohomology theory \cite{Kitaev_Stony_Brook_2011_SRE_1, Kitaev_Stony_Brook_2013_SRE, Kitaev_IPAM}. In particular, it was argued that $h^d(BG,\phi)$ can be written as the set of homotopy classes of $G$-equivariant maps from $BG$ to the space of $d$-dimensional short-range entangled states \cite{Kitaev_Stony_Brook_2011_SRE_1, Kitaev_Stony_Brook_2013_SRE, Kitaev_IPAM}.
The fact that all existing non-dimension-specific proposals \cite{Wen_Boson, Wen_Fermion, Kapustin_Boson, Kapustin_Fermion, Freed_SRE_iTQFT,  Freed_ReflectionPositivity, Kitaev_Stony_Brook_2011_SRE_1, Kitaev_Stony_Brook_2013_SRE, Kitaev_IPAM} for the classification of SPT phases satisfy the Hypothesis further supports its validity \cite{Xiong}.

The strategy that was adopted in Ref.\,\cite{Xiong} and will now be pursued is this: we will try to derive as many results as possible while assuming \emph{only} the Twisted Generalized Cohomology Hypothesis. Fortunately, the axioms for generalized cohomology theories are substantial enough for this approach to be useful. 
While we may not get any classification directly this way, we can nevertheless reveal relations between classifications in different dimensions for different symmetries that may be otherwise nontrivial; by combining these relations with known results for certain symmetries in certain dimensions, we can then derive the classification for other symmetries in other dimensions that we are interested in. Thanks to the minimalism in our premise, the relations will not depend on any details that are specific to particular proposals for the classification of SPT phases.

A review of generalized cohomology theories and a more precise formulation of the Twisted Generalized Cohomology Hypothesis can be found in App.\,\ref{app:twisted_generalized_cohomology}.

\section{Robustness of hourglass fermions to interactions\label{sec:hourglass_fermions}}

Let us apply generalized cohomology to 3D spin-orbit-coupled, time reversal-invariant TIs and their interacting analogues. Due to spin-orbit coupling, time reversal necessarily carries half integer-spin representation, i.e.\,it squares to $-1$ on single fermions. With the addition of glide symmetry, the two well-known classes of 3D TIs, which are distinguished by a $\ZZZ_2$ index $\nu_0\in \braces{0,1}$ \cite{fu2007b,moore2007,Rahul_3DTI,Inversion_Fu}, subdivide into four classes distinguished by a $\ZZZ_4$ invariant $\chi \in \braces{0,1,2,3}$ \cite{Nonsymm_Shiozaki, AA_Z4}. Of the four classes, $\chi=2$ corresponds to the hourglass fermion phase. We will call the other two nontrivial phases the ``square roots" of the hourglass-fermion phase, since the $\ZZZ_4$ invariant adds under stacking and $1+1 \equiv 3 + 3 \equiv 2 \mod 4$. In this section we will argue
\begin{enumerate}[(i)]
\item that the square roots of the hourglass-fermion phase are robust to interactions, and \label{list_1_2}

\item that the hourglass-fermion phase is robust to interactions. \label{list_1_3}
\end{enumerate}
Note that (\ref{list_1_3}) implies (\ref{list_1_2}), for if the $\chi= 1$ or 3 phase was unstable to interactions, then so would two decoupled copies of itself, which represent the $\chi = 2$ phase. Nevertheless, we will dedicate a separate subsection to (\ref{list_1_2}), which can be justified by an independent magnetoelectric response argument, as a consistency check.

\subsection{Robustness of the square roots of the hourglass-fermion phase\label{subsec:robustness_square_roots}}

In this subsection, we argue that the square roots ($\chi = 1, 3$) of the hourglass-fermion phase are robust to interactions. More precisely, we argue that they cannot be connected to the trivial phase ($\chi=0$) by turning on interactions that preserve the many-body gap and the glide, $U(1)$, and $\T$ symmetries.

We begin by noting that the $\ZZZ_4$ and $\ZZZ_2$ classifications with and without glide symmetry are related as
\begin{equation}
\chi \equiv \nu_0 \mod 2,\la{glideforget}
\end{equation}
which is supported by the following heuristic argument \footnote{This argument was first presented in Ref.\ \cite{Nonsymm_Shiozaki}}. Recall that $\nu_0$ counts the parity of the number of surface Dirac fermions \cite{fu2007b,moore2007,Rahul_3DTI,Inversion_Fu}. The glide symmetry assigns to each Dirac fermion a chirality according to its glide representation [compare Figs.\,\ref{fig:z4surfacestates}(b) and (c)]. Unless symmetry is broken, we cannot \cite{Hourglass} fully gap out surface states that carry two positively-chiral Dirac fermions [\fig{fig:z4surfacestates}(d)]. However, two positively-chiral Dirac fermions can be deformed into two negatively-chiral fermions [\fig{fig:z4surfacestates}(d)$\rightarrow$(e)$\rightarrow$(f)$\rightarrow$(g)]. We thus expect four topologically distinct classes, which we distinguish by a $\ZZZ_4$ invariant $\chi$ that counts the number of chiral Dirac fermions mod 4. Since both $\chi$ and $\nu_0$ count the number of surface Dirac fermions, Eq.\,(\ref{glideforget}) follows. This argument was made with representatives of $\chi=\pm 1,\pm 2$ whose surface states are Dirac fermions situated at the Brillouin zone center (point $2$ in Fig.\,\ref{fig:z4surfacestates}); more generally, the surface states form a nontrivial connected graph over the bent line $0123$ \cite{Nonsymm_Shiozaki,Hourglass}. This motivates a more general proof of \q{glideforget}, which we present in App.\,\ref{app:relationship_Z4_Z2}.

\begin{figure}[tb]
\centering
\includegraphics[width=8.6 cm]{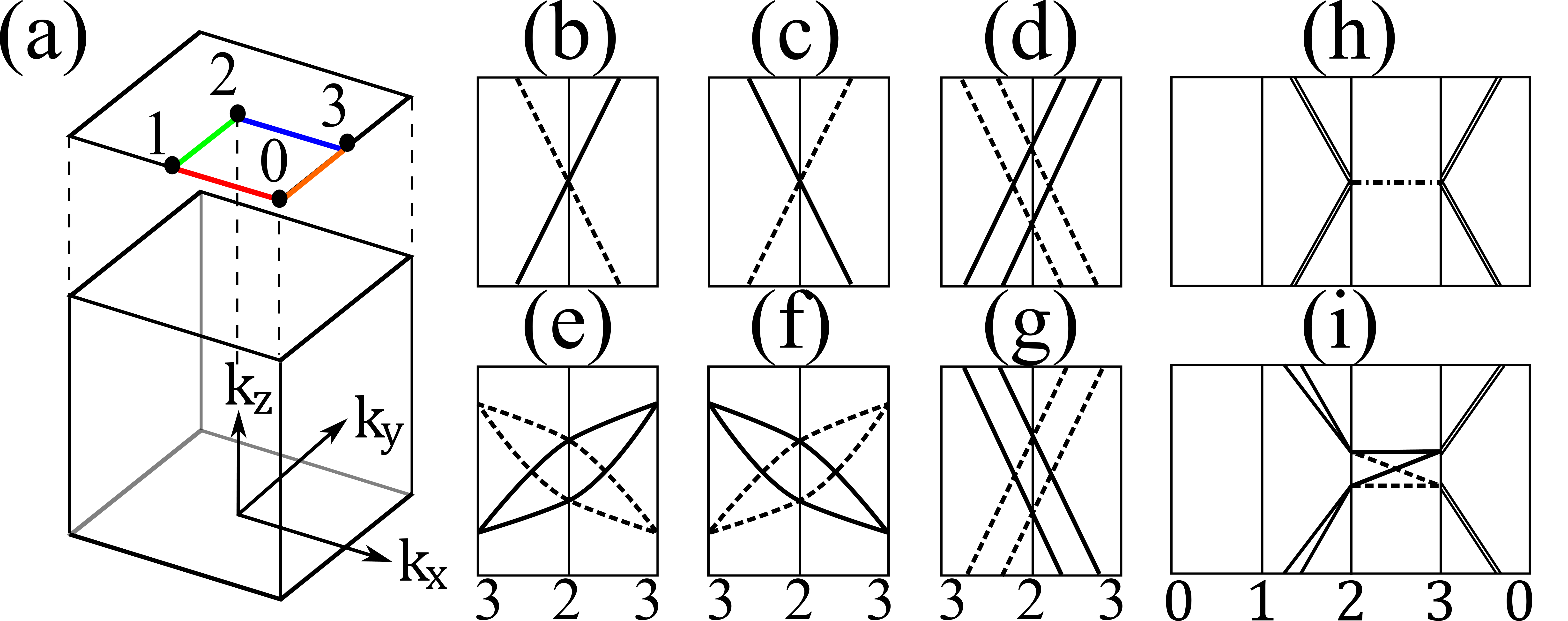}
\caption{ (a) Bottom: Brillouin 3-torus for a glide-symmetric crystal; top:  Brillouin 2-torus corresponding to the glide-symmetric surface. (b-g) Possible surface states on the glide-invariant line $323$; a surface band in the even (odd) representation of glide  is indicated by a solid (dashed) line. (h-i) Surface states on the high-symmetry line $01230$. Bands along $30$ are Kramers-degenerate owing to the composition of time reversal and glide \cite{Hourglass}. In (h), the additional degeneracy (two-fold along 12, four-fold along 23) originates from the alternating-layering construction; these degeneracies may be split by generic perturbations, as illustrated in (i).    }\label{fig:z4surfacestates}
\end{figure}

The square roots of the hourglass-fermion phase have $\ZZZ_4$ invariant $\chi=1,3$, so they must correspond to the $\nu_0 = 1$ phase by Eq.\,(\ref{glideforget}). As can be argued from the quantization of magnetoelectric response \cite{qi2008B,Wilczek_axion}, which persists in the many-body case \cite{essin2009}, the $\nu_0=1$ phase is robust to interactions in the sense that it cannot be connected to the trivial phase by turning on interactions that preserve the many-body gap and the $U(1)$ and $\T$ symmetries. But we know that interactions that preserve glide in addition to $U(1)$ and $\T$ are a subset of those that preserve $U(1)$ and $\T$. If a system cannot be made trivial by turning on interactions that preserve $U(1)$ and $\T$, then it surely cannot be made trivial by turning on interactions that simultaneously preserve glide, $U(1)$, and $\T$, and our argument is complete.

While the minimalist framework did not enter the argument above, it will enter the argument for the robustness of the hourglass-fermion phase, which is a stronger claim and the subject of the next subsection.

\subsection{Robustness of the hourglass-fermion phase\label{subsec:robustness_hourglass_fermion_phase}}

In this subsection, we argue that the hourglass-fermion phase ($\chi=2$) is robust to interactions. More precisely, we argue that it cannot be connected to the trivial phase ($\chi =0$) by turning on interactions that preserve the many-body gap and the glide, $U(1)$, and $\T$ symmetries.

Let us represent the hourglass-fermion phase by a system obtained through the alternating-layer construction. More specifically, let us put copies of a QSH system on all planes of constant $x\in \ZZZ$ and its image under $y\mapsto -y$ on all $x\in \ZZZ+ 1/2$ planes, without turning on inter-plane coupling. To see this represents the hourglass-fermion phase, we recall that a QSH system and its mirror image have identically dispersing 1-dimensional Dirac fermions on the edge. When layered together as described, we obtain two degenerate surface Dirac fermions that do not disperse as functions of $k_x$. This is illustrated in \fig{fig:z4surfacestates}(h), where along the glide-symmetric line 23 we have a four-fold degeneracy originating from two degenerate Dirac points. Glide-symmetric interlayer coupling can only perturb the surface band structure into a connected graph like in \fig{fig:z4surfacestates}(i) \cite{Hourglass}, owing to a combination of the Kramers degeneracy and the monodromy \cite{connectivityMichelZak} of the representation of glide. The resultant connected graph over 0123 has the same topology as the hourglass-fermion phase. Since the bulk gap is maintained throughout the perturbation, the unperturbed system must be in the hourglass-fermion phase. A tight-binding model that demonstrates the construction has been devised by Ezawa \cite{Ezawa_hourglass}.

Next, let us recognize that the robustness of the hourglass-fermion phase to interactions is equivalent to its nontriviality as a 3D SPT phase protected by glide, $U(1)$, and $\T$. A corollary to our general result to be presented in Sec.\,\ref{sec:general_relations} says that given any symmetry $G$, dimension $d$, and nontrivial $(d-1)$-dimensional $G$-protected SPT phase $[a]$, \emph{the $d$-dimensional $\ZZZ \times G$-protected SPT phase obtained from $[a]$ through the alternating-layer construction is trivial if and only if $[a]$ has a square root}. Since the hourglass-fermion phase can be obtained from the QSH phase through the alternating-layer construction, its robustness to interactions now boils down to the absence of a square root of the QSH phase.

To support the last claim, we employ a many-body generalization \cite{qi_spincharge,lee2008,aa2011} of the 2D $\ZZZ_2$ topological invariant $\Delta\in \braces{0,1}$ in Wigner-Dyson class AII. Following the approach of Ref.\ \cite{qi_spincharge}, which is closely related to a preceding pumping formulation \cite{fu2006} that generalizes the well-known Laughlin argument \cite{laughlin1981}, we define $\Delta$ to be the parity of the charge that is pumped toward a flux tube as half a quantum of \emph{spin flux} is threaded. This charge is quantized to be integers even in the many-body case. Moreover, it adds under stacking: if two systems $a$ and $b$ are stacked, then the charge pumped in the stacked system $a+b$ must be a sum of the individual systems. Now it is obvious that not only is a phase with odd $\Delta$ nontrivial, but it also cannot have any square root.

To recapitulate, we have argued for the robustness to interactions of all three nontrivial band insulators in the non-interacting $\ZZZ_4$ classification, by employing only a corollary to our general result. The full power of the general result will be manifest in \s{subsec:sanity_check} when we derive from it the complete classification of 3D SPT phases protected by glide, $U(1)$, and $\T$, which contains $\ZZZ_4$ as a subgroup.

\section{General relations between interacting classifications\label{sec:general_relations}}

From now on we will be dealing with an \emph{arbitrary} symmetry $G$, and SPT phases protected by either $G$ or $G$ combined with a glide symmetry $\ZZZ$. Since our general results apply to both fermionic and bosonic SPT phases, we will often omit such adjectives as ``fermionic" and ``bosonic," with the understanding that $G$ denotes a full symmetry group, which contains fermion parity, in the fermionic case. The Wigner-Dyson class AII, to which hourglass fermions belong, corresponds to the fermionic case and a $G$ that is generated by charge conservation and a time reversal that squares to fermion parity -- it is the unique non-split $U(1)$-extension of $\ZZZ_2$ for the non-trivial action of $\ZZZ_2$ on $U(1)$. We will present the main result of this section, a short exact sequence relating the classification of $G$- and $\ZZZ\times G$-protected SPT phases, in Sec.\,\ref{subsec:SES_classifications}. We will then explore its implications and derive some useful corollaries in Sec.\,\ref{subsec:corollaries}.
Before delving into the results, let us first clarify our terminology.

\subsection{Definition of SPT phases and weakness with respect to glide\label{subsec:SRE_SPT}}

Following Refs.\,\cite{Kitaev_Stony_Brook_2011_SRE_2, Kitaev_Stony_Brook_2013_SRE, Kapustin_Boson, Freed_SRE_iTQFT, Freed_ReflectionPositivity, McGreevy_sSourcery, Xiong}, we shall define $G$-protected SPT phases as $G$-symmetric phases that have ``inverses," in a sense we now make precise.

The key here is that, given any symmetry $G$ and dimension $d$, the stacking operation (see Fig.\,\ref{fig:stacking}) induces a binary operation on the set of \emph{deformation classes} of $d$-dimensional, $G$-symmetric, gapped, local quantum systems 
\footnote{We consider only systems that do not break the symmetry spontaneously.}.
The binary operation has an identity, which is represented by any system with a trivial product ground state \cite{Wen_Definition}. With respect to this identity, we can divide the deformation classes into those that have inverses and those that do not. We shall call the invertible ones $d$-dimensional $G$-protected SPT phases. Consequently, the set of $d$-dimensional $G$-protected SPT phases acquires an abelian group structure under stacking, and we shall denote this abelian group by $\SPT^d\paren{G}$, or $\SPT^d(G,\phi)$ for completeness. If necessary, subscripts can be introduced to distinguish between bosonic and fermionic phases, as in $\SPT^d_b$ or $\SPT^d_f$.

It has been argued that a gapped, local quantum system with on-site symmetry $G$ represents a $G$-protected SPT phase if and only if it has a unique ground state on all manifolds, and that in the 2D case this amounts to the condition of no nontrivial quasiparticle excitations \cite{Kitaev_Stony_Brook_2011_SRE_2, Kitaev_Stony_Brook_2013_SRE, Kapustin_Boson, Freed_SRE_iTQFT, Freed_ReflectionPositivity, McGreevy_sSourcery}. Since this is true of $p+ip$ superconductors \cite{Volovik_p+ip, Read_p+ip, Ivanov_p+ip}, integer quantum Hall systems, the Majorana chain \cite{Majorana_chain}, the $E_8$ model \cite{Kitaev_honeycomb, 2dChiralBosonicSPT, 2dChiralBosonicSPT_erratum, Kitaev_KITP}, etc., such systems should be said to represent SPT phases in our definition.
As demonstrated in Ref.\,\cite{McGreevy_sSourcery} by a worm hole array argument, the inverse of an SPT phase protected by an \emph{on-site} symmetry is given by its orientation-reversed version. That is, if $a$ is a system that represents an SPT phase $[a]$, then the orientation-reversed system $\bar a$ will represent the inverse SPT phase:
\begin{equation}
-[a] = [\bar a].
\end{equation}
The situation with non-on-site symmetries is more involved: with glide, the orientation-reversed version of either square root of the hourglass fermion phase is itself rather than the inverse, for instance.

Another useful notion is that of ``weakness with respect to glide."
Writing $\ZZZ$ for a glide symmetry, we say a $\ZZZ\times G$-protected SPT phase is weak with respect to glide if it becomes trivial under the glide-forgetting map:
\begin{equation}
\beta': \SPT^d\paren{\ZZZ \times G} \fromto \SPT^d\paren{G}. \label{beta}
\end{equation}
We shall denote the set of $d$-dimensional $\ZZZ\times G$-protected SPT phases that are weak with respect to glide by $\wSPT^d(\ZZZ \times G)$, or $\wSPT^d(\ZZZ\times G,\phi)$ for completeness. It is precisely the kernel of $\beta'$
\begin{equation}
\wSPT^d\paren{\ZZZ \times G} \coloneq \kernel \beta',
\end{equation}
which is a subgroup of the abelian group of $d$-dimensional $\ZZZ\times G$-protected SPT phases. Again, subscripts can be introduced to distinguish between bosonic and fermionic phases, as in $\wSPT^d_b$ or $\wSPT^d_f$, if necessary.

We note that there is a different, more traditional definition of SPT phases 
in terms of whether a system can be deformed to a
trivial product state if no symmetry is respected \cite{Wen_Definition}. SPT phases in the traditional sense form a subgroup of the SPT phases in the invertible sense. We have adopted the latter definition because certain SPT phases that are weak with respect to glide can be obtained through the alternating-layer construction only from SPT phases in the invertible sense and not from any SPT phases in the traditional sense; we will exemplify this claim by layering class-A Chern insulators in \s{subsec:classA}.

\subsection{Short exact sequence of classifications\label{subsec:SES_classifications}}

We now present the main result of the section, a short exact sequence that relates the classification of $d$-dimensional $\ZZZ \times G$-protected SPT phases $\SPT^d(\ZZZ \times G)$, the classification of $d$-dimensional $G$-protected SPT phases $\SPT^d(G)$, and the classification of $(d-1)$-dimensional $G$-protected SPT phases $\SPT^{d-1}(G)$, where $G$ is arbitrary and $\ZZZ$ is generated by a glide reflection.

Given any abelian group $A$, we write
\begin{equation}
2A \coloneq \braces{ 2a | a\in A }, \label{2A}
\end{equation}
for the subgroup of $A$ of those elements that have square roots, and we write $A/2A$ for the quotient of $A$ by $2A$. For example, $\ZZZ_n/ 2\ZZZ_n = \ZZZ_{\gcd(n,2)}$, where $\gcd$ stands for greatest common divisor; this even applies to $n= \infty$ if one defines $\gcd(\infty, 2) = 2$.

\begin{widetext}
\begin{prp}[Short exact sequence of classifications]
Assume the Twisted Generalized Cohomology Hypothesis. Let $d$ and $G$ be arbitrary and $\ZZZ$ be generated by a glide reflection. There is
a short exact sequence
\begin{equation}
0 \fromto \SPT^{d-1}(G) / 2\SPT^{d-1}(G) \xfromto{\alpha} \SPT^d\paren{\ZZZ\times G} \xfromto{\beta}  \{ [c] \in \SPT^d\paren{G} \big| 2[c] = 0 \} \fromto 0,\label{SES}
\end{equation}
where $\beta$ is the glide-forgetting map [same as $\beta'$ in Eq.\,(\ref{beta}) but with restricted codomain].
\label{prp:SES_classifications}
\end{prp}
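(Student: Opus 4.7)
The plan is to extract the short exact sequence from a Mayer--Vietoris computation for the classifying space $B(\ZZZ \times G) \simeq \SSS^1 \times BG$. By the Twisted Generalized Cohomology Hypothesis, $\SPT^d(G,\phi) \cong h^d(BG, \phi)$ and $\SPT^d(\ZZZ \times G, \phi') \cong h^d(\SSS^1 \times BG, \phi')$, where $\phi'$ extends $\phi$ by assigning $-1$ to the glide generator; this assignment is correct because glide reverses the $x_2$-axis and hence the orientation of spacetime, so on SPT phases it acts as inversion. Mayer--Vietoris for generalized cohomology with local coefficients follows formally from the Eilenberg--Steenrod axioms assumed in the Hypothesis.

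Next, cover $\SSS^1$ by two contractible open arcs $U,V$ with $U \cap V$ consisting of two disjoint contractible pieces, and apply Mayer--Vietoris to the resulting cover of $\SSS^1 \times BG$. Trivialising the sign-twisted coefficient system over $U$ and $V$ so that the two trivialisations agree on one component of $U \cap V$ and differ by $-1$ on the other, the difference map takes the form
\begin{equation*}
d_k \colon h^k(BG,\phi)^{\oplus 2} \to h^k(BG,\phi)^{\oplus 2}, \qquad (u,v) \mapsto (u-v,\, u+v).
\end{equation*}
An elementary computation gives $\kernel d_d \cong \{[c] \in h^d(BG,\phi) : 2[c] = 0\}$ via $(c,c) \mapsto [c]$, and $\cokernel d_{d-1} \cong h^{d-1}(BG,\phi)/2h^{d-1}(BG,\phi)$ via $(s,t) \mapsto [s+t]$. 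Extracting the middle piece of the Mayer--Vietoris long exact sequence then yields the desired short exact sequence.

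It remains to identify the two maps. The map $\beta$ is by construction the restriction to a fibre $\{*\}\times BG \hookrightarrow \SSS^1 \times BG$, which by naturality equals the map induced by the inclusion $G \hookrightarrow \ZZZ \times G$ --- namely, the glide-forgetting map $\beta'$ of Eq.\,(\ref{beta}). The map $\alpha$ is the Mayer--Vietoris connecting homomorphism; geometrically, it glues copies of a $(d-1)$-dimensional phase $[a]$ over $U$ with copies of its mirror image $[\bar a] = -[a]$ over $V$, which is exactly the alternating-layer construction of Fig.\,\ref{fig:alternating_layer_construction}(a). The main obstacle lies in this last identification: it requires unpacking the cocycle description of $h^d$ implicit in the Hypothesis and checking that the \v{C}ech coboundary over the two-patch cover coincides physically with the alternating-layer system, as well as verifying that $\alpha$ is well-defined modulo $2\SPT^{d-1}(G,\phi)$ (equivalently, that alternating-layering a phase of the form $2[c]$ produces a trivial $\ZZZ\times G$-protected phase, since the mirror-image layer also contributes $2[c]$). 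Everything else is formal manipulation of the Eilenberg--Steenrod axioms.
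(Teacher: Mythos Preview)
Your proof is correct and takes a genuinely different route from the paper. The paper represents $h$ by an $\Omega$-spectrum, rewrites $h^n(BG,\phi)$ and $h^n(B(\ZZZ\times G),\phi)$ as homotopy classes of maps into a single target $Y=\Map_\star^G(\widetilde{BG},F_{n+1})$, and identifies the latter with $\ZZZ$-equivariant maps $(I\times\RRR)/(\partial I\times\RRR)\to Y$ where the glide acts by $(s,r)\mapsto(1-s,r+1)$; it then computes $\image\beta'$ and $\kernel\beta'$ by bare-hands homotopy arguments on this model. Your Mayer--Vietoris argument is more axiomatic: it never unpacks the spectrum, deriving the sequence directly from the two-arc cover of $\SSS^1$ with the glide twist appearing only as the monodromy $-1$ in the difference map $(u,v)\mapsto(u-v,u+v)$. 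This is cleaner and makes the factors of $2$ transparent as pure linear algebra. What the paper's approach buys is an explicit formula for the lift $\alpha'$ (before quotienting) that is visibly the alternating-layer construction, whereas your connecting homomorphism needs extra unpacking for that physical identification---though note that the paper itself only \emph{posits} this interpretation after the proof rather than deriving it. One minor point: your last worry, that $\alpha$ is well-defined modulo $2\SPT^{d-1}(G)$, is automatic---$\partial$ factors through $\cokernel d_{k-1}$ by exactness of Mayer--Vietoris, so there is nothing to check.
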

\end{widetext}

\begin{figure}[t]
\centering
\includegraphics[height=1.3in]{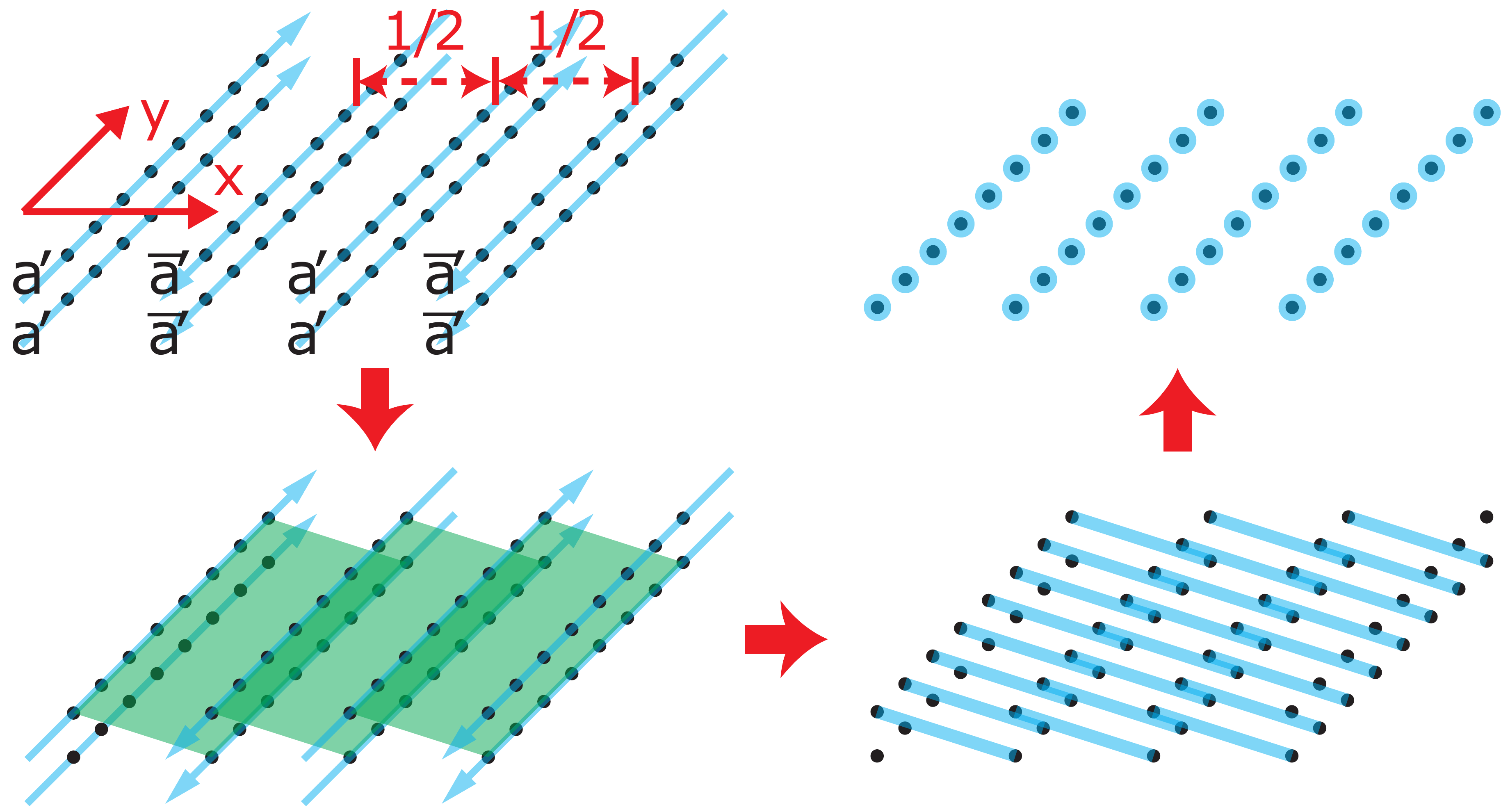}
\caption{Physical justification for the claim $[a] = [a']+[a'] \Rightarrow \alpha'([a]) = 0$, depicted for $d=2$, $(x,y)\mapsto (x+1/2, -y)$. Applying the alternating-layer construction to $a'+a'$ gives a 2D system (upper-left panel). By coupling an $a'$ (or $\overline{a'}$) in each $x\in \ZZZ$ (resp.\,$x\in \ZZZ + 1/2$) layer to an $\overline{a'}$ (resp.\,$a$) in the $x+1/2$ layer (lower-left panel), one can deform the ground state to a tensor product of individual states supported on diagonal pairs of sites (lower-right panel) \cite{Note31}. A redefinition of sites then turns this into a tensor product of individual states supported on single sites, i.e.\,a trivial product state (upper-right panel) \cite{Note41}. All deformations can be chosen to preserve $\ZZZ \times G$ and the gap.}
\label{fig:property_alpha'}
\end{figure}

\begin{proof}
See App.\,\ref{app:proof}.
\end{proof}

While the interpretation of $\beta$ in Proposition \ref{prp:SES_classifications} is clear from the proof in App.\ref{app:proof},
the latter does not address the question as to what $\alpha$ means. Motivated by the discussions in Sec.\,\ref{subsec:alternating_layer_construction_triviality_glide_forgetting}, we shall posit that $\alpha$ is given by the alternating-layer construction introduced earlier (see Fig.\,\ref{fig:alternating_layer_construction}).
More precisely, the alternating-layer construction defines a map
\begin{equation}
\alpha': \SPT^{d-1}(G) \fromto \SPT^d\paren{\ZZZ \times G}, \label{alpha'}
\end{equation}
with domain the abelian group of $(d-1)$-dimensional $G$-protected SPT phases.
By the physical argument in Fig.\,\ref{fig:property_alpha'}, we must have $\alpha'([a]) = 0$
whenever there exists an $[a']\in \SPT^{d-1}(G)$ such that $[a] = [a'] + [a']$. This means $\alpha'$ can effectively be defined on the quotient $\SPT^{d-1}(G)/ 2\SPT^{d-1}(G)$, and we shall identify the induced map
\begin{equation}
\alpha: \SPT^{d-1}(G) / 2 \SPT^{d-1}(G) \fromto \SPT^d\paren{\ZZZ \times G} \label{alpha}
\end{equation}
as the $\alpha$ that appears in Proposition \ref{prp:SES_classifications}.
This interpretation of $\alpha$ is further supported by the various examples in Sec.\,\ref{sec:applications} below and Ref.\,\cite{Lu_sgSPT}.%
\phantom{
\setcounter{footnote}{30}
\footnote{In general it is not necessary, or even possible, to go through this intermediate step. In the discussion of SPT phases, one always considers stable equivalence. That is, one allows for change of Hilbert spaces through, for instance, the introduction of ancillary lattices or a blocking of lattice sites \cite{Wen_Boson, Cirac}. The deformation of a composite system into a system whose ground state is a tensor product may already involve such changes, in which case the intermediate picture of a tensor product of states supported on pairs of sites will no longer be accurate.}
\setcounter{footnote}{40}
\footnote{More precisely, the transformation of states supported on pairs of sites into states supported on single sites involves first an enlargement of the Hilbert space and then a transfer of states in one sector of the enlarged Hilbert space to another \cite{Wen_Boson, Cirac}.}
}

We will see in Sec.\,\ref{sec:physical_picture} that twice the glide reflection being orientation-preserving is closely related to the factors of 2 appearing in Proposition \ref{prp:SES_classifications}. The reader may have realized that $\SPT^{d-1}(G) / 2\SPT^{d-1}(G)$ and $\{ [c] \in \SPT^d\paren{G} | 2[c] = 0 \}$ can be expressed using the extension and torsion functors, as $\Ext^1( \ZZZ_2, \SPT^{d-1}(G) )$ and $\Tor_1 (\ZZZ_2, \SPT^d\paren{G})$, respectively, both of which are contravariant as they should be \cite{Xiong}.

\subsection{Implications of the short exact sequence \label{subsec:corollaries}}

Let us explore the implications of Proposition \ref{prp:SES_classifications} and derive some useful corollaries from it.

First, the exactness of sequence (\ref{SES}) implies that $\image \alpha = \kernel \beta$ (this can be equivalently stated as $\image \alpha' = \kernel \beta'$ since, by definition, $\image \alpha = \image \alpha'$ and $\kernel \beta = \kernel \beta'$), which reproduces the result in Ref.\,\cite{Lu_sgSPT} that
\begin{cor}
A $\ZZZ\times G$-protected SPT phase is weak with respect to glide if and only if it can be obtained through the alternating-layer construction.
\end{cor}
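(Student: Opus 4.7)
The plan is to observe that this corollary is essentially a direct unpacking of the exactness of the short exact sequence in Proposition~\ref{prp:SES_classifications}, once we identify $\alpha$ with the map induced by the alternating-layer construction. The statement to be established is the equality of two subgroups of $\SPT^d(\ZZZ\times G)$: the subgroup $\wSPT^d(\ZZZ\times G)$ of phases that are weak with respect to glide, and the subgroup of phases that arise from the alternating-layer construction.

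First I would translate each side into a statement about kernels and images. By definition, $\wSPT^d(\ZZZ\times G) = \kernel \beta'$, where $\beta'$ is the glide-forgetting map of \q{beta}. The set of phases obtainable through the alternating-layer construction is, by definition, $\image \alpha'$, where $\alpha'$ is the map of \q{alpha'}. So the corollary asserts
\begin{equation}
\image \alpha' = \kernel \beta'. \notag
\end{equation}

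Next I would relate the primed maps to the unprimed maps appearing in Proposition~\ref{prp:SES_classifications}. The map $\beta$ differs from $\beta'$ only by restriction of the codomain from $\SPT^d(G)$ to $\{[c]\in \SPT^d(G)\,|\,2[c]=0\}$, so $\kernel \beta = \kernel \beta'$. The map $\alpha$ is, by construction (as discussed after Proposition~\ref{prp:SES_classifications}), the factorization of $\alpha'$ through the quotient by $2\SPT^{d-1}(G)$; since the quotient map is surjective, $\image \alpha = \image \alpha'$.

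Finally I would invoke the exactness of sequence~(\ref{SES}) at its middle term: $\image \alpha = \kernel \beta$. Chaining the identifications,
\begin{equation}
\image \alpha' = \image \alpha = \kernel \beta = \kernel \beta' = \wSPT^d(\ZZZ\times G), \notag
\end{equation}
which is the desired equality. The only nontrivial input is the exactness of~(\ref{SES}), proved in App.~\ref{app:proof}, and the physical identification of $\alpha$ with the alternating-layer construction, already justified in Fig.~\ref{fig:property_alpha'} and the surrounding discussion; consequently there is no genuine obstacle at the level of the corollary itself, which is purely a bookkeeping consequence of Proposition~\ref{prp:SES_classifications}.
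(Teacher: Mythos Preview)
Your proposal is correct and matches the paper's own argument essentially verbatim: the paper derives the corollary by noting that exactness of sequence~(\ref{SES}) gives $\image\alpha=\kernel\beta$, and that this is equivalent to $\image\alpha'=\kernel\beta'$ because $\image\alpha=\image\alpha'$ and $\kernel\beta=\kernel\beta'$ by definition. There is nothing to add.
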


\noindent Furthermore, since every element of an abelian group of the form $A/2A$ is either trivial or of order 2, any $\ZZZ \times G$-protected SPT phase that is weak with respect to glide -- and hence obtainable from an element of $\SPT^{d-1}(G) / 2 \SPT^{d-1}(G)$ -- must also be either trivial or of order 2. In fact, a necessary and sufficient condition for such a phase to be trivial (resp. has order 2) is that the $(d-1)$-dimensional $G$-protected SPT phase it comes from has a square root (resp. has no square root)
\footnote{
While two $(d-1)$-dimensional $G$-protected SPT phases may lead to the same $d$-dimensional $\ZZZ \times G$-protected SPT phase, this if-and-only-if condition is unambiguous because any two such phases must both have square roots or both have no square roots.}.
This follows from the exactness of sequence (\ref{SES}), which implies $\alpha$ is injective. 

The above necessary and sufficient condition allows us to classify $d$-dimensional $\ZZZ \times G$-protected SPT phase that are weak with respect to glide by classifying $(d-1)$-dimensional $G$-protected SPT phases instead:
\begin{cor}
\label{cor:classification_SPT_weak_wrt_glide}
There is an isomorphism
\begin{equation}
\wSPT^d\paren{\ZZZ \times G} \isomorphic \SPT^{d-1}(G) / 2\SPT^{d-1}(G). \label{GSPT}
\end{equation}
\end{cor}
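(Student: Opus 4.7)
The plan is to obtain the isomorphism as an immediate consequence of Proposition \ref{prp:SES_classifications}, exploiting only the exactness of sequence (\ref{SES}). The corollary is therefore purely formal; no new physical input is needed beyond what went into the short exact sequence itself.

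First I would reconcile the two $\beta$'s that appear. The map $\beta'$ in Eq.\,(\ref{beta}) has codomain $\SPT^d(G)$, while the map $\beta$ in Proposition \ref{prp:SES_classifications} has codomain $\{[c]\in\SPT^d(G)\mid 2[c]=0\}$. Since the latter is simply the restriction of the former to its actual image, the two maps agree on elements and therefore have the same kernel:
\begin{equation}
\wSPT^d(\ZZZ\times G) \coloneq \kernel\beta' = \kernel\beta.
\end{equation}
Thus the object we want to identify is precisely the kernel of the right-hand map in the short exact sequence.

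Next I would invoke exactness at the middle term of (\ref{SES}), which gives $\kernel\beta = \image\alpha$, and exactness at the left term, which says $\alpha$ is injective. Injectivity means $\alpha$ restricts to an isomorphism from its domain onto its image, so
\begin{equation}
\SPT^{d-1}(G)/2\SPT^{d-1}(G) \xrightarrow{\;\alpha\;} \image\alpha = \kernel\beta = \wSPT^d(\ZZZ\times G)
\end{equation}
is the desired isomorphism.

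Since everything reduces to exactness of (\ref{SES}), there is essentially no obstacle; the only point that needs a line of commentary is the reconciliation of the two versions of the glide-forgetting map, to make clear that ``weak with respect to glide'' in the sense of $\beta'$ coincides with being in $\kernel\beta$. It may also be worth recording in a remark that, under this isomorphism, a class $[a]\in\SPT^{d-1}(G)/2\SPT^{d-1}(G)$ corresponds to the $\ZZZ\times G$-protected phase obtained from any representative of $[a]$ via the alternating-layer construction $\alpha'$ of Eq.\,(\ref{alpha'}), which ties the abstract classification back to the explicit geometric construction in Fig.\,\ref{fig:alternating_layer_construction}.
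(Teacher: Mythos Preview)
Your proof is correct and follows essentially the same route as the paper: the paper also derives the corollary directly from the exactness of sequence~(\ref{SES}), using $\kernel\beta=\image\alpha$ together with the injectivity of $\alpha$ to conclude that $\alpha$ is an isomorphism onto $\wSPT^d(\ZZZ\times G)$. Your explicit reconciliation of $\beta$ and $\beta'$ is a nice touch; the paper handles this in passing (noting $\kernel\beta=\kernel\beta'$ just after Proposition~\ref{prp:SES_classifications}).
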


\noindent This isomorphism was conjectured in Ref.\,\cite{Lu_sgSPT} for on-site $G$, based on studies of a number of fermionic and bosonic examples in the $d=3$ case. Unfortunately, the anomalous surface topological order argument used therein does not generalize to all dimensions. The minimalist framework allows us to confirm their conjecture in the general case -- in all dimensions and for all symmetries $G$, which do not even have to act in an on-site fashion.

We are concerned with all $\ZZZ\times G$-protected SPT phases, not just those that are weak with respect to glide. We know that by forgetting glide each $d$-dimensional $\ZZZ\times G$-protected SPT phase can be viewed as a $d$-dimensional $G$-protected SPT phase, but can all $d$-dimensional $G$-protected SPT phases be obtained this way? In other words, are all $d$-dimensional $G$-protected SPT phases compatible with glide? Our result indicates that the answer is in general no. This is because the exactness of sequence (\ref{SES}) implies $\beta$ is surjective, and inspecting the third term (from the left, excluding the initial 0) of sequence (\ref{SES}) one sees that
\begin{cor}[Compatibility with glide]
A necessary and sufficient condition for a $d$-dimensional $G$-protected SPT phase to be compatible with glide is that it squares to the trivial phase.
\end{cor}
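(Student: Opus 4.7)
The plan is to deduce the corollary directly from the short exact sequence of Proposition~\ref{prp:SES_classifications}, since essentially all the substantive work has already been absorbed into that sequence. First I would unpack the defined notion: a $d$-dimensional $G$-protected SPT phase $[c]$ is \emph{compatible with glide} precisely when it lies in the image of the unrestricted glide-forgetting map $\beta': \SPT^d(\ZZZ\times G) \fromto \SPT^d(G)$ from Eq.~\eqref{beta}. So the task reduces to identifying $\image \beta'$ as a subset of $\SPT^d(G)$.

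Second, I would observe that the map $\beta$ appearing in sequence~\eqref{SES} is simply $\beta'$ with its codomain restricted to the 2-torsion subgroup $\{[c]\in\SPT^d(G) : 2[c]=0\}$, so tautologically $\image \beta = \image \beta'$. By exactness of the short exact sequence at its third nontrivial term, $\beta$ is surjective onto $\{[c]\in\SPT^d(G) : 2[c]=0\}$, hence
$$\image \beta' \;=\; \{[c]\in\SPT^d(G) : 2[c]=0\}.$$
Combining the two steps yields: $[c]$ is compatible with glide iff $[c]\in\image\beta'$ iff $2[c]=0$, which is precisely the claimed necessary and sufficient condition.

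The only point that might feel nontrivial is why $\beta$ should take values in the 2-torsion subgroup at all, i.e.\ why a $G$-protected SPT phase obtained by forgetting glide must necessarily square to the trivial phase. That content, however, is already encoded in the statement of Proposition~\ref{prp:SES_classifications} (where the codomain of $\beta$ is written as the 2-torsion subgroup rather than all of $\SPT^d(G)$); intuitively it reflects the fact that twice a glide reflection is a pure translation, as alluded to in the discussion following Eq.~\eqref{alpha}. Given the proposition, the present corollary is a matter of reading off images and codomains, and no genuine obstacle arises beyond invoking exactness.
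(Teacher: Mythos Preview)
Your proposal is correct and follows essentially the same approach as the paper: the paper likewise argues that exactness of sequence~\eqref{SES} makes $\beta$ surjective onto the 2-torsion subgroup, so that the image of the glide-forgetting map is precisely $\{[c]\in\SPT^d(G):2[c]=0\}$. Your added remark that $\image\beta=\image\beta'$ and the unpacking of ``compatible with glide'' make the logic slightly more explicit than the paper's one-line justification, but the substance is identical.
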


In Corollary \ref{cor:classification_SPT_weak_wrt_glide} we saw that the classification of $d$-dimensional $\ZZZ \times G$-protected SPT phases \emph{that are weak with respect to glide} can be obtained from the classification of $(d-1)$-dimensional $G$-protected SPT phases. We now demonstrate that the classification of \emph{all} $d$-dimensional $\ZZZ \times G$-protected SPT phases is severely constrained, if not completely determined, by the classification of $(d-1)$- and $d$-dimensional $G$-protected SPT phases. Indeed, we recognize that the task of determining the second term of a short exact sequence of abelian groups from the first and third terms is nothing but an abelian group extension problem. It is well-known that abelian group extensions $0 \fromto A \fromto B \fromto C \fromto 0$ of $C$ by $A$ are classified, with respect to a suitable notion of equivalence, by the subgroup $H^2_{\rm sym}\paren{C;A}$ of $H^2\paren{C; A}$ of symmetric group cohomology classes.
To illustrate how $A$ and $C$ constrain $B$, let us take $A= C = \ZZZ_2$. In this case, $H^2_{\rm sym}\paren{\ZZZ_2;\ZZZ_2} \isomorphic \ZZZ_2$. The trivial and nontrivial elements of $H^2_{\rm sym}\paren{\ZZZ_2;\ZZZ_2}$ correspond to $B = \ZZZ_2 \oplus \ZZZ_2$ and $\ZZZ_4$, respectively, which are the only solutions to the abelian group extension problem.

Inspecting (\ref{SES}), we note that its first and third terms are such that their nontrivial elements all have order 2. Consequently, all nontrivial elements of the second term must have order 2 or 4. More precisely, we have
\begin{cor}[Quad-chotomy of phases]
\label{cor:quad-chotomy}
Each $\ZZZ\times G$-protected SPT phase is exactly one of the following:
\begin{enumerate}[(i)]
\item the unique trivial phase;

\item a nontrivial phase of order 2 that is weak with respect to glide;

\item a nontrivial phase of order 2 that is not weak with respect to glide;

\item a nontrivial phase of order 4 that is not weak with respect to glide per se but whose square is of type (ii).
\end{enumerate}
\end{cor}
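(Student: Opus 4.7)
The plan is to read off the quad-chotomy directly from the short exact sequence (\ref{SES}) of Proposition \ref{prp:SES_classifications}, using only the elementary observation that both its first and third terms have exponent dividing $2$. Indeed, every element of $\SPT^{d-1}(G)/2\SPT^{d-1}(G)$ is annihilated by $2$ by construction of the quotient, and every element of $\{[c]\in\SPT^d(G)\mid 2[c]=0\}$ is annihilated by $2$ by the very definition of that subgroup. I would record these two facts at the outset, after which the rest of the argument is pure bookkeeping on the exact sequence.

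Next, I would fix an arbitrary $[b]\in\SPT^d(\ZZZ\times G)$ and split into two cases according to whether $[b]$ lies in $\kernel\beta$. If $[b]\in\kernel\beta=\image\alpha$, I write $[b]=\alpha([a])$ for some $[a]$ in the first term; since $2[a]=0$ and $\alpha$ is a homomorphism, $2[b]=0$, so $[b]$ is either trivial (type (i)) or nontrivial of order $2$ and weak with respect to glide (type (ii)). If instead $[b]\notin\kernel\beta$, then $\beta([b])$ is a nontrivial element of the third term, so $2\beta([b])=\beta(2[b])=0$, which forces $2[b]\in\kernel\beta=\image\alpha$. Applying the first case to $2[b]$ then yields $4[b]=0$, so the order of $[b]$ is exactly $2$ or $4$. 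If the order is $2$, then $[b]$ is a nontrivial, non-weak phase of order $2$ (type (iii)); if the order is $4$, then $2[b]$ is a nonzero element of $\image\alpha$ of order $2$, i.e.\ of type (ii), placing $[b]$ itself in type (iv).

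Finally, I would verify mutual exclusivity of the four types: (i) vs.\ (ii)--(iv) is decided by triviality, (ii) vs.\ (iii)--(iv) by membership in $\image\alpha=\kernel\beta$, and (iii) vs.\ (iv) by order. The only subtlety I anticipate -- and the one step worth flagging -- is that I must invoke the injectivity of $\alpha$ (i.e.\ exactness at the first nontrivial term of (\ref{SES})) in two places: to ensure that nontrivial weak phases in (ii) have order exactly $2$ rather than $1$, and to ensure that in type (iv) the square $2[b]$ is genuinely nonzero in $\SPT^d(\ZZZ\times G)$ rather than accidentally trivial. Once this is done, the entire corollary is a direct structural consequence of the short exact sequence together with the exponent-$2$ property of its outer terms, with no further physical input required.
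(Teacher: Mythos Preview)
Your proposal is correct and follows essentially the same approach as the paper's proof: split on whether $\beta([b])=0$, use $\kernel\beta=\image\alpha$ together with the exponent-$2$ property of the outer terms to conclude $2[b]=0$ in the first case and $4[b]=0$ in the second, then sort by order. One minor remark: the injectivity of $\alpha$ that you flag is not actually needed anywhere---once $[b]$ is assumed nontrivial with $2[b]=0$ it has order exactly $2$ by definition, and once $[b]$ has order $4$ its square is nonzero by definition---so the argument rests solely on exactness at the middle term.
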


\begin{proof}
See App.\,\ref{app:proof_corollaries}.
\end{proof}

In particular, this means that a nontrivial $\ZZZ\times G$-protected SPT phase that is weak with respect to glide can sometimes have square roots, and that such square roots, if exist, are never weak with respect to glide. On the other hand, a nontrivial $\ZZZ\times G$-protected SPT phase that is \emph{not} weak with respect to glide can \emph{never} have square roots. Without Proposition \ref{prp:SES_classifications}, these results would not have been obvious.

From the perspective of classification, it would be nice to have a statement about the explicit form of $\SPT^d(\ZZZ \times G)$. In App.\,\ref{app:proof_corollaries}, we prove that Corollary \ref{cor:quad-chotomy}, together with the fact that SPT phases form an abelian group, implies that $\SPT^d(\ZZZ \times G)$ can be written as a direct sum of $\ZZZ_4$'s and $\ZZZ_2$'s:

\begin{cor}[Direct-sum decomposition]
\label{cor:direct_sum_decomposition}
There is a direct sum decomposition,
\begin{equation}
\SPT^d(\ZZZ \times G) \isomorphic \paren{\bigoplus_i \ZZZ_4} \oplus \paren{\bigoplus_j \ZZZ_2 }.\la{directsumdecomp}
\end{equation}
$\ZZZ \times G$-protected SPT phases that correspond to 1 or 3 of any $\ZZZ_4=\braces{0,1,2,3}$ summand are never weak with respect to glide, whereas those that correspond to $2\in \ZZZ_4$ are always weak with respect to glide. The nontrivial element of a $\ZZZ_2$ summand, on the other hand, may or may not be weak with respect to glide.
\end{cor}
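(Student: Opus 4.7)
The plan is to deduce the direct sum decomposition from Corollary \ref{cor:quad-chotomy} by invoking a classical structure theorem for abelian groups of bounded exponent, and then to read off the weakness properties summand by summand.

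\emph{Main structural step.} By Corollary \ref{cor:quad-chotomy}, every element of $A \coloneq \SPT^d(\ZZZ \times G)$ has order $1$, $2$, or $4$, so $A$ is an abelian group of bounded exponent dividing $4$. I would then invoke Pr\"ufer's first theorem (the structure theorem for abelian groups of bounded exponent), which guarantees that any such group decomposes as a direct sum of finite cyclic groups. Since the only cyclic groups of exponent dividing $4$ are the trivial group, $\ZZZ_2$, and $\ZZZ_4$, this immediately yields
\[
\SPT^d(\ZZZ \times G) \isomorphic \paren{\bigoplus_i \ZZZ_4} \oplus \paren{\bigoplus_j \ZZZ_2},
\]
with the index sets $i$ and $j$ \emph{a priori} allowed to be arbitrary.

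\emph{Weakness properties.} For any such decomposition, the elements $1$ and $3$ of any $\ZZZ_4$ summand have order $4$ in $A$, and by Corollary \ref{cor:quad-chotomy} every order-$4$ element falls into type (iv), hence is not weak with respect to glide. The element $2 \in \ZZZ_4$ is the square of the order-$4$ element $1$, and by the same corollary the square of any order-$4$ element is of type (ii), hence weak. Finally, the nontrivial element of any $\ZZZ_2$ summand is of order $2$, which by Corollary \ref{cor:quad-chotomy} is either of type (ii) or of type (iii), i.e., either weak or not weak; both possibilities genuinely occur, as the examples in Sections \ref{sec:applications} and \ref{sec:computations} will confirm.

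\emph{Main obstacle.} The only nontrivial input is Pr\"ufer's theorem. If $A$ were known to be finitely generated, this would reduce to the classical structure theorem and present no difficulty; since we cannot assume finite generation of $\SPT^d(\ZZZ \times G)$ in general, one must appeal to the full Pr\"ufer--Baer result allowing arbitrary index sets. Once this is granted, the remaining verification is an immediate consequence of Corollary \ref{cor:quad-chotomy}, so the only substantive step lies in applying the structural theorem.
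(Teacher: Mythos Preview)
Your proposal is correct and matches the paper's own proof essentially line for line: the paper likewise uses Corollary~\ref{cor:quad-chotomy} to bound the exponent by $4$, remarks that the finitely generated case is immediate from the structure theorem, and then invokes Pr\"ufer's first theorem for the general case. Your explicit verification of the weakness properties via the type classification in Corollary~\ref{cor:quad-chotomy} is a small addition beyond what the paper spells out, but it is straightforward and correct.
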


\section{Physical intuition behind Proposition \ref{prp:SES_classifications}\label{sec:physical_picture}}

Proposition \ref{prp:SES_classifications} was derived mathematically from the Twisted Generalized Cohomology Hypothesis. Here we offer a complementary physical perspective. We will break down Proposition \ref{prp:SES_classifications} into 6 individual statements and explain them \emph{physically} where possible:
\begin{enumerate}[(i)]

\item $\image \alpha' \subset \kernel \beta'$;

\item $\image \alpha' \supset \kernel \beta'$;

\item $\alpha$ is well-defined;

\item $\alpha$ is injective;

\item $\image \beta' \subset \{ [c] \in \SPT^d\paren{G} \big| 2[c] = 0 \}$;

\item $\image \beta' \supset \{ [c] \in \SPT^d\paren{G} \big| 2[c] = 0 \}$.
\end{enumerate}
We will discuss (i) and (ii) in Sec.\,\ref{subsec:alternating_layer_construction_triviality_glide_forgetting}, (iii) and (iv) in Sec.\,\ref{subsec:square_root_triviality}, and (v) and (vi) in Sec.\,\ref{subsec:compatibility_glide_2_torsion}. Of the six statements, (i)(ii)(iii)(v)(vi) admit obvious physical explanation, whereas (iv) can be justified by physical examples. Although the proof of Proposition \ref{prp:SES_classifications} was rigorous and the Twisted Generalized Cohomology Hypothesis can largely be justified on independent grounds, it is but reassuring that Proposition \ref{prp:SES_classifications} is consistent with one's physical intuition.

\subsection{Alternating-layer construction and triviality under glide forgetting\label{subsec:alternating_layer_construction_triviality_glide_forgetting}}

\begin{figure}[t]
\centering
\includegraphics[width=3.3in]{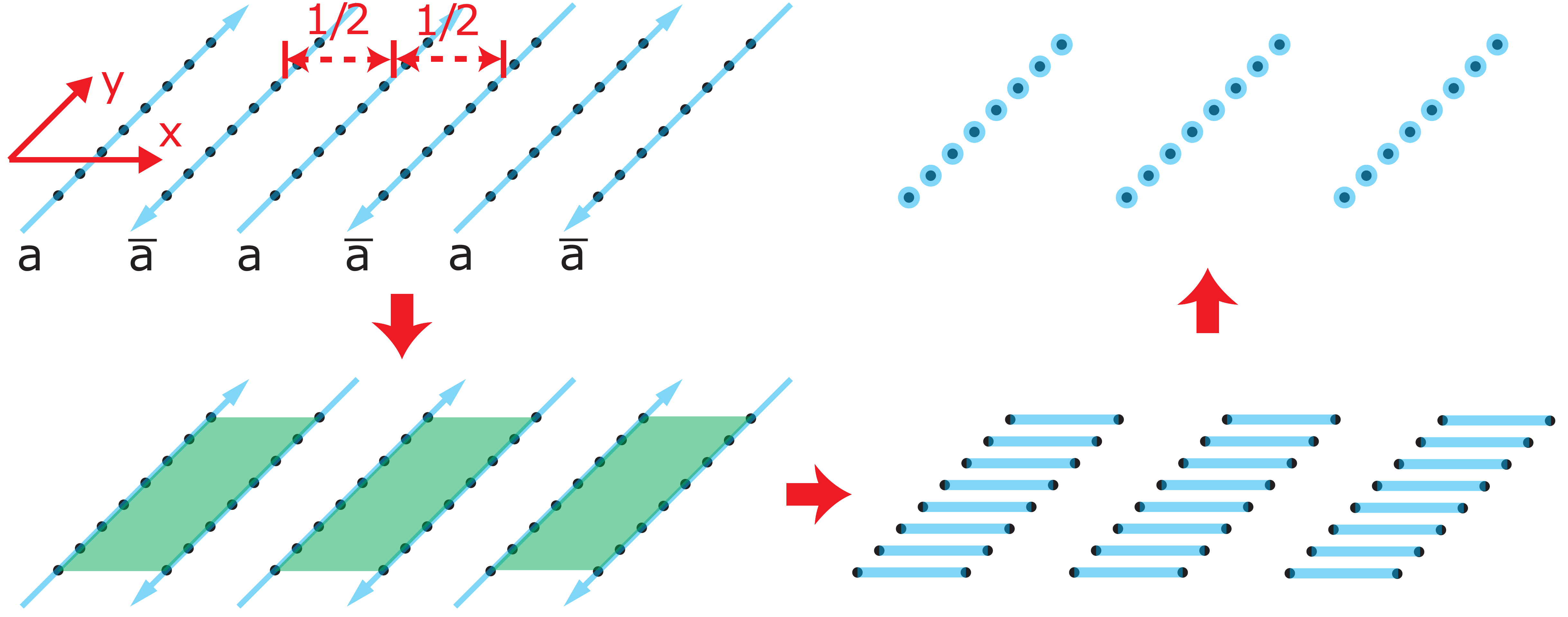}
\caption{Physical justification for the claim $\image \alpha' \subset \kernel \beta'$, depicted for $d=2$, $(x,y)\mapsto (x+1/2, -y)$. Applying the alternating-layer construction to $a$ gives a 2D system (upper-left panel). By coupling each $x\in \ZZZ$ layer to the $x+1/2$ layer (lower-left panel), one can deform the 2D system so as to have a ground state that is the tensor product of individual states supported on pairs of sites (lower-right panel) \cite{Note31}. A blocking procedure then turns the latter into a tensor product of individual states supported on single sites (upper-right panel). All deformations can be chosen to preserve $G$ and the gap.}
\label{fig:image_alpha_subset_kernel_beta}
\end{figure}

Given our interpretation of $\alpha'$ as the alternating-layer construction, $\image \alpha' {\subset} \kernel \beta'$ amounts to saying that \emph{the alternating-layer construction always produces $\ZZZ\times G$-protected SPT phases that are weak with respect to glide}. In other words, given a system $a$ representing a $(d-1)$-dimensional $G$-protected SPT phase, the $d$-dimensional system obtained from $a$ through the alternating-layer construction can always be trivialized when the glide symmetry constraint is relaxed. Indeed, given such a $d$-dimensional system, one can simply pair up neighboring layers and deform the pairs into trivial systems, as illustrated in Fig.\,\ref{fig:image_alpha_subset_kernel_beta}.

The converse, $\kernel \beta' \subset \image \alpha'$, says that \emph{all $\ZZZ\times G$-protected SPT phases that are weak with respect to glide can be obtained through the alternating-layer construction}. That is, if a system $b$ representing a $\paren{\ZZZ \times G}$-protected SPT phase can be trivialized when the glide symmetry constraint is relaxed, then it can be deformed to a system obtained from the alternating-layer construction while preserving the glide symmetry. Indeed, an argument involving applying a symmetric, finite-depth quantum circuit to subregions of a glide-symmetric system has been devised in Ref.\,\cite{Lu_sgSPT} to justify this claim, assuming the lattice period is large compared to the correlation length.

We can view the above as physically motivating our identification of $\alpha'$ as the alternating-layer-construction map in the first place. We will soon be delighted to find out that this interpretation is consistent with the other statements as well.

\subsection{Square root in $(d-1)$ dimensions and triviality in $d$ dimensions \label{subsec:square_root_triviality}}

The well-definedness of $\alpha$ says, \emph{given a $(d-1)$-dimensional $G$-protected SPT phase $[a]$, that the $d$-dimensional $\ZZZ\times G$-protected SPT phase obtained from it through the alternating-layer construction is trivial whenever $[a]$ has a square root}. As mentioned in Sec.\,\ref{subsec:SES_classifications}, we can justify this claim using the physical argument in Fig.\,\ref{fig:property_alpha'}. Ref.\,\cite{Lu_sgSPT} has also given an equivalent argument.

On the other hand, the injectivity of $\alpha$ says that \emph{the $d$-dimensional $\ZZZ\times G$-protected SPT phase obtained from $[a]$ through the alternating-layer construction is nontrivial whenever $[a]$ has no square root}.
Physically, this has been shown to be the case for a number of bosonic and fermionic systems for $d=3$ using the $K$-matrix construction \cite{Lu_sgSPT}. In general, one can attempt a construction of bulk invariants for the $d$-dimensional system in question, but a universal strategy that works for all $d$ seems lacking.

\subsection{Compatibility with glide and $\ZZZ_2$ torsion \label{subsec:compatibility_glide_2_torsion}}

Finally, let us argue that
\begin{equation}
\image \beta' \subset \{ [c] \in \SPT^d\paren{G} \big| 2[c] = 0 \},
\end{equation}
which amounts to saying that \emph{if a $G$-protected SPT phase has a glide-symmetric representative} (i.e.\,is compatible with glide), \emph{then it must square to the trivial phase} (i.e.\,belong to the $\ZZZ_2$ torsion subgroup). In other words, given a system $b$ representing a $d$-dimensional $\ZZZ \times G$-protected SPT phase, the stacked system $b+b$ can always be trivialized when the glide symmetry constraint is relaxed.

Let us begin by considering the stacked system $b + \bar b$, where $\bar b$ is the mirror image of $b$ under $y \mapsto -y$ [Fig.\,\ref{fig:image_beta}(a)]. When glide is relaxed,
$\bar b$ serves as the inverse of $b$. This means one can deform $b + \bar b$ to a trivial product state [Fig.\,\ref{fig:image_beta}(b)]. Now we translate $\bar b$ by $1/2$ in the $x$-direction. Then the stacked system $b + (\mbox{translated }\bar b)$ [Fig.\,\ref{fig:image_beta}(c)] can also be deformed to a tensor product state, or more precisely, a tensor product of individual states that are supported on diagonal pairs of sites [Fig.\,\ref{fig:image_beta}(d)] \cite{Note31}. A redefinition of sites then turns the latter into a tensor product of individual states supported on single sites, that is, into a trivial product state [Fig.\,\ref{fig:image_beta}(f)] \cite{Note41}. To see that $b+b$ can be deformed to a trivial product state, we simply need to note that, being glide-symmetric, $b$ is the same as the translated $\bar b$.

\begin{figure}[t]
\centering
\includegraphics[width=3.3in]{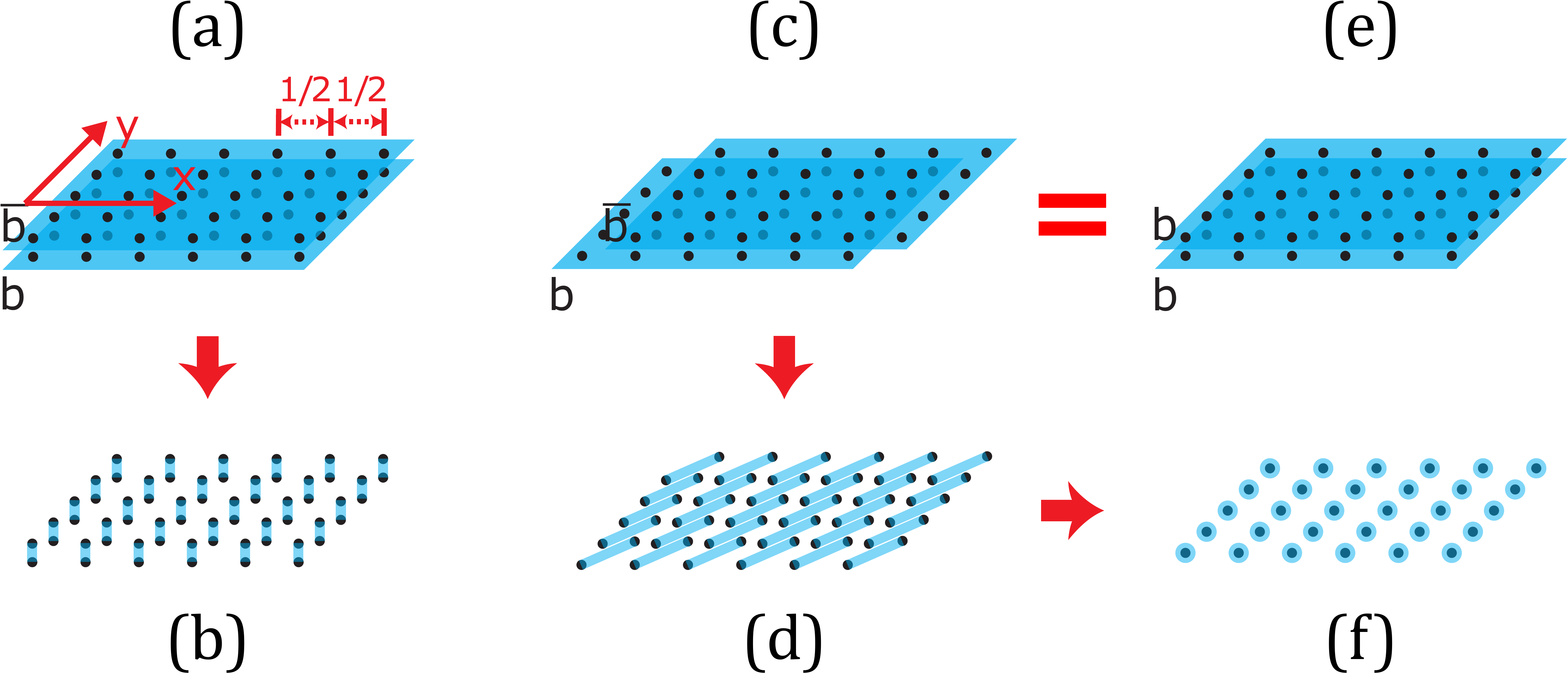}
\caption{Physical justification for the claim $\image \beta' \subset \{ [c] \in \SPT^d\paren{G} \big| 2[c] = 0 \}$, depicted for $d=2$, $(x,y)\mapsto (x+1/2, -y)$.}
\label{fig:image_beta}
\end{figure}

\begin{figure}[t]
\centering
\includegraphics[width=2.3in]{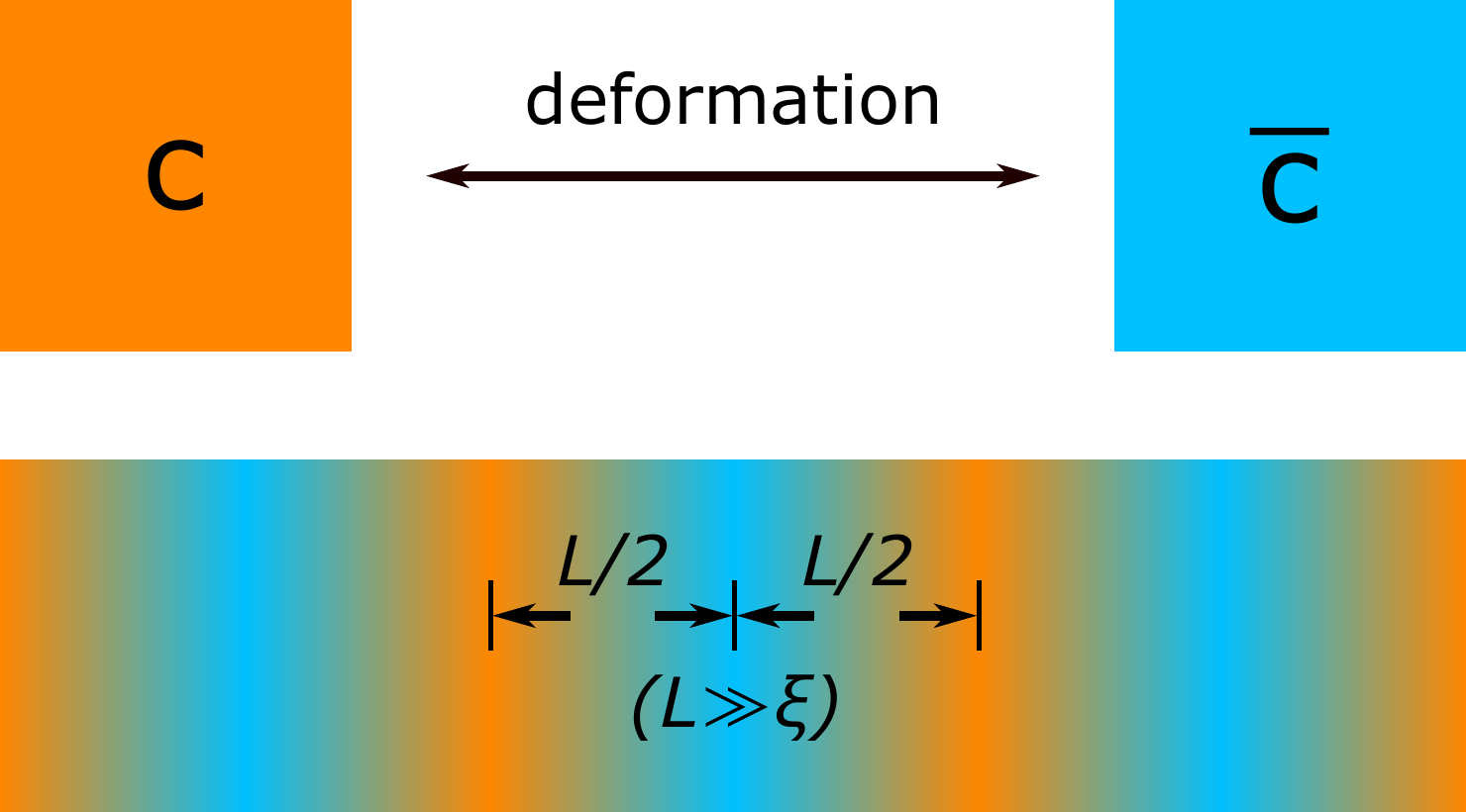}
\caption{Physical justification for the claim $\{ [c] \in \SPT^d\paren{G} \big| 2[c] = 0 \} \subset \image \beta'$, depicted for $d=2$.}
\label{fig:lastresult}
\end{figure}

The converse,
\begin{equation}
\{ [c] \in \SPT^d\paren{G} \big| 2[c] = 0 \} \subset \image \beta', \la{lastresult}
\end{equation}
says that \emph{if a $d$-dimensional $G$-protected SPT phase squares to the trivial phase, then it must have a glide-symmetric representative.} As a quick argument for this, we appeal to the empirical beliefs that (a) an SPT phase that squares to the trivial phase has a reflection-symmetric representative, and that (b) any SPT has a translation-invariant representative \cite{Xiong}, which are consistent with known examples. A case in point for (a) is the reflection-symmetric topological insulator Bi$_2$Se$_3$ ($\nu_0=1$ in class AII); (b) is exemplified by all experimentally realized band topological insulators. Now, suppose (a) and (b) can be compatibly realized in the same system, with the reflection axis ($x_2 \mapsto -x_2$) orthogonal to at least one translation direction ($x_1 \mapsto x_1 + 1/2$). Composing the two transformations, we see that the system is also invariant under the glide symmetry $(x_1,x_2,\ldots)\mapsto (x_1+1/2,-x_2,\ldots)$.

An alternative argument which does not depend on the belief (a) above is this. Suppose a system $c$ represents a $G$-protected SPT phase $[c]$ that squares to the trivial phase. The condition $2[c]=0$ is equivalent to the condition $[c] = -[c]$, or $[c] = \brackets{\bar c}$, where $\bar c$ is the orientation-reversed (say $x_2 \mapsto -x_2$) version of $c$. The last expression means that $c$ can be deformed to $\bar c$ without closing the gap or breaking the symmetry (see the upper panel of Fig.\,\ref{fig:lastresult}). Let $\hat H(\lambda)$ be a family of translation-invariant Hamiltonians parameterized by $\lambda\in [0,1]$ that represents this deformation. Since $\lambda$ is a compact parameter ($[0,1]$ being closed and bounded), we expect the correlation length of $\hat H(\lambda)$ to be uniformly bounded by some finite $\xi$ \cite{hastings2006spectral}. Being translation-invariant, each $\hat H(\lambda)$ is a sum of the form
\begin{equation}
\hat H(\lambda) = \sum_{\boldsymbol x} \sum_i g_i(\lambda) \hat O^i_{\boldsymbol x},
\end{equation}
where $\boldsymbol x = \paren{x_1, x_2, \ldots}$ runs over all lattice points, the operators $\hat O^i$ have compact supports (of radii $r_i$), $\hat O^i_{\boldsymbol x}$ denotes the operator $\hat O^i$ centered at $\boldsymbol x$, the coupling constants $g_i(\lambda)$ depend on $\lambda$, and $g_i(\lambda)$ decay exponentially with $r_i$. Now, we construct a new Hamiltonian $\hat H'$ that modulates spatially at a scale $L$ much larger than $\xi$. This can be achieved by letting $\lambda$ vary with one of the coordinates, say $x_1$; for instance, we can set
\begin{equation}
\lambda = \frac{x_1}{L/2}
\end{equation}
for $x_1 \in \brackets{0,L/2}$. In the neighborhood of $x_1 = 0$ and $L/2$, the Hamiltonian $\hat H'$ will coincide with $\hat H(0)$ and $\hat H(1)$, respectively. This defines $\hat H'$ only in the strip $x_1 \in \brackets{0,L/2}$, but since $\hat H(1)$ is related to $\hat H(0)$ by $x_2 \mapsto -x_2$, we can place the reversed strip on $x_1 \in \brackets{L/2,L}$ and glue the two strips together. Iterating this process {\it ad infinitum} to create a superlattice, we will arrive at a Hamiltonian $\hat H'$ that is explicitly invariant under the glide transformation $\paren{x_1, x_2, \ldots} \mapsto \paren{x_1+L/2, -x_2, \ldots}$; see the lower panel of Fig.\,\ref{fig:lastresult}. Due to the separation of scale $L \gg \xi$, we expect $\hat H'$ to be gapped. To see that $\hat H'$ represents the SPT phase $[c]$, we note that in the neighborhood of any $x_1 \in [0,L/2]$ (resp.\,$[L/2,L]$), there is some $\lambda$ for which $\hat H'$ is locally indistinguishable from $\hat H(\lambda)$ (resp.\,its orientation-reversed version), which represents $[c]$.

An explicit formula for $\hat H'$ can be given. Let $\hat M$ be the operator that implements the orientation-reversal $x_2 \mapsto -x_2$. Then we can write
\begin{equation}
\hat H' = \sum_{x_1} \hat H'_{x_1},
\end{equation}
where
\begin{equation}
\hat H'_{x_1} = \sum_{x_2, x_3, \ldots} \sum_i g_i\paren{\frac{x_1-nL}{L/2}} \hat O^i_{\boldsymbol x}
\end{equation}
for $x_1 \in \brackets{nL, (n+1/2)L}$, and
\begin{equation}
\hat H'_{x_1} = \sum_{x_2, x_3, \ldots} \sum_i g_i\paren{\frac{x_1-(n+1/2)L}{L/2}} \hat M \hat O^i_{\boldsymbol x} \hat M^{-1}
\end{equation}
for $x_1 \in \brackets{(n+1/2)L,(n+1)L}$. Here $n$ takes values in the integers.

\section{Applications to fermionic SPT phases in classes A and AII\label{sec:applications}}

In this section, we will demonstrate that the predictions of Proposition \ref{prp:SES_classifications} are consistent with existing literature on the classification of free-fermion phases and their robustness to interactions. More importantly, we will use Proposition \ref{prp:SES_classifications} to deduce the putative complete classifications of fermionic SPT phases with glide from proposed complete classifications of fermionic SPT phases without glide. The latter is an abelian group extension problem, where knowing the first and third terms $A$ and $C$ of a short exact sequence of abelian groups,
\begin{equation}
0 \fromto A \fromto \mbox{?} \fromto C \fromto 0,
\end{equation}
one has to determine the second. For definiteness, we will first focus on $d=3$ and $G=U(1)$ (charge conservation only, Wigner-Dyson class A). Then, we will re-examine the symmetry class of the hourglass-fermion phase, where $d=3$ and $G$ is generated by $U(1)$ and $\T$ where $\T$ squares to fermion parity (charge conservation and time reversal, Wigner-Dyson class AII).

\subsection{Wigner-Dyson class A\label{subsec:classA}}

Let us set $d=3$ and $G=U(1)$, which corresponds to Wigner-Dyson class A. 2D free-fermion phases in this symmetry class are classified by the first Chern number ($C_1{\in} \ZZZ$), which is defined over the Brillouin torus \cite{TKNN} but can be generalized to the interacting or disordered case by considering the torus of twisted boundary conditions instead \cite{Niu_twistedBC}. Being robust to interactions and disorder and admitting no square root, a phase with odd Chern number represents a nontrivial element in the first term of sequence (\ref{SES}). This phase may be layered in an alternating fashion to form a 3D phase respecting an additional glide symmetry $\ZZZ$.

The non-interacting, clean limit \footnote{By a clean limit, we mean a system that has discrete translational symmetry in three independent directions.} of the resultant 3D phase was independently studied in Refs.\ \cite{unpinned} and \cite{Shiozaki2015}. The surface states have a characteristic connectivity over the surface Brillouin torus illustrated in \fig{fig:z2surfacestates}; these surface states have been described as carrying a M\"obius twist \cite{Shiozaki2015}, so we shall refer to this phase as the M\"obius-twist phase. It was concluded in both references that the non-interacting classification (class A \emph{with glide}, 3D) is $\ZZZ_2$, and a topological invariant was proposed ($\kappa\in \{0,1\}$) to distinguish the two phases.

\begin{figure}[t]
\centering
\includegraphics[width=7 cm]{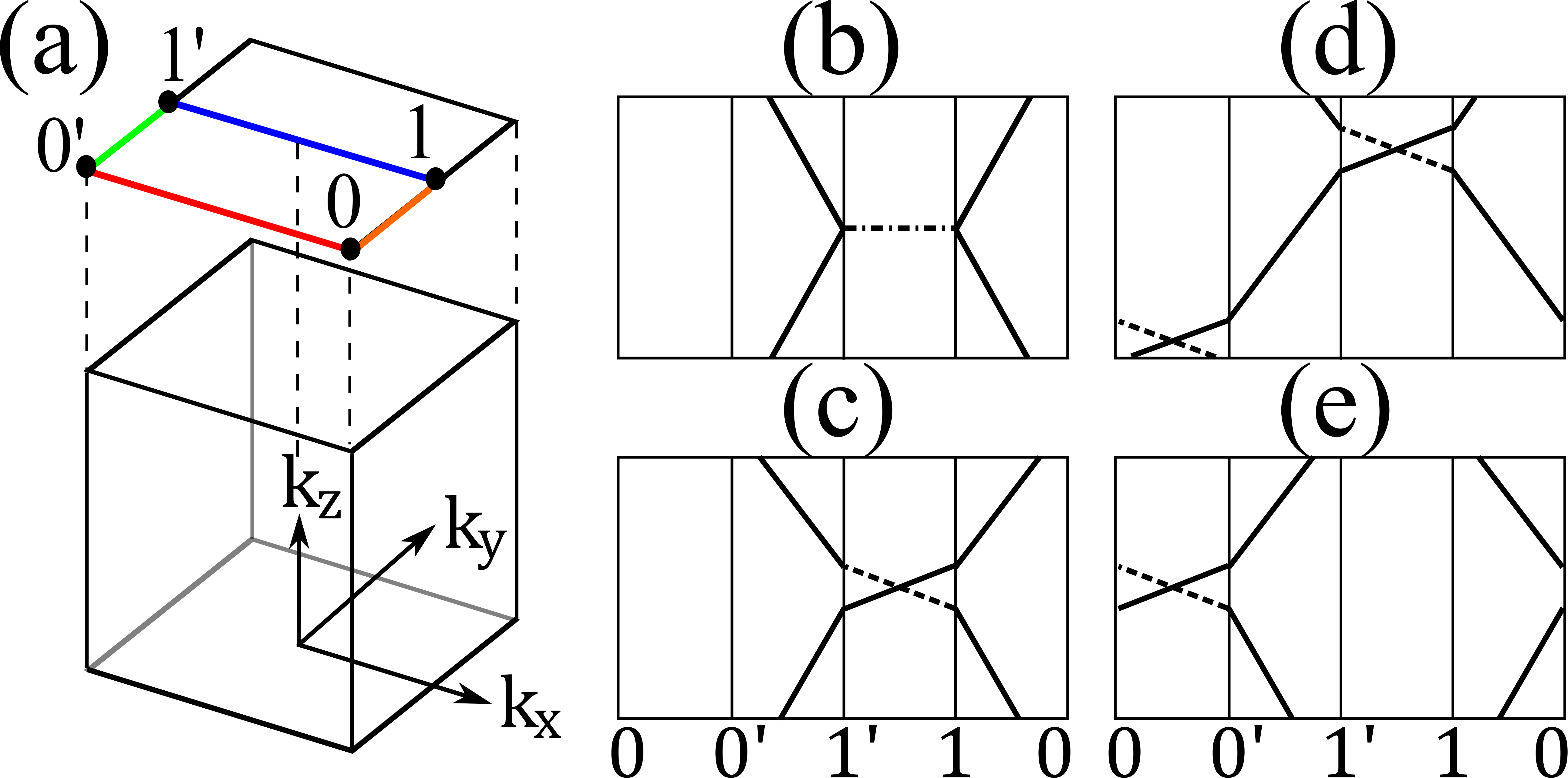}
\caption{ (a) Bottom: Brillouin 3-torus for a glide-symmetric crystal; top:  Brillouin 2-torus corresponding to the glide-symmetric surface. (b-e) Surface states with a Mobius twist; a surface band in the even (odd) representation of glide  is indicated by a solid (dashed) line. (b-e) are representatives of the same phase, i.e., they are connected by symmetric deformations of the Hamiltonian that preserve the bulk gap. In (b), the solid-dashed line indicates a doubly-degenerate band  originating from the alternating-layer construction; this degeneracy may be split by generic perturbations, as illustrated in (c).}\label{fig:z2surfacestates}
\end{figure}

That the M\"obius-twist phase ($\kappa=1$) can be obtained from the alternating-layer construction as above is especially evident in the non-interacting limit, where we can utilize the connectivity of surface states as an argument, in conjunction with the bulk-boundary correspondence \cite{Cohomological}. The following may be viewed as the class-A analog of the argument presented in  \s{subsec:robustness_hourglass_fermion_phase} for class AII.
Let us begin with the $C_1{=}1$ phase with a single edge chiral mode; the mirror image of this Chern phase has $C_1{=}{-}1$ and a single edge chiral mode with opposite velocity -- an ``anti-chiral" mode for short. When layered together in the $x$-direction with vanishing interlayer coupling, we obtain a superposition of chiral and anti-chiral modes (in the shape of an X) that do not disperse with $k_x$, as illustrated in \fig{fig:z2surfacestates}(b). Note in particular the two-fold energy degeneracy along the glide-invariant line 1'1, which originates from the intersection of chiral and anti-chiral modes. If we perturb the system with a glide-symmetric interlayer coupling, the degenerate two-band subspace is bound to split into a connected graph (in the shape of a M\"obius twist) over the glide-invariant line as in \fig{fig:z2surfacestates}(c), owing to the monodromy \cite{connectivityMichelZak} of the representation of glide. The topology of the graph over $1'100'$ then confirms that the system is characterized by $\kappa{=}1$.

One implication of our short exact sequence that goes beyond the aforementioned non-interacting works is that the 3D M\"obius-twist phase ($\kappa=1$) is robust to interactions. More precisely, it cannot be connected to the trivial phase ($\kappa=0$) by turning on interactions that preserve the many-body gap and $\ZZZ \times{ U(1)}$ symmetry. Indeed, we can utilize the same corollary as quoted in \s{subsec:robustness_hourglass_fermion_phase} and recognize that an insulator with odd Chern number admits no square root. Equivalently, we can view this as a direct consequence of the injectivity of the map $\alpha$ in Proposition \ref{prp:SES_classifications}.

An independent argument for the robustness of the M\"obius-twist phase under $\ZZZ \times U(1)$ may be obtained from the quantized magnetoelectric \cite{essin2009} bulk response. In this case, the quantization results from the glide symmetry, which maps the axion angle from $\theta \mapsto -\theta$ \footnote{One way to rationalize this is to apply the pseudo-scalar transformation behavior of $\bE\cdot \bB$.}. Since $\theta$ is defined modulo $2\pi$ \cite{Wilczek_axion}, it is fixed to $0$ or $\pi$, for $\kappa=0$ or $1$, respectively. Another independent argument for the robustness is that the glide-symmetric surface of the M\"obius-twist phase allows for an anomalous topological order of the T-Pfaffian type, which cannot exist in pure 2D glide-symmetric systems \cite{Lu_sgSPT}.

When glide is forgotten, the non-interacting 3D classification in class A is trivial (i.e. there is only one phase). Hence the  M\"obius-twist phase is weak with respect to glide. We may also argue for its weakness by noting that, without glide, there is no obstruction to coupling and trivializing adjacent layers with opposite $C_1$. As a $\ZZZ \times U(1)$-protected SPT phase, the M\"obius-twist phase has order 2 because two copies of itself have $\kappa = 1+1 \equiv 0$. Thus the M\"obius-twist phase falls into category (ii) of Corollary \ref{cor:quad-chotomy}.%
\phantom{\setcounter{footnote}{50}\footnote{By an interaction-enabled fermionic SPT phase, we mean a phase that does not have a free-fermion representative.}}

Let us now include interaction-enabled \cite{Note51} fermionic SPT phases and demonstrate how the relation between classifications, as encapsulated in Proposition \ref{prp:SES_classifications}, can help us pin down the complete classification of 3D $\ZZZ \times U(1)$-protected fermionic SPT phases. It is believed, without glide, that the complete classifications of 2 and 3D $U(1)$-protected fermionic SPT phases are 
\begin{eqnarray}
\SPT^2_f\paren{  U(1) } &\isomorphic& \ZZZ{\oplus}\ZZZ, \\
\SPT^3_f\paren{  U(1) } &\isomorphic& 0,
\end{eqnarray}
respectively, where the first $\ZZZ$ is generated by the $C_1{=}1$ phase and the second $\ZZZ$ by the neutral $E_8$ phase \cite{Kitaev_honeycomb, 2dChiralBosonicSPT, 2dChiralBosonicSPT_erratum, Kitaev_KITP}. Inserting these into the short exact sequence of Proposition \ref{prp:SES_classifications}, we obtain an abelian group extension problem:
\e{0 \rightarrow \ZZZ_2 \oplus \ZZZ_2 \rightarrow\; ? \rightarrow 0 \rightarrow 0.\label{groupextension_A}
}
It is an elementary property of group extension that the group extension of the trivial group by any other group is unique. More generally, if $A=0$ or $C=0$ in a short exact sequence $0 \fromto A \fromto B \fromto C \fromto 0$, then $B\isomorphic C$ or $B\isomorphic A$, respectively. Either way, we conclude that there is a unique solution to Eq.\,(\ref{groupextension_A}), and the complete classification of 3D $\ZZZ \times U(1)$-protected fermionic SPT phases is
\e{\SPT^3_f\paren{\ZZZ {\times}  U(1) }\isomorphic \ZZZ_2 \oplus \ZZZ_2,\la{sptclassA}}
which is consistent with Corollaries \ref{cor:classification_SPT_weak_wrt_glide} and \ref{cor:direct_sum_decomposition}. This result goes beyond the previous work Ref.\,\cite{Lu_sgSPT} in that Ref.\,\cite{Lu_sgSPT} only classified SPT phases that are weak with respect to glide. Our result indicates that, in this case, the ``weak classification" is complete. In the next subsection, we will investigate a case where the weak classification is not complete. We will see that the complete classification can still be determined through our short exact sequence.

\subsection{Wigner-Dyson class AII\label{subsec:sanity_check}}

Let us set $d=3$ and $G$ to be generated by $U(1)$ and $\T$ where $\T$ squares to fermion parity, which corresponds to Wigner-Dyson class AII. As a group, $G$ is the unique non-split $U(1)$-extension of $\ZZZ_2$ for the non-trivial action of $\ZZZ_2$ on $U(1)$.

As mentioned in Sec.\,\ref{sec:hourglass_fermions}, the free-fermion classification in this symmetry class is $\ZZZ_2$ without glide and $\ZZZ_4$ with glide. The hourglass-fermion phase has a $\ZZZ_4$ index $\chi = 2$ and 3D $\ZZZ_2$ index $\nu_0 = 0$, so it represents a $\ZZZ \times G$-protected SPT phase that is weak with respect to glide. It is still a nontrivial SPT phase, though, by the discussion in Sec.\,\ref{sec:hourglass_fermions}. We commented in Sec.\,\ref{subsec:SES_classifications} that all such SPT phases must have order two, which is indeed the case because two copies of the hourglass-fermion phase will have a $\ZZZ_4$ index $\chi = 2 + 2 \equiv 0 \mod 4$. On the other hand, both square roots of the hourglass-fermion phase have a 3D $\ZZZ_2$ index $\nu_0 = 1$, so while they represent nontrivial $\ZZZ \times G$-protected SPT phases, they are not weak with respect to glide. As $\ZZZ \times G$-protected SPT phases they do not have order 2 because the hourglass-fermion phase is nontrivial. They have order 4 because four copies of either square root has a $\ZZZ_4$ index $\chi = 4 \times 1$ or $4 \times 3 \equiv 0 \mod 4$. We see that the hourglass-fermion phase and its square roots fall into categories (ii) and (iv) of Corollary \ref{cor:quad-chotomy}, respectively.

Let us now include interaction-enabled \cite{Note51} fermionic SPT phases and demonstrate how, even though the classification without glide is nontrivial in both 2 and 3D, we can still deduce the complete 3D classification with glide using Proposition \ref{prp:SES_classifications}. It has been proposed that the complete classifications of 2D \cite{2dChiralBosonicSPT_erratum} and 3D \cite{WangChong_3DSPTAII} $G$-protected fermionic SPT phases, for the $G$ specified at the beginning of this subsection, are
\begin{eqnarray}
\SPT^2_f\paren{G} &\isomorphic& \ZZZ_2, \label{2DAII}\\
\SPT^3_f\paren{G} &\isomorphic& \ZZZ_2{\oplus}\ZZZ_2{\oplus}\ZZZ_2,
\end{eqnarray}
respectively, where the $\ZZZ_2$ in 2D is generated by the QSH phase, and the three $\ZZZ_2$'s in 3D are generated by a band insulator and two bosonic SPT phases, respectively. Inserting these into the short exact sequence in Proposition \ref{prp:SES_classifications}, we are led to the abelian group extension problem
\e{0 \rightarrow \ZZZ_2 \rightarrow\; ? \rightarrow \ZZZ_2\oplus\ZZZ_2\oplus\ZZZ_2 \rightarrow 0.\label{groupextension_AII}
}
The solution to this problem is not unique, as is evident from $H^2_{\rm sym}\paren{\ZZZ_2 \oplus \ZZZ_2 \oplus \ZZZ_2; \ZZZ_2} \isomorphic \ZZZ_2 \oplus \ZZZ_2 \oplus \ZZZ_2$. However, we know that the hourglass-fermion phase and its square roots are robust to interactions. We claim that, with this additional piece of information, a unique solution can be found.

Indeed, Corollary \ref{cor:direct_sum_decomposition} says that the unknown term must be a direct sum of $\ZZZ_4$'s and/or $\ZZZ_2$'s. We now show that there is exactly one $\ZZZ_4$ and two $\ZZZ_2$'s. By the remarks in Corollary \ref{cor:direct_sum_decomposition}, the only way for $\ZZZ\times G$-protected SPT phases of type (iv) of Corollary \ref{cor:quad-chotomy} to arise is for there to be a $\ZZZ_4$ summand. Since both square roots of the hourglass-fermion phase are of type (iv), there must be at least one $\ZZZ_4$. On the other hand, each $\ZZZ_4$ contains an SPT phase that is weak with respect to glide, which must arise from an independent non-trivial SPT phase in one lower dimensions through the alternating-layer construction.
Since the classification in one lower dimensions is given by a single $\ZZZ_2$, there can be at most one $\ZZZ_4$ in the second term of Eq.\,(\ref{groupextension_AII}).
As a result, there is exactly one $\ZZZ_4$. This $\ZZZ_4$ maps onto one of the three $\ZZZ_2$'s in the third term of Eq.\,(\ref{groupextension_AII}). To make the map surjective as required by exactness, we need two additional $\ZZZ_2$'s in the second term of Eq.\,(\ref{groupextension_AII}), whose nontrivial elements are of type (iii) of Corollary \ref{cor:quad-chotomy}.

In conclusion, the \emph{complete} classification of 3D $\ZZZ \times G$-protected fermionic SPT phases, for the $G$ specified at the beginning of this subsection, is
\e{\SPT^3_f\paren{\ZZZ {\times} G} \isomorphic \ZZZ_4\oplus \ZZZ_2 \oplus \ZZZ_2,\label{solution_AII}}
where without loss of generality we can identify the nontrivial elements of $\ZZZ_4$ with the hourglass-fermion phase and its square roots. This represents one key result of this work, which goes beyond the known classification of the \emph{subset} of SPT phases that are weak with respect to glide \cite{Lu_sgSPT}:
\begin{equation}
\wSPT^3_f\paren{\ZZZ {\times} G} \isomorphic \ZZZ_2.
\end{equation}
We may anyway verify that this weak classification, together with Eq.\,(\ref{2DAII}), is consistent with Corollary \ref{cor:classification_SPT_weak_wrt_glide}. We remark that while we used such physical terms as ``weak with respect to glide" in our argument above, we could have derived Eq.\,(\ref{solution_AII}) purely mathematically, by combining an explicit classification of abelian group extensions of $\ZZZ_2 \oplus \ZZZ_2 \oplus \ZZZ_2$ by $\ZZZ_2$ with the requirement that the extension contain an element of order 4.

\section{Applications to bosonic SPT phases\label{sec:computations}}

\begin{table*}
\caption{Classification of bosonic SPT phases with glide reflection or translational symmetry. $\SPT^d_b\paren{G,\phi}$ is computed from the proposal (\ref{bSPT_0})-(\ref{bSPT_3}), whence the next three rows are deduced using Eqs.\,(\ref{GSPT}), (\ref{SES}), and (\ref{strong_weak}), respectively. Abelian group extensions are in general not unique, accounting for the non-uniqueness of some entries, which we parenthesize. ``Glide" and ``transl." indicate whether $\ZZZ$ is generated by a glide or a translation. $\SPT^4_b\paren{\ZZZ \times G, \phi}$ is left blank for both glide and translation because it requires $\SPT^4_b\paren{G, \phi}$ as an input, which we did not provide. The superscript $T$ in $\ZZZ_2^T$ indicates time reversal.
In the last column, we give physical models corresponding to the generators of underlined summands, where ``$E_8$" stands for the $E_8$ model \cite{Kitaev_honeycomb, 2dChiralBosonicSPT, 2dChiralBosonicSPT_erratum, Kitaev_KITP}, ``BIQH" for bosonic integer quantum Hall \cite{3dBTScVishwanathSenthil, BIQH}, ``3D $E_8$" for the 3D $E_8$ model \cite{3dBTScVishwanathSenthil, 3dBTScWangSenthil, 3dBTScBurnell}, and ``Haldane" for the Haldane chain \cite{AKLT, PhysRevLett.50.1153, Haldane_NLSM, Affleck_Haldane, Haldane_gap}.%
}
\label{table:bSPT}
\begin{tabular}{|c|ccccc|c|}
%%%%%%
%\hline
\hline
\multirow{2}{*}{Bosonic, $(G,\phi)=0$} & \multicolumn{5}{c|}{Spatial dimension $d$} & \multirow{2}{*}{Comments}\\
\cline{2-6}
 & 0 & 1 & 2 & 3 & 4 & \\
\hline
$\SPT^d_b(G,\phi)$ & 0 & 0 & $\underline{\ZZZ}$ & 0 & & $E_8$  \\
%&&& ($E_8$-state)  \\
$\wSPT^d_b(\ZZZ\times G,\phi)$ & 0 & 0 & 0 & $\ZZZ_2$ & 0 & \\
$\SPT^d_b(\ZZZ\times G, \phi)$, glide & 0 & 0 & 0 & $\ZZZ_2$ & & \\
$\SPT^d_b(\ZZZ\times G, \phi)$, transl. & 0 & 0 & $\ZZZ$ & $\ZZZ$ & & \\
\hline
\hline
%%%%%%
\multirow{2}{*}{Bosonic, $(G,\phi)=U(1)$} & \multicolumn{5}{c|}{Spatial dimension $d$} & \multirow{2}{*}{Comments} \\
\cline{2-6}
 & 0 & 1 & 2 & 3 & 4 & \\
\hline
$\SPT^d_b(G,\phi)$ & $\ZZZ$ & 0 & $\underline{\ZZZ} \oplus \underline\ZZZ$ & 0 & & BIQH, $E_8$ \\
%& ($U(1)$-charge) & & (BIQH, neutral $E_8$-state) \\
$\wSPT^d_b(\ZZZ\times G,\phi)$ & 0 & $\ZZZ_2$ & 0 & $\ZZZ_2\oplus \ZZZ_2$ & 0 & \\
$\SPT^d_b(\ZZZ\times G,\phi)$, glide & 0 & $\ZZZ_2$ & 0 & $\ZZZ_2\oplus \ZZZ_2$ & & \\
$\SPT^d_b(\ZZZ\times G,\phi)$, transl. & $\ZZZ$ & $\ZZZ$ & $\ZZZ\oplus\ZZZ$ & $\ZZZ\oplus \ZZZ$ & & \\
\hline
\hline
%%%%%%
\multirow{2}{*}{Bosonic, $(G,\phi)=\ZZZ_2^T$} & \multicolumn{5}{c|}{Spatial dimension $d$} & \multirow{2}{*}{Comments} \\
\cline{2-6}
 & 0 & 1 & 2 & 3 & 4 & \\
\hline
$\SPT^d_b(G,\phi)$ & 0 & $\ZZZ_2$ & 0 & $\ZZZ_2 \oplus \underline{\ZZZ_2}$ & & 3D $E_8$ \\
%&&&& (3D $E_8$-state) \\
$\wSPT^d_b(\ZZZ\times G,\phi)$ & 0 & 0 & $\ZZZ_2$ & 0 & $\ZZZ_2 \oplus \ZZZ_2$ & \\
$\SPT^d_b(\ZZZ\times G,\phi)$, glide & 0 & $\ZZZ_2$ & $\ZZZ_2$ & $\ZZZ_2\oplus \ZZZ_2$ & & \\
$\SPT^d_b(\ZZZ\times G,\phi)$, transl. & 0 & $\ZZZ_2$ & $\ZZZ_2$ & $\ZZZ_2\oplus \ZZZ_2$ & & \\
\hline
\hline
%%%%%%
\multirow{2}{*}{Bosonic, $(G,\phi)=\ZZZ_{N<\infty}$} & \multicolumn{5}{c|}{Spatial dimension $d$} & \multirow{2}{*}{Comments} \\
\cline{2-6}
 & 0 & 1 & 2 & 3 & 4 & \\
\hline
$\SPT^d_b(G,\phi)$ & $\ZZZ_N$ & 0 & $\ZZZ_N\oplus \underline{\ZZZ}$ & 0 & & $E_8$\\
%&&& ($E_8$-state) \\
$\wSPT^d_b(\ZZZ\times G,\phi)$ & 0 & $\ZZZ_{\gcd(N,2)}$ & 0 & $\ZZZ_{\gcd(N,2)}\oplus \ZZZ_2$ & 0 & \\
$\SPT^d_b(\ZZZ\times G,\phi)$, glide & $\ZZZ_{\gcd(N,2)}$ & $\ZZZ_{\gcd(N,2)}$ & $\ZZZ_{\gcd(N,2)}\oplus \ZZZ_2$ & $\ZZZ_{\gcd(N,2)} \oplus \ZZZ_2$ & & \\
$\SPT^d_b(\ZZZ\times G,\phi)$, transl. & $\ZZZ_N$ & $\ZZZ_N$ & $\ZZZ_N\oplus \ZZZ$ & $\ZZZ_N \oplus \ZZZ$ & & \\
\hline
\hline
%%%%%%
\multirow{2}{*}{Bosonic, $(G,\phi)=\ZZZ_2 \times \ZZZ_2$} & \multicolumn{5}{c|}{Spatial dimension $d$} & \multirow{2}{*}{Comments} \\
\cline{2-6}
 & 0 & 1 & 2 & 3 & 4 & \\
\hline
$\SPT^d_b(G,\phi)$ & $\ZZZ_2^2$ & $\underline{\ZZZ_2}$ & $\ZZZ_2^3\oplus \underline\ZZZ$ & $\ZZZ_2^2$ & & Haldane, $E_8$\\
%&& (Haldane chain) & ($E_8$-state) \\
$\wSPT^d_b(\ZZZ\times G,\phi)$ & 0 & $\ZZZ_2^2$ & $\ZZZ_2$ & $\ZZZ_2^3 \oplus \ZZZ_2$ & $\ZZZ_2^2$ & \\
$\SPT^d_b(\ZZZ\times G,\phi)$, glide & $\ZZZ_2^2$ & ($\ZZZ_2 \oplus \ZZZ_4$ or $\ZZZ_2^3$) & ~($\ZZZ_2^2 \oplus \ZZZ_4$ or $\ZZZ_2^4$)~ & ~($\ZZZ_2^2 \oplus \ZZZ_4^2$ or $\ZZZ_2^4 \oplus \ZZZ_4$ or $\ZZZ_2^6$)~ & & \\
$\SPT^d_b(\ZZZ\times G,\phi)$, transl. & $\ZZZ_2^2$ & $\ZZZ_2^2\oplus\ZZZ_2$ & $\ZZZ_2\oplus\ZZZ_2^3\oplus \ZZZ$ & $\ZZZ_2^3\oplus\ZZZ\oplus \ZZZ_2^2$ & & \\
\hline
\hline
%%%%%%
\multirow{2}{*}{Bosonic, $(G,\phi)=SO(3)$} & \multicolumn{5}{c|}{Spatial dimension $d$} & \multirow{2}{*}{Comments} \\
\cline{2-6}
 & 0 & 1 & 2 & 3 & 4 & \\
\hline
$\SPT^d_b(G,\phi)$ & 0 & $\underline{\ZZZ_2}$ & $\ZZZ\oplus \underline\ZZZ$ & 0 & & Haldane, $E_8$\\
%&& (Haldane chain) & ($E_8$-state) \\
$\wSPT^d_b(\ZZZ\times G,\phi)$ & 0 & 0 & $\ZZZ_2$ & $\ZZZ_2\oplus \ZZZ_2$ & 0 & \\
$\SPT^d_b(\ZZZ\times G,\phi)$, glide & 0 & $\ZZZ_2$ & $\ZZZ_2$ & $\ZZZ_2\oplus \ZZZ_2$ & & \\
$\SPT^d_b(\ZZZ\times G,\phi)$, transl. & 0 & $\ZZZ_2$ & $\ZZZ_2\oplus \ZZZ \oplus \ZZZ$ & $\ZZZ \oplus \ZZZ$ & & \\
\hline
%\hline
\end{tabular}
\end{table*}

In \s{sec:applications}, we exemplified how one can utilize Proposition \ref{prp:SES_classifications} to deduce the classification of $\paren{\ZZZ\times G}$-protected fermionic SPT phases (with $\ZZZ$ generated by glide) from proposed classifications of $G$-protected fermionic SPT phases in the literature. The problem of identifying the correct classification was reduced an abelian group extension problem, which required very little technical work in comparison to deriving the classification from scratch. In this section, we apply the same principle to bosonic SPT phases for a variety of symmetries.

The input, $\SPT^d_b(G,\phi)$, of our computations will be given by a generalized cohomology theory $h_b$ that, in low dimensions, reads
\begin{eqnarray}
h^0_b\paren{BG,\phi} &=& H^2\paren{BG; \ZZZ,\phi} \label{bSPT_0}, \\
h^1_b\paren{BG,\phi} &=& H^3\paren{BG; \ZZZ,\phi} \label{bSPT_1}, \\
h^2_b\paren{BG,\phi} &=& H^4\paren{BG; \ZZZ,\phi} \oplus H^0\paren{BG; \ZZZ,\phi}, \label{bSPT_2} \\
h^3_b\paren{BG,\phi} &=& H^5\paren{BG; \ZZZ,\phi} \oplus H^1\paren{BG; \ZZZ,\phi}. \label{bSPT_3}
\end{eqnarray}
These expressions can be derived using a scheme due to Kitaev from a presumed classification of SPT phases \emph{without} symmetry. More specifically, we assume that bosonic $G$-protected SPT phases for trivial $G$ are classified by
\begin{eqnarray}
\SPT^{0,1,2,3}_b\paren{0} \isomorphic 0, ~0, ~\ZZZ, ~0,\label{bSPT_input}
\end{eqnarray}
in 0, 1, 2, and 3 dimensions, respectively, where the $\ZZZ$ in 2 dimensions is generated by the $E_8$ phase \cite{Kitaev_honeycomb, 2dChiralBosonicSPT, 2dChiralBosonicSPT_erratum, Kitaev_KITP}; this is consistent with the proposal reviewed in Ref.\,\cite{Wen_review_2016}, which goes up to 6 dimensions. As pointed out by Kitaev \cite{Kitaev_Stony_Brook_2011_SRE_1, Kitaev_Stony_Brook_2013_SRE, Kitaev_IPAM}, from the classification without symmetry one can reconstruct a not necessarily unique generalized cohomology theory $h$ which in turn will give one the classification for \emph{arbitrary} symmetries. In the case of Eq.\,(\ref{bSPT_input}), the reconstruction turns out to be unique in low dimensions, giving Eqs.\,(\ref{bSPT_0})-(\ref{bSPT_3}) \cite{Xiong}.

The output of our computations will be $\wSPT^d_b(\ZZZ \times G, \phi)$ for $d\leq 4$ and $\SPT^d_b(\ZZZ \times G, \phi)$ for $d\leq 3$, where $\ZZZ$ is generated by glide. These will be computed from $\SPT^d_b(G,\phi)$ using the correspondence (\ref{GSPT}) and the short exact sequence (\ref{SES}), respectively. We have summarized the results in Table \ref{table:bSPT}. As we can see, in most cases the short exact sequence (\ref{SES}) determines the classification of $d$-dimensional $\paren{\ZZZ\times G}$-protected bosonic SPT phases completely. The results for $\wSPT^3_b(\ZZZ \times G,\phi)$ are in agreement with Ref.\,\cite{Lu_sgSPT}.

\section{Discussions\label{sec:discussions}}

\subsection{Spatiotemporal glide symmetry}\la{sec:temporalglide}

In this paper we have focused on spatial glide symmetry, but with the right definitions we expect the Twisted Generalized Cohomology Hypothesis (hence also Proposition \ref{prp:SES_classifications}) to also work for generalized, spatiotemporal glide symmetries, as long as they commute with the symmetry $G$.
An example of spatiotemporal symmetries would be a translation followed by a time reversal, which has been considered by the authors of Ref.\,\cite{Teo_AF_TRS} under the name ``antiferromagnetic time-reversal symmetry" (AFTRS). 2D and 3D topological superconductors in Atland-Zirnbauer class D are classified by $\ZZZ$ and $0$, respectively, where the $\ZZZ$ in 2D is generated by spinless $p+ip$ superconductors. By putting a spinless $p+ip$ superconductor on all planes of constant $x\in \ZZZ$ and its \emph{time-reversed} version (time reversal squares to the identity in this case due to spinlessness) on all $x\in \ZZZ + 1/2$ planes without coupling, one creates a 3D system that respects the fermionc-parity $\ZZZ_2^f$ and AFTRS. Since spinless $p+ip$ superconductors are robust to interactions and admit no square root, our Proposition \ref{prp:SES_classifications} implies that this 3D system must represent a nontrivial $\text{AFTRS}\times\ZZZ_2^f$-protected fermionic SPT phase. Indeed, this was argued in Ref.\,\cite{Teo_AF_TRS} to be the case through the construction of surface topological orders and subsequent confinement of the classical extrinsic defects among the anyons. Since it is believed that {spinless} $p+ip$ superconductors generate the \emph{complete} classification of 2D fermionic SPT phases {with only fermion-parity symmetry}, this gives a putative $\ZZZ_2$ classification of 3D fermionic {$\text{AFTRS}\times\ZZZ_2^f$-protected} SPT phases that are weak with respect to glide. The proposal that 3D fermionic SPT phases {with only fermion-parity symmetry} have a trivial classification \cite{Kapustin_Fermion} would further imply that this $\ZZZ_2$ actually classifies \emph{all} 3D {$\text{AFTRS}\times\ZZZ_2^f$-protected} fermionic SPT phases.

The case of a temporal translation followed by a spatial reflection has been studied in the context of non-interacting Floquet topological phases \cite{time_glide}, but we shall leave this to future works in view of the subtleties in the definition of a Floquet phase.

\subsection{Pure translation versus glide reflection}\la{sec:puretranslation}

A glide reflection is a translation followed by a reflection. In this section, we set out to answer two questions: (a) how has this additional reflection complicated the classification of SPT phases? The symmetry group generated by glide reflection contains a subgroup of pure translations. (b) What would happen if we relaxed glide symmetry to its translational subgroup? 

The first question, (a), can be answered by contrasting Proposition \ref{prp:SES_classifications} with the analogous result for pure translations \cite{Xiong}:
\begin{prp}
\label{prp:strong_weak}
Assume the Twisted Generalized Cohomology Hypothesis. Let $\ZZZ$ be generated by a \emph{translation} and $G$ be arbitrary 
\footnote{In Ref.\,\cite{Xiong}, $G$ was assumed to preserve spacetime orientation, but the proof of Proposition \ref{prp:SES_classifications} in App.\,\ref{app:proof} can be easily adapted to show that this restriction was unnecessary.}.
There is a \emph{split} short exact sequence,
\begin{equation}
0 \fromto \SPT^{d-1}(G) \fromto \SPT^d\paren{\ZZZ\times G} \fromto  \SPT^d\paren{G} \fromto 0.\label{SES_translation}
\end{equation}
In particular, there is an isomorphism,
\begin{equation}
 \SPT^d\paren{\ZZZ\times G} \isomorphic \SPT^{d-1}(G)  \oplus \SPT^d\paren{G}.\label{strong_weak}
\end{equation}
\end{prp}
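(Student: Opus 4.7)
The plan is to mirror the proof of Proposition \ref{prp:SES_classifications} (sketched in App.~\ref{app:proof}), but to exploit the key simplification that a pure translation, unlike a glide, preserves spacetime orientation. Concretely, the character $\phi$ on $\ZZZ \times G$ is the pullback of $\phi$ on $G$ along the group-theoretic projection $\ZZZ \times G \onto G$, so the twisted coefficient system is constant along the $\ZZZ$ factor. Invoking the Twisted Generalized Cohomology Hypothesis converts $\SPT^d(\ZZZ \times G,\phi)$ into $h^d(B(\ZZZ\times G),\phi) = h^d(S^1\times BG,\phi)$, and the task reduces to computing this group with coefficients untwisted along the $S^1$.

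For this, I would use the standard decomposition of pointed spaces $S^1_+ \simeq S^0 \vee S^1$, which yields
\begin{equation}
(S^1 \times BG)_+ \simeq BG_+ \vee (S^1 \wedge BG_+).
\end{equation}
Because the local coefficients are trivial along $S^1$, the suspension axiom for generalized cohomology gives $\tilde h^d(S^1 \wedge BG_+,\phi) \isomorphic h^{d-1}(BG,\phi)$, so
\begin{equation}
h^d(S^1 \times BG,\phi) \isomorphic h^d(BG,\phi) \oplus h^{d-1}(BG,\phi).
\end{equation}
Translating back via the Hypothesis, this reproduces the claimed isomorphism (\ref{strong_weak}), and the short exact sequence (\ref{SES_translation}) splits by construction: the inclusion $h^{d-1}(BG,\phi) \oneone h^d(S^1\times BG,\phi)$ realizes the layering map $\SPT^{d-1}(G) \fromto \SPT^d(\ZZZ \times G)$ (stacking identical copies of a $(d-1)$-dimensional representative on every $x_1\in\ZZZ$ layer, \emph{without} alternation), the projection onto $h^d(BG,\phi)$ is the translation-forgetting map, and the section $\SPT^d(G) \fromto \SPT^d(\ZZZ\times G)$ is induced by the group section $g \mapsto (0,g)$. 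Physically, this splitting encodes the empirical belief~(b) of Sec.~\ref{subsec:compatibility_glide_2_torsion} that any $d$-dimensional $G$-protected SPT phase admits a translation-invariant representative.

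The main difference from Proposition \ref{prp:SES_classifications} -- and the source of all the simplification -- is that no $\Ext^1(\ZZZ_2,-)$ or $\Tor_1(\ZZZ_2,-)$ corrections arise. The factors of 2 in the glide case reflected that twice a glide is a nontrivial translation combined with a nontrivial action on orientation, so that the generator of the relevant $\ZZZ$ acted by $-1$ on $\phi$; for pure translation it acts by $+1$, and the twisting simply factors through the projection to $G$. The main obstacle, as in the glide case, is therefore not the computation itself but the justification of the Twisted Generalized Cohomology Hypothesis; once the Hypothesis is granted, Proposition \ref{prp:strong_weak} follows as essentially a direct application of the suspension axiom together with the existence of a basepoint-preserving retraction $S^1 \to \mathrm{pt}$.
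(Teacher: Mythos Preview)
Your argument is correct in substance and is essentially the standard algebraic-topology route: once the $\ZZZ$ factor acts trivially on the spectrum, $B(\ZZZ\times G)\simeq S^1\times BG$ with untwisted coefficients along $S^1$, and the splitting $S^1_+\simeq S^0\vee S^1$ together with the suspension axiom gives $h^d(S^1\times BG,\phi)\isomorphic h^d(BG,\phi)\oplus h^{d-1}(BG,\phi)$. This is precisely the content of Proposition~\ref{prp:strong_weak}.

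The paper does not spell out a proof here; it cites Ref.~\cite{Xiong} and remarks that the explicit construction of App.~\ref{app:proof} can be adapted. That adaptation would replace the glide action $(s,r)\mapsto(1-s,r+1)$ on $(I\times\RRR)/(\partial I\times\RRR)$ by the pure-translation action $(s,r)\mapsto(s,r+1)$, and then verify directly that $\beta'$ is surjective with an explicit section $c(s)\mapsto b(s,r)\coloneq c(s)$, and that $\alpha'$ is injective with image $\kernel\beta'$. Your approach packages the same information more abstractly via the wedge decomposition; it is shorter and makes the splitting manifest, while the App.~\ref{app:proof} style has the advantage of exhibiting $\alpha$ and $\beta$ as concrete maps on representatives, which is what the paper uses when interpreting them physically as layering and symmetry-forgetting.

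One small correction: the section $\SPT^d(G)\fromto\SPT^d(\ZZZ\times G)$ is induced, by contravariance, from the \emph{projection} $\pi:\ZZZ\times G\fromto G$, not from the group section $\iota:g\mapsto(0,g)$. The map $\iota$ induces $\beta$ itself; it is the identity $\pi\circ\iota=\identity_G$ that yields $\beta\circ\pi^*=\identity$ and hence the splitting. Physically, $\pi^*$ is exactly the operation you describe---endowing a $G$-symmetric system with a trivial translation action---but its categorical origin is the retraction $S^1\fromto\pt$ (equivalently $\pi$), not the inclusion.
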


\noindent In sequence (\ref{SES_translation}), the first map is given by a layer construction, which is the same as the alternating-layer construction but without the orientation-reversal that occurs every other layer. The second map is given by forgetting translational symmetry. Unlike sequence (\ref{SES}), which has factors of 2 in the first and third terms, the sequence for pure translation does not contain any factors of 2. Starting from $\ZZZ\times G$-protected SPT phases, forgetting the translational symmetry gives us all $G$-protected SPT phases, so all $G$-protected SPT phases are compatible with translational symmetry, even if they do not square to the trivial phase.
Starting from a nontrivial $G$-protected SPT phase in one lower dimensions, applying the layer construction will always give us a nontrivial $\ZZZ\times G$-protected SPT phase, even if the lower-dimensional phase admits a square root. Furthermore, sequence (\ref{SES_translation}) is split. This means its second term is completely determined by the first and third terms, according to Eq.\,(\ref{strong_weak}), and there is no abelian group extension problem to solve. The orientation-reversing nature of glide reflections is responsible for all the complications in sequence (\ref{SES}). Various examples of $\SPT^d(\ZZZ \times G)$, with $\ZZZ$ generated by glide or pure translation, are juxtaposed in Tables \ref{table:bSPT}. 

Regarding the second question, (b), denoting the glide symmetry by $\ZZZ$, we know that if a $\ZZZ\times G$-protected SPT phase becomes trivial when $\ZZZ$ is relaxed to its translational subgroup, then it must become trivial when $\ZZZ$ is forgotten altogether. It turns out that the converse is also true:

\begin{prp}
Assume the Twisted Generalized Cohomology Hypothesis. Let $\ZZZ$ be generated by a glide reflection and $G$ be arbitrary. If a $\ZZZ\times G$-protected SPT phase becomes trivial under glide forgetting, then it must already become trivial when $\ZZZ$ is relaxed to its translational subgroup.
\label{prp:glide_to_translation}
\end{prp}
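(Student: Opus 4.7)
The plan is to combine the alternating-layer characterization of $\kernel \beta'$ (from Proposition~\ref{prp:SES_classifications}) with the observation that this construction, when viewed through the pure-translation subgroup of the glide group, reduces to an ordinary translational layering of two mirror-related layers whose $(d-1)$-dimensional sum is trivial. First I would invoke the corollary of Proposition~\ref{prp:SES_classifications} that $\kernel \beta' = \image \alpha'$. Since the hypothesized $[b] \in \SPT^d\paren{\ZZZ \times G}$ satisfies $\beta'([b]) = 0$, we may write $[b] = \alpha'([a])$ for some $[a] \in \SPT^{d-1}(G)$; concretely, $[b]$ is represented by the alternating-layer system with a copy of $a$ at each $x_1 \in \ZZZ$ and a copy of $\bar a$ (the mirror image under $x_2 \mapsto -x_2$) at each $x_1 \in \ZZZ + 1/2$.

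Next I would restrict the glide symmetry to its translational subgroup $2\ZZZ$, generated by $x_1 \mapsto x_1 + 1$, via the inclusion $\iota: 2\ZZZ \hookrightarrow \ZZZ$. This induces a restriction map $\iota^*: \SPT^d\paren{\ZZZ \times G} \fromto \SPT^d\paren{2\ZZZ \times G}$. The key observation is that under $\iota^*$ the alternating-layer representative of $[b]$ reinterprets as a pure-translation layering with period $1$ in $x_1$, whose fundamental domain contains one decoupled copy of $a$ and one decoupled copy of $\bar a$; hence by the pure-translation layer construction of Proposition~\ref{prp:strong_weak}, $\iota^*([b])$ equals the image of $[a]+[\bar a] \in \SPT^{d-1}(G)$ in $\SPT^d\paren{2\ZZZ \times G}$. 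Finally, since $G$ is on-site (the only spatial symmetry was $\ZZZ$, which has now been dropped), the wormhole-array argument of Ref.~\cite{McGreevy_sSourcery} recalled in Sec.~\ref{subsec:SRE_SPT} gives $[\bar a] = -[a]$ in $\SPT^{d-1}(G)$. Therefore $[a] + [\bar a] = 0$, and so $\iota^*([b]) = 0$, which is the claim.

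The main obstacle would be justifying rigorously, within the minimalist framework, that under $\iota^*$ the alternating-layer representative becomes the pure-translation layering of $[a]+[\bar a]$. Physically this is transparent from the layer picture, but a cohomological derivation should proceed by naturality: the subgroup inclusion $2\ZZZ \hookrightarrow \ZZZ$, under which the orientation twist $\phi$ restricts from non-trivial to trivial, induces a map between the long exact sequences underlying Propositions~\ref{prp:SES_classifications} and \ref{prp:strong_weak}, and the required identity is one face of the resulting commutative diagram, which must be checked using the machinery of App.~\ref{app:proof}.
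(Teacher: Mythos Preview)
Your approach is sound in outline but differs from the paper's. The mathematical proof in App.~\ref{app:proof_translation} does not pass through Proposition~\ref{prp:strong_weak} at all: it establishes $\image \alpha' \subset \kernel \gamma$ directly by exhibiting an explicit homotopy in the cohomological model (Fig.~\ref{fig:proof_kernel_gamma}), shrinking, rotating, and annihilating the copies of $a$ and its reflected partner within a single $2\ZZZ$ fundamental domain. Your route instead identifies $\gamma \circ \alpha'$ with the pure-translation layering of $[a] + [\bar a]$ and then argues that this sum vanishes. This is a cleaner conceptual factorization and makes transparent \emph{why} the result holds: the glide unit cell, read as a translation cell of period~$1$, already contains a phase and its orientation-reversed partner. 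The paper's physical discussion in Sec.~\ref{sec:puretranslation} is closer in spirit to yours, but even there it appeals to the Fig.~\ref{fig:image_alpha_subset_kernel_beta} pair-up-and-annihilate picture rather than to Proposition~\ref{prp:strong_weak}.

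There is, however, a gap in your justification of $[\bar a] = -[a]$. You assert that $G$ is on-site because ``the only spatial symmetry was $\ZZZ$,'' but the proposition explicitly allows $G$ to be arbitrary, and the wormhole-array argument of Ref.~\cite{McGreevy_sSourcery} is stated in Sec.~\ref{subsec:SRE_SPT} to apply only to on-site symmetries. The identity $[\bar a] = -[a]$ does hold in full generality within the Hypothesis, but for a different reason: in the model of App.~\ref{app:proof}, an element of $h^{n-1}(BG,\phi)$ is represented by a pointed map $a: (I\times I)/\partial(I\times I) \to Y$, the relevant orientation reversal corresponds to $s \mapsto 1-s$ in one loop coordinate, and reversal of either loop coordinate gives the inverse in the abelian group $\angles{(I\times I)/\partial(I\times I), Y}$. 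Once you replace the wormhole appeal by this observation (which your final paragraph correctly anticipates lives in the machinery of App.~\ref{app:proof}), your argument is complete.
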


\noindent This can be either proved mathematically from the Twisted Generalized Cohomology Hypothesis as in App.\,\ref{app:proof_translation}, or argued physically as we proceed to do.  Let us denote the translational subgroup of the glide symmetry $\ZZZ$ by $2\ZZZ$ and introduce the map,
\begin{equation}
\gamma: \SPT^d\paren{\ZZZ \times G} \fromto \SPT^d\paren{2\ZZZ \times G},\label{gamma}
\end{equation}
given by relaxing glide to its translational subgroup. By the same argument as in Fig.\,\ref{fig:image_alpha_subset_kernel_beta}, a system obtained through the alternating-layer construction can always be made trivial while preserving the translational symmetry. This means $\image \alpha \subset \kernel \gamma$.  On the other hand, the fact that a $\ZZZ\times G$-protected SPT phase that becomes trivial when $\ZZZ$ is relaxed to $2\ZZZ$ must become trivial when $\ZZZ$ is forgotten altogether shows that $\kernel \gamma \subset \kernel \beta$. Since we know that a $\ZZZ\times G$-protected SPT phase is weak with respect to glide if and only if it can be obtained through the alternating-layer construction, that is, $\image \alpha= \kernel \beta$, we must have $\image \alpha = \kernel \gamma = \kernel \beta$, whence the desired result follows.

\subsection{{Time reversal and inverse of SPT phase in zero dimension}}

{In Sec.\,\ref{sec:temporalglide}, we remarked that we expect our results to hold for spatiotemporal glide symmetries such as spatial translation followed by time reversal. For the same physical intuition in Sec.\,\ref{sec:physical_picture} to apply, it is crucial for time reversal to give the inverse of an SPT phase protected by on-site symmetry. While in positive dimensions one needs to specify how time reversal acts on various degrees of freedom, for a 0-dimensional SPT phase the action of time reversal is unique (up to a phase). The latter makes an explicit proof that time reversal gives the inverse of an SPT phase possible, which we present below. Time reversal is in fact the only way to define inverses in 0 dimension, where there is no spatial coordinate to reverse.}

{To begin, we} note that a 0-dimensional SPT phase is nothing but an isomorphism class of \emph{1-dimensional} representations $\rho$ of $G$, owing to the uniqueness of ground state. These representations have the form
\begin{equation}
\rho_g = u_g K^{s(g)},
\end{equation}
where {$u_g\in U(1)$ is a complex number of unit modulus}, $K$ is complex conjugation, and $s: G \fromto \braces{0,1}$ is a homomorphism [same as $\phi$ but written additively: $\phi(g) = (-1)^{s(g)}$]. Stacking corresponds to taking the tensor product of two representations:
\begin{equation}
\paren{u_g K^{s(g)}} \otimes \paren{v_g K^{s(g)}} = u_g v_g K^{s(g)},
\end{equation}
{The trivial phase, i.e. the identity element under stacking, is obviously represented by
\begin{equation}
\identity_g \coloneq K^{s(g)}
\end{equation}
Being antiunitary, time reversal must act like
\begin{equation}
w K
\end{equation}
on the 1-dimensional subspace spanned by the ground state, for some $w\in U(1)$.}
Now, to prove that
\begin{equation}
{(wK) \rho_g (wK)^{-1}}
\end{equation}
represents the inverse SPT phase of what $\rho_g$ represents, we simply compute
{
\begin{eqnarray}
\rho_g \otimes \paren{(wK)\rho_g (wK)^{-1}} &=& \paren{u_g K^{s(g)}} \otimes \paren{w u_g^* K^{s(g)} w^{-1}} \nonumber\\
&=& w \identity_g w^{-1},
\end{eqnarray}
which is equivalent to the representation $\identity_g$ and hence represents the trivial phase.}

\subsection{What if we knew the robustness of only the hourglass-fermion phase} \la{sec:whatif}

In Sec.\,\ref{sec:hourglass_fermions}, we combined a corollary of Proposition \ref{prp:SES_classifications} with the fact that the QSH phase is robust to interactions and admits no interacting square root to show that the hourglass-fermion phase is also robust to interactions. Here we ask the converse question. That is, if we only knew that the hourglass-fermion phase is robust to interactions, could we deduce that the QSH phase is robust to interactions and admits no interacting square root?

This is important because an independent argument for the robustness of hourglass-fermion phase has been recently offered in Ref.\,\cite{Lu_sgSPT}. An affirmative answer to the above question would not only lend credence to the consistency of our minimalist framework, but also further corroborate existing arguments \cite{qi2008, essin2009, qi_spincharge,lee2008,aa2011, fu2006, Lu_sgSPT} for the robustness of various phases. We now show that this is indeed the case.

Assume the hourglass-fermion phase is robust to interactions, and suppose to the contrary that the QSH phase was not robust to interactions. Then there would exist a way to trivialize the QSH phase by turning on interactions while preserving the many-body gap and the $U(1)$ and $\T$ symmetries. Since the hourglass-fermion phase can be obtained from the QSH phase through the alternating-layer construction, one could then trivialize it as well by turning on \emph{intra}-layer interactions for all layers at once, while preserving the many-body gap, the glide symmetry, as well as $U(1)$ and $\T$. This would contradict the assumption that the hourglass-fermion phase is robust to interactions.

There is also a more formal way of looking at this. To do so, we need to first recognize that the passage from the classification of {translation-invariant} topological insulators or superconductors to the classification of SPT phases defines a homomorphism between abelian groups. For instance, we have homomorphisms
\begin{eqnarray}
i_1'&:& \ZZZ_2 \fromto A, \\
i_2&:& \ZZZ_4 \fromto B, \la{definei2}\\
i_3'&:& \ZZZ_2 \fromto C,
\end{eqnarray}
where the three domains are respectively the strong classification of 2D translation-invariant TIs in class AII, the strong classification of 3D translation-invariant, glide-symmetric TIs in class AII, and the strong classification of 3D translation-invariant TIs in class AII without the glide constraint. The three codomains are respectively
\begin{eqnarray}
A &\coloneq& \SPT^2\paren{G},\\
B &\coloneq& \SPT^3\paren{\ZZZ \times G},\\
C &\coloneq& \SPT^3\paren{G}, 
\end{eqnarray}
where $\ZZZ$ is generated by glide and $G$ is generated by $U(1)$ and $\T$ where $\T$ squares to fermion parity.
These homomorphisms induce homomorphisms between quotients and torsions, so now we have homomorphisms
\begin{eqnarray}
i_1&:& \ZZZ_2 = \ZZZ_2 / 2 \ZZZ_2 \fromto A/2A, \\
i_3&:& \ZZZ_2 = \braces{\nu_0 \in \ZZZ_2 | 2\nu_0 = 0} \fromto \braces{[c] \in C | 2[c] = 0}.
\end{eqnarray}
Homomorphisms $i_1$, $i_2$, and $i_3$ fit into a commutative diagram,
\begin{eqnarray}
\adjustbox{valign=M}{\includegraphics{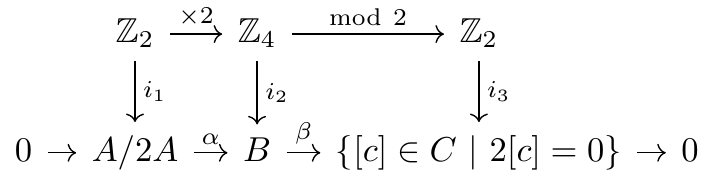}},\la{commutativediag}
\end{eqnarray}
where the first row is \q{SES_AII}, and the indeterminate second row is exact by Proposition \ref{prp:SES_classifications}. The argument in Sec.\,\ref{subsec:robustness_hourglass_fermion_phase} amounts to the implication
\begin{equation}
i_1 \mbox{ injective} \Rightarrow i_2 \mbox{ injective},
\end{equation}
whereas the argument at the beginning of this subsection amounts to the implication
\begin{equation}
i_2 \mbox{ injective} \Rightarrow i_1 \mbox{ injective}.
\end{equation}
These can be easily proved through diagram chasing. It is also an easy exercise to show that
\begin{equation}
i_1 \mbox{ or } i_2 \mbox{ injective} \Rightarrow i_3 \mbox{ injective}.
\end{equation}

\section{Summary and outlook} \la{sec:summaryoutlook}

In summary, we have derived a short exact sequence [\q{SES}] of three abelian groups that in consecutive order describe (i) the classification of $(d-1)$-dimensional fermionic/bosonic SPT phases with an arbitrary symmetry group $G$ in an arbitrary dimension $d$, (ii) the classification of $d$-dimensional SPT phases with symmetry $\ZZZ \times G$, where $\ZZZ$ is generated by a glide, and (iii) the classification $d$-dimensional SPT phases with symmetry $G$. An alternating-layer construction (see Fig.\,\ref{fig:alternating_layer_construction}) maps (i) into (ii), which in turn is mapped under glide-forgetting to (iii). We considered both spatial glide transformations and spatiotemporal ones (Sec.\,\ref{sec:temporalglide}), such as a spatial translation followed by time reversal.

We demonstrated how the structure of the short exact sequence constrains the classification of SPT phases. First off, we showed that for any given $G$, classification (ii) can only be a direct sum of $\ZZZ_4$ and $\ZZZ_2$ subgroups (see Corollaries \ref{cor:quad-chotomy} and \ref{cor:direct_sum_decomposition}). We further recognized, given (i) and (iii) as input, that the task of determining (ii) using the short exact sequence simply constitutes an abelian group extension problem, which requires little work compared to determining (ii) using other methods, e.g., formulating topological invariants. In some cases, the abelian group extension problem has a unique solution and determines (ii) completely, as we exemplified using fermionic SPT phases in Wigner-Dyson class A for spatial glide (see \s{subsec:classA}) and Altland-Zirnbauer class D for spatiotemporal glide (see \s{sec:temporalglide}), and bosonic SPT phases for a variety of symmetries (see \s{sec:computations}). In other cases, it has more than one admissible solutions. We elucidated the latter phenomenon using fermionic SPT phases in the Wigner-Dyson class AII (see Sec.\,\ref{subsec:sanity_check}), and bosonic SPT phases with $\ZZZ_2 \times \ZZZ_2$ symmetry (see Sec.\,\ref{sec:computations}). While the solution is not unique in these cases, a little additional input can often fully remove the ambiguity. Supplemented with the robustness of certain \textit{free-fermion} phases to interactions (see Sec.\,\ref{sec:hourglass_fermions}), we selected out of the many the one solution that provides the complete classification of 3D glide-symmetric SPT phases in Wigner-Dyson class AII: $\ZZZ_4 \oplus \ZZZ_2 \oplus \ZZZ_2$, as described in Sec.\,\ref{sec:computations}. In our reasoning, we utilized the relationship between free-fermion classifications and (interacting) SPT classifications, which we encapsulated in the commutative diagram (\ref{commutativediag}). This diagram delineated a ``map" from a possibly non-exact sequence of non-interacting classifications to our short exact sequence of interacting classifications, and encoded the robustness of various phases in the injectivity of certain homomorphisms. The latter made it transparent that the robustness of the hourglass fermion phase to interactions is closely related to the nonexistence of square root of the QSH phase and to the robustness of the 3D strong TI in the same class. In fact, the former two imply each other, and either of them implies the third (see \s{sec:whatif}). We pointed out that known arguments for the robustness of the QSH phase can be adapted to show that it has no square root, and from that we deduced that the hourglass-fermion phase must be robust to interactions (Sec.\,\ref{sec:hourglass_fermions}).

The short exact sequence afforded us further implications. For one, a $G$-protected SPT phase in (iii) has a glide-symmetric representative if and only if it squares to the trivial phase. For another, a $\ZZZ \times G$-protected SPT phases becomes trivial under glide-forgetting if and only if it can be obtained through the alternating-layer construction; we have referred to such phases as phases that are weak with respect to glide and clarified in which cases (e.g., class A with spatial glide or class D with spatiotemporal glide) the ``weak classification" is complete and in which cases it is not (e.g., class AII with spatial glide). Furthermore, we proved that if a $\ZZZ \times G$-protected SPT phase becomes trivial when glide is forgotten altogether, then it must become trivial as soon as glide is relaxed to its translational subgroup (see \s{sec:puretranslation}). We also contrasted our short exact sequence for glide, which may or may not be split, to an analogous short exact sequence for pure translation, which is always split (see \s{sec:puretranslation}).

We are hopeful that the machinery of our minimalism framework would spawn the proof of more nontrivial results concerning the classification of SPT phases. Hints for the potential existence of more relations like our short exact sequence can be found in Refs.\,\cite{Hermele_torsor,Huang_dimensional_reduction}. The authors of these papers described a number of ways to build (or reduce) higher-dimensional spatial SPT phases from (resp.\,to) lower-dimensional SPT phases. For instance, one can build a 3D system that respects the reflection symmetry $(x,y,z) \mapsto (-x,y,z)$ by putting a 2D system with on-site $\ZZZ_2$ symmetry on the $x=0$ plane and sandwiching it by trivial half-infinite systems on both sides. Conversely, one can reduce certain 3D reflection-symmetric system to a 2D system with on-site $\ZZZ_2$ symmetry living on the $x=0$ plane by trivializing the $x\neq 0$ regions symmetrically. These physical considerations suggest there may be a general relation among the classification of $d$-dimensional $G$-protected SPT phases, $d$-dimensional $\ZZZ_2^P \times G$-protected SPT phases, and $(d-1)$-dimensional $\ZZZ_2 \times G$-protected SPT phases, where $\ZZZ_2^P$ is represented antiunitarily and $\ZZZ_2$ is represented unitarily. We hope the minimalist framework will help us put our finger on the precise form of these new relations.

{\emph{Note added.}---While this manuscript was in press, the preprint \cite{Gaiotto_Johnson-Freyd} appeared, which demonstrated how the structure of a generalized cohomology theory could arise naturally in the classification of SPT phases by consideration of decorated defects of various codimensions. This provides further, strong evidence that the Twisted Generalized Cohomology Hypothesis is correct.}

\begin{acknowledgments}
We are grateful to Ashvin Vishwanath, Yuan-Ming Lu, Jeffrey Teo, and Hoi-Chun Po for inspiring discussions, {and the referees for their valuable comments}.
AA was supported by the Yale Postdoctoral Prize Fellowship.\\
\end{acknowledgments}

\section*{Organization of Appendix}

In App.\,\ref{app:twisted_generalized_cohomology}, we review generalized cohomology theories and formulate the Twisted Generalized Cohomology Hypothesis. In App.\,\ref{app:relationship_Z4_Z2}, we prove Eq.\,(\ref{glideforget}), which relates the $\ZZZ_4$ and $\ZZZ_2$ invariants in class AII. In App.\,\ref{app:proof}, we prove Proposition \ref{prp:SES_classifications}. In App.\,\ref{app:proof_corollaries}, we prove its Corollaries \ref{cor:quad-chotomy} and \ref{cor:direct_sum_decomposition}. In App.\,\ref{app:proof_translation}, we prove Proposition \ref{prp:glide_to_translation}, which concerns the relaxation of glide to its translational subgroup.

\appendix

\section{The Twisted Generalized Cohomology Hypothesis\label{app:twisted_generalized_cohomology}\label{app:TGCH}}

The minimalist framework is founded on the Twisted Generalized Cohomology Hypothesis, which is a twisted version of the Generalized Cohomology Hypothesis one of us formulated in Ref.\,\cite{Xiong}. It is based on Kitaev's argument that the classification of SPT phases should carry the structure of generalized cohomology theories \cite{Kitaev_Stony_Brook_2011_SRE_1, Kitaev_Stony_Brook_2013_SRE, Kitaev_IPAM}. This appendix will furnish us with necessary mathematical background with which we will then formulate the Hypothesis. A more thorough discussion on generalized cohomology theories can be found in Ref.\,\cite{Xiong} and many textbooks \cite{Hatcher, DavisKirk, Adams1, Adams2}.

A generalized cohomology theory $h$ can be represented by an $\Omega$-spectrum $F$, which by definition is a sequence
\begin{equation}
\ldots, F_{-2}, F_{-1}, F_0, F_1, F_2, \ldots
\end{equation}
of pointed topological spaces together with pointed homotopy equivalences
\begin{equation}
F_n \homotopic \Omega F_{n+1},\label{FOmegaF}
\end{equation}
where $\Omega$ is the loop space functor. In the non-twisted case, the generalized cohomology theory $h$ outputs an abelian group $h^n(X)$ for each given topological space $X$ and integer $n$, according to
\begin{equation}
h^n(X) \coloneq \brackets{X, \Omega F_{n+1}}.
\end{equation}
Here $[X, Y]$ denotes the set of homotopy classes of maps from $X$ to $Y$; when $Y$ comes from an $\Omega$-spectrum this set can be endowed with an abelian group structure.

In the twisted case, one is given an integer $n$, a pointed topological space $X$, and an action $\phi_X$ of the fundamental group $\pi_1(X)$ on the $\Omega$-spectrum $F$. The generalized cohomology theory $h$ then outputs an abelian group according to
\begin{equation}
h^n\paren{X, \phi_X} \coloneq \brackets{ \widetilde X, \Omega F_{n+1} }_{\pi_1(X)}. \label{widetilde_X_Omega_F_n+1}
\end{equation}
Here $\widetilde X$ denotes the universal cover of $X$, and $[X, Y]_G$ denotes the set of homotopy classes of $G$-equivariant maps from $X$ to $Y$. Again, when $Y$ comes from an $\Omega$-spectrum this set can be endowed with an abelian group structure. Recall that a $G$-equivariant map $f: X \fromto Y$ is a map that commutes with the action of $G$, i.e. $g.(f(x)) = f(g.x)~\forall g\in G$ and $x \in X$. It is a simple exercise to show that if $\pi_1(X)$ acts trivially on the $\Omega$-spectrum, then $h^n(X, \phi_X) = h^n(X)$.

Each $h^n$ is functorial, or more specifically, contravariant, which means the following. In the non-twisted case, maps between topological spaces,
\begin{equation}
f: X \fromto Y,
\end{equation}
induce homomorphisms between abelian groups,
\begin{equation}
f^*: h^n(Y) \fromto h^n(X), \label{temp010}
\end{equation}
such that the coherence relations $\paren{f_2 \circ f_1}^* = f_1^* \circ f_2^*$ and $\identity^*=\identity$ are satisfied. The induced homomorphism $f^*$ is given by precomposing maps $Y \fromto \Omega F_{n+1}$ with $f$. In the twisted case, one requires $f$ to additionally respect the fundamental group actions. That is, if $f_*: \pi_1(X) \fromto \pi_2(Y)$ is the homomorphism induced by $f$ and $\phi_X: \pi_1(X) \fromto \Aut(F)$ and $\phi_Y: \pi_1(Y) \fromto \Aut(F)$ are the fundamental group actions of $X$ and $Y$ on $F$, then one requires
\begin{equation}
\phi_Y \circ f_* = \phi_X.
\end{equation}
We denote $f$'s that satisfy this constraint by
\begin{equation}
f: (X, \phi_X) \fromto (Y, \phi_Y).
\end{equation}
For such $f$'s, the the same kind of precomposition gives rise to a homomorphism between abelian groups,
\begin{equation}
f^*: h^n(Y, \phi_Y) \fromto h^n(X, \phi_X),
\end{equation}
satisfying the same coherence relations.

For the purpose of classifying {\emph{bosonic}} SPT phases, we will set $X$ to be the the classifying space $BG$ of the symmetry group $G$, and $\phi_X$ according to how $G$ is represented. Let
\begin{equation}
\phi: G \fromto \{\pm 1\}
\end{equation}
be the homomorphism that sends antiunitarily represented elements to $-1$ and unitarily represented elements to 1. By continuity, $\phi$ can be viewed as a homomorphism that goes from $\pi_0(G)$ to $\braces{\pm 1}$ instead. It is an elementary property of classifying spaces that $\pi_1(BG) \isomorphic \pi_0(G)$. Therefore, $\phi$ can be viewed as a homomorphism that goes from
\begin{equation}
\phi: \pi_1(BG) \fromto \{\pm 1\}. \label{phi_G:pi_1(BG)}
\end{equation}
On the other hand, there is a canonical action of $\braces{\pm 1}$ on $F$, where the nontrivial element of $\braces{\pm 1}$ sends every loop $l\in \Omega F_{n+1}$ to the reverse loop $\bar l \in \Omega F_{n+1}$. Composing Eq.\,(\ref{phi_G:pi_1(BG)}) with this canonical action, we get the required action of $\pi_1(BG)$ on the $\Omega$-spectrum. We shall denote this action also by $\phi$ \footnote{In this physical setting, since all actions on $F$ factor through $\ZZZ_2$, one can use the $\ZZZ_2$-principal bundle {$\widetilde{BG}/ \kernel \phi \fromto BG$} instead of the universal cover {$\widetilde{BG} \fromto BG$} to obtain the same result. Alternatively, one can use the $G$-principal bundle $EG \fromto BG$.}.

For each non-negative integer $d$, we will then identify $h^d(BG, \phi)$ as the set of $d$-dimensional $G$-protected {bosonic} SPT phases, $\SPT^d_b(G,\phi)$. The abelian group structure of $h^d(BG, \phi)$ will be identified with the abelian group structure `+' of $\SPT^d_b(G,\phi)$ under stacking -- the binary operation that maps a pair of $d$-dimensional $G$-symmetric systems $a$ and $b$ with many-body Hilbert spaces $\mathscr H_{a,b}$, symmetry actions $\rho_{a,b}$ on the Hilbert spaces, Hamiltonians $\hat H_{a,b}$, and ground states $\ket{\Psi_{a,b}}$ to the $d$-dimensional $G$-symmetric system $a+b$ whose many-body Hilbert space, symmetry action, Hamiltonian, and ground state are
\begin{eqnarray}
\mathscr H_{a+b} &\coloneq& \mathscr H_a \otimes \mathscr H_b, \\
\rho_{a+b} &\coloneq& \rho_a \otimes \rho_b, \\
\hat H_{a+b} &\coloneq& \hat H_a \otimes \hat{\mathbb I}_b + \hat{\mathbb I}_a \otimes \hat H_b, \\
\ket{\Psi_{a+b}} &\coloneq& \ket{\Psi_a} \otimes \ket{\Psi_b},
\end{eqnarray}
respectively \cite{Note21}. The illustration in Fig.\,\ref{fig:stacking} is a mnemonic device, but the construction can be defined abstractly.

The functoriality of $h^d$, on the other hand, will be identified with a parallel property of the classification of {bosonic} SPT phases. Given any group homomorphism $f: (G_1,\phi_1) \fromto (G_2,\phi_2)$ satisfying $\phi_2 \circ f = \phi_1$ and a $G_2$-action $\rho_2: G_2 \fromto \Aut(\mathscr H)$ on Hilbert space $\mathscr H$, the composition $\rho_2 \circ f: G_1 \fromto \Aut(\mathscr H)$ defines a $G_1$-action on the same $\mathscr H$. This means $f$ can be used to convert $G_2$-symmetric systems to $G_1$-symmetric systems. Indeed, we can retain the same Hilbert space and Hamiltonian and simply replace $G_2$-actions by $G_1$-actions following the above recipe. In the event that $f: (G_1,\phi_1) \fromto (G_2,\phi_2)$ is an inclusion, this conversion process is precisely a symmetry-forgetting process. In general, $f$ does not have to be either injective or surjective, and the conversion process is a symmetry forgetting followed by a symmetry relabeling. Either way, we get an induced homomorphism,
\begin{equation}
f^*: \SPT^d_b(G_2, \phi_2) \fromto \SPT^d_b(G_1, \phi_1).
\end{equation}
This we will identify with the homomorphism that $f$ induces on $h^d$,
\begin{equation}
f^*: h^d(BG_2, \phi_2) \fromto h^d(BG_1, \phi_1). \label{temp020}
\end{equation}

{
For the purpose of classifying \emph{fermionic} SPT phases, further consideration must be given to the fermion-parity symmetry $\ZZZ_2^f$. The full symmetry group $G$ is an extension of $G/\ZZZ_2^f$ by $\ZZZ_2^f$. When the extension is split, i.e.\,$G = (G/\ZZZ_2^f) \times \ZZZ_2^f$, one can form a generalized cohomology theory that takes $G/\ZZZ_2^f$ as the input \cite{Wen_Fermion}; when the extension is non-split, there is a fermionic twisting that needs to go into the definition of generalized cohomology theory \cite{Kapustin_Fermion,Freed_SRE_iTQFT,fidkowski2011topological}. In order to avoid this complication, we use a formulation introduced in the recent paper Ref.\,\cite{Gaiotto_Johnson-Freyd}. Let us consider an arbitrary direct-product factorization of the full symmetry group,
\begin{eqnarray}
G &=& G_b \times G_f, \label{G=G_b_times_G_f}\\
\phi &=& \phi_b \times \phi_f, \label{phi=phi_b_times_phi_f}
\end{eqnarray}
such that $\ZZZ_2^f \subset G_f$. The idea of Ref.\,\cite{Gaiotto_Johnson-Freyd} is to treat $\paren{G_f,\phi_f}$ as a fixed parameter of the problem and let $\paren{G_b, \phi_b}$ vary. Since fermionic parity is in $G_f$ but not $G_b$, this puts the fact that there is fermionic matter in the theory effectively in a black box. With $\paren{G_f,\phi_f}$ held fixed, we are dealing exclusively with bosonic symmetries $\paren{G_b, \phi_b}$. Then we will identify $\SPT^d_f\paren{G,\phi}$ with $h^d\paren{BG_b,\phi_b}$, with the caveat that a different $h$ may need to be used for a different $\paren{G_f,\phi_f}$. For fixed $\paren{G_f,\phi_f}$, however, $h$ is a generalized cohomology theory. That is, $d$-dimensional $\paren{G_b\times G_f}$-protected fermionic SPT phases for fixed $\paren{G_f,\phi_f}$ but varying $d$ and $\paren{G_b, \phi_b}$ form a generalized cohomology theory.

The identification of abelian group structure and functoriality then proceeds almost identically to the bosonic case. We will identify the abelian group structure of $h^d\paren{BG_b,\phi_b}$ with the abelian group structure `+' of $\SPT^d_f\paren{G_b \times G_f,\phi_b \times \phi_f}$ under stacking. Given any $f: \paren{G_{b1}, \phi_{b1}} \fromto \paren{G_{b2}, \phi_{b2}}$, we will identify the homomorphism
\begin{equation}
f^*: h^d\paren{BG_{b2},\phi_{b2}} \fromto h^d\paren{BG_{b1},\phi_{b1}} \label{temp021}
\end{equation}
with the homomorphism
\begin{equation}
\SPT^d_f(G_{b2} \times G_f, \phi_{b2} \times \phi_f) \fromto \SPT^d_f(G_{b1} \times G_f, \phi_{b1} \times \phi_f)
\end{equation}
defined by trading $G_{b2}$-actions for $G_{b1}$-actions as before. Note that here we are not allowed for arbitrary homomorphisms $G_{b1} \times G_f \fromto G_{b2} \times G_f$ but only ones of the form $f \times \identity_{G_f}$, with $f: G_{b1} \fromto G_{b2}$ and $\identity_{G_f}$ the identity on $G_f$. This makes sense, as it is not only the group structure of $G$ and homomorphism $\phi$, but also the status of a central element of $G$ being fermion parity, that must be preserved.

We stress that, for a given $G$, how one separates $G$ into $G_b$ and $G_f$ is purely a matter of choice. When multiple factorizations exist, any of them can be used to compute $\SPT^d_f(G,\phi)$ and the result by hypothesis will be the same. When the $\ZZZ_2^f$-extension is split, i.e.\,$G = (G/\ZZZ_2^f) \times \ZZZ_2^f$ -- which is the setting of Ref.\,\cite{Wen_Fermion} -- the smallest possible $G_f$ is $\ZZZ_2^f$. When the extension is non-split, the smallest possible $G_f$ will be strictly larger than $\ZZZ_2^f$ \footnote{{Strictly speaking it is incorrect to say \emph{the} smallest possible $G_f$, as it may not be unique. For example, $\ZZZ_2 \times \ZZZ_4$, with $(0,2)$ the fermion parity, can be written as either $\braces{(0,0),(1,0)} \times \braces{(0,0),(1,1),(0,2),(1,3)}$ or $\braces{(0,0),(1,0)} \times \braces{(0,0),(0,1),(0,2),(0,3)}$.}}. This is the case for the symmetry group of hourglass fermions considered in this paper, which is generated $U(1)$, $\T$, and glide. We note that the fermion-parity symmetry $\ZZZ_2^f$ is contained in the $U(1)$ charge-conservation symmetry. Because $\T$ squares to fermion parity, i.e.\,the nontrivial element of $\ZZZ_2^f$, it must be included in $G_f$. Because the $q$th power of any element of the form $e^{i(2p+1)\pi/q}\in U(1)$ is the fermion parity, $e^{i(2p+1)\pi/q}$ must also be included in $G_f$. In fact, because $U(1)$ is a continuous group, the entire $U(1)$ must be included in $G_f$. On the other hand, the subgroup $\ZZZ$ generated by glide commutes with, and intersects trivially with, the subgroup generated by $U(1)$ and $\T$, so it can be kept in $G_b$. Note that regardless of what $G$ is, a factorization (\ref{G=G_b_times_G_f}) always exists: one can set $G_f=G$ and $G_b=0$ (the trivial group).}

The above are all that go into the Twisted Generalized Cohomology Hypothesis, which can be summed up as follows.

{
\begin{framednameddef}[Twisted Generalized Cohomology Hypothesis]
~\\
Bosonic.---There exists a generalized cohomology theory $h_b$ such that for each $d\in \NNN$ there are isomorphisms
\begin{equation}
\mathcal{SPT}^d_b\paren{G,\phi}\isomorphic h^d_b\paren{BG, \phi}\label{GCH_iso} 
\end{equation}
that are natural \footnote{{Here, naturality is in the sense of category theory. In the present case, it says the isomorphisms respect the functorial structure of the two sides.}} in $(G,\phi)$.\vspace{0.1in}

\noindent Fermionic.---For each $\paren{G_f, \phi_f}$ containing the fermion parity symmetry, there exists a generalized cohomology theory $h_{f, \paren{G_f,\phi_f}}$ such that for each $d\in \NNN$ there are isomorphisms
\begin{equation}
\mathcal{SPT}^d_f\paren{G_b \times G_f, \phi_b \times \phi_f}\isomorphic h^d_{f,(G_f,\phi_f)}\paren{BG_b, \phi_b}\label{GCH_iso_fermionic}
\end{equation}
that are natural in $(G_b,\phi_b)$.
\end{framednameddef}
}

{It is worth pointing out that Eq.\,(\ref{GCH_iso}) is equivalent the following statement. That is, for any given $(G_2,\phi_2)$, there exists a generalized cohomology theory $h_{b,(G_2,\phi_2)}$ such that there are isomorphisms
\begin{equation}
\SPT^d_b\paren{G_1 \times G_2, \phi_1 \times \phi_2} \isomorphic h^d_{b,(G_2,\phi_2)}(BG_1, \phi_1)
\end{equation}
that are natural in $(G_1, \phi_1)$ \cite{Gaiotto_Johnson-Freyd,Note61}. In other words, the existence of a single, ``universal" $h_b$ is equivalent to the existence of a family of generalized cohomology theories, one for each $(G_2,\phi_2)$; $h_b$ corresponds to the choice $G_2 = 0$. Stated this way, the bosonic Hypothesis is in complete analogy with the fermionic one, apart from the condition $\ZZZ_2^f \subset G_f$.}%
\phantom{
\setcounter{footnote}{60}
\footnote{{This follows from the canonical isomorphism
\begin{equation*}
\brackets{\tilde X \times \tilde Y, Z}_{\pi_1 X\times \pi_1 Y} \isomorphic \brackets{\tilde X, \Map^{\pi_1 Y}\paren{\tilde Y, Z} }_{\pi_1 X}
\end{equation*}
for any topological spaces $X$, $Y$, and $Z$, where $\tilde X$ and $\tilde Y$ are universal covers and $\Map^{\pi_1 Y}\paren{\tilde Y, Z}$ is the space of $\pi_1 Y$-equivariant maps from $\tilde Y$ to $Z$.}}
}

An important addendum to the Hypothesis, which extends the Hypothesis to not necessarily on-site symmetries, is this:

\begin{framednameddef}[Addendum to Hypothesis]
A symmetry should be represented antiunitarily if it reverses the orientation of spacetime, and unitarily otherwise.
\end{framednameddef}

\noindent Tensor-network \cite{Jiang_sgSPT} and spatial gauge field \cite{Thorngren_sgSPT}  arguments have been presented to justify this general rule. The rule is also consistent with numerous studies of SPT phases protected by specific spatial symmetries \cite{Wen_1d, Cirac, Wen_sgSPT_1d, Hsieh_sgSPT, Hsieh_sgSPT_2, You_sgSPT, SPt, Cho_sgSPT, Yoshida_sgSPT, Huang_dimensional_reduction}, especially Ref.\,\cite{Huang_dimensional_reduction}, which considered all 3D space-groups. The addendum implies that we should treat not only time reversal, but also spatial reflection, glide reflection, etc. as if they were antiunitary symmetries.

While this is technically not part of the minimalist framework, Kitaev has proposed a physical interpretation of the $\Omega$-spectrum \cite{Kitaev_Stony_Brook_2011_SRE_1, Kitaev_Stony_Brook_2013_SRE, Kitaev_IPAM}. Namely, $F_d$ should be the space of $d$-dimensional short-range entangled states. For instance, in the bosonic case, $F_0 \homotopic \CCC P^\infty$, which corresponds to rays in Hilbert spaces. Moreover, he proposed that condition (\ref{FOmegaF}) should encode a correspondence between short-range entangled states in adjacent dimensions that is based on a Jackiw-Rebbi soliton-type construction \cite{Kitaev_Stony_Brook_2011_SRE_1, Kitaev_Stony_Brook_2013_SRE, Kitaev_IPAM, Xiong}. The classifying space, on the other hand, signals a gauge theory nature of the low energy effective theories of SPT phases \cite{DijkgraafWitten, Kapustin_Boson, Kapustin_Fermion, Xiong}. Indeed, homotopy classes of maps from any space $X$ (e.g.\,a spacetime manifold) to $BG$ classifies principal $G$-bundles over $X$, which are nothing but a different name for gauge fields. Classifying spaces have also been shown to be an object that naturally arises when one classifies invertible phases of matter \cite{Freed_SRE_iTQFT,  Freed_ReflectionPositivity}.

\section{Relationship between the $\ZZZ_4$ and $\ZZZ_2$ classifications\label{app:relationship_Z4_Z2}}

The goal of this appendix is to prove \q{glideforget}, where $\chi \in \braces{0,1,2,3}$ is the $\ZZZ_4$ invariant distinguishing the four phases in the strong $\ZZZ_4$ classification of 3D glide-symmetric insulators in Wigner-Dyson class AII \cite{Nonsymm_Shiozaki, AA_Z4}, whereas $\nu_0 \in \braces{0,1}$ is the strong 3D $\ZZZ_2$ index in the absence of the glide constraint \cite{fu2007b,moore2007,Rahul_3DTI,Inversion_Fu}. $\chi$ was first defined in terms of the Berry connection and curvature in Ref.\ \cite{Nonsymm_Shiozaki} and later reformulated through holonomy by one of us in Ref.\ \cite{AA_Z4}; this latter formulation will be briefly reviewed and utilized to prove \q{glideforget}.

Let us pick a coordinate system where the glide symmetry $\bar{M}_y:(x,y,z){\rightarrow} (x+1/2,-y,z)$ is composed of a half lattice-translation in $\vec{x}$, and a reflection that inverts $\vec{y}$. The topological invariant $\chi$ is encoded in the parallel transport of Bloch wavefunctions along the noncontractible quasimomentum ($\bk$) loops illustrated in \fig{fig:wilson}(a-b). This family of loops lie within the bent, 2D quasimomentum subregion $abcd$, which resembles the surface of a rectangular pipe; faces $a$ and $c$ are each half of a glide-invariant plane. For each loop, $k_z$ is varied during the transport and $\bk_{\parallel}{=}(k_x,k_y)$ is held fixed; we let $t {\in} [0,4]$ (with $4{\equiv} 0$) parametrize $\bk_{\parallel}(t)$ [see \fig{fig:wilson}(b)]. The holonomy of this transport is represented by an $\noc$-by-$\noc$ matrix, where $\noc$ is the number of occupied bands; this matrix is also known as the Wilson loop ($\W$) of the non-abelian Berry gauge field \cite{wilczek1984}. Its unimodular eigenvalues $\{\exp[i\lambda_j(t)]|j{=}1,2,\ldots,\noc; t{\in} [0,4]\}$ are Berry-Zak phase factors which encode the multi-band holonomy \cite{berry1984,zak1989}; we refer to $\lambda_j$ as the quasienergy of a ``Wilson band" indexed by $j$. \\

\begin{figure}[t]
\centering
\includegraphics[width=8.3 cm]{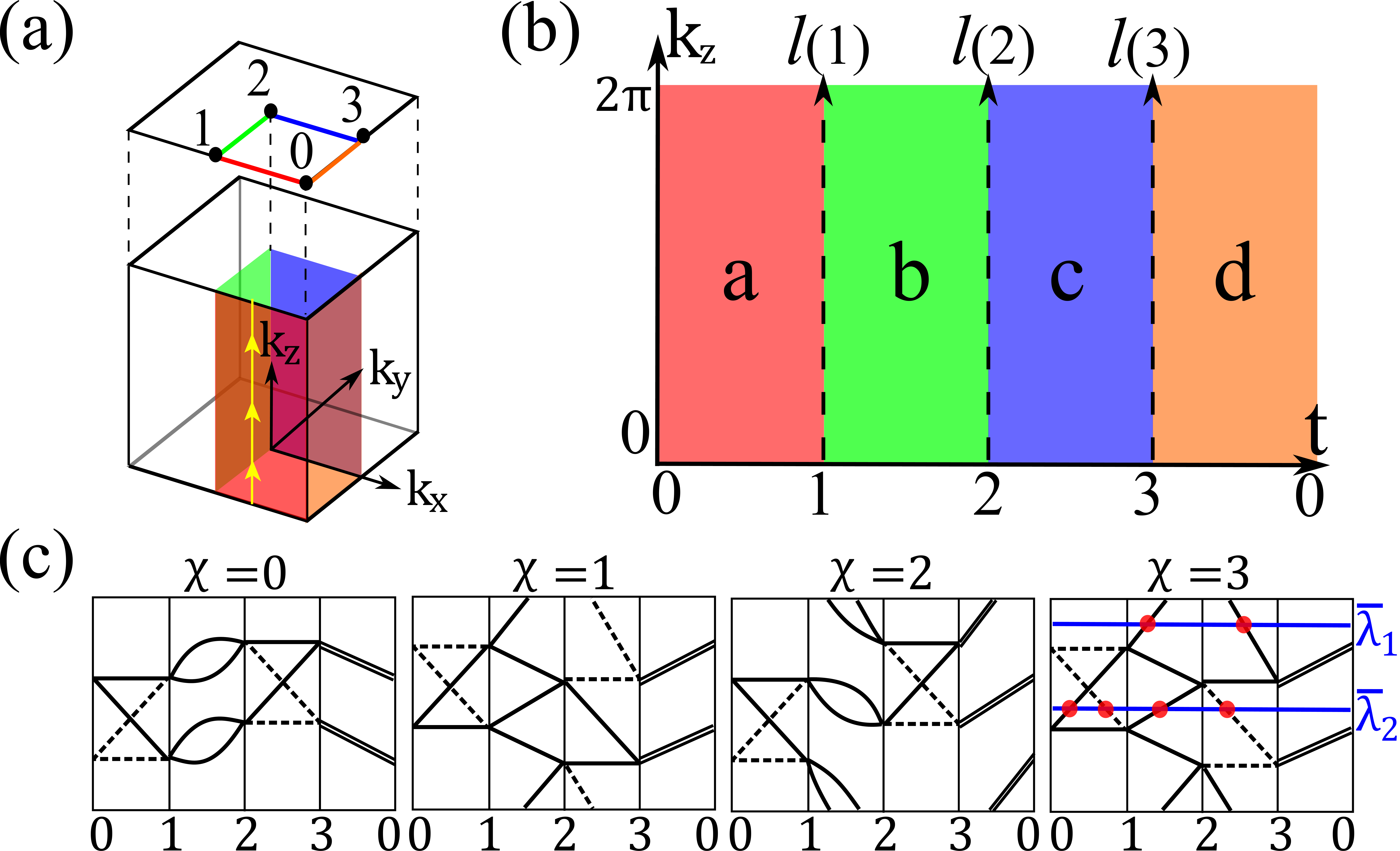}
\caption{(a) 3D Brillouin zone of glide-symmetric solids; red and blue faces are glide-invariant. (b) 2D bent subregion; the $\ZZZ_4$ invariant is defined over the red, blue and green faces. (c) Representative examples of glide-symmetric insulators in four classes distinguished by $\chi$. The vertical axis corresponds to the Berry-Zak phase $\lambda{\in}[0,2\pi]$.
}
\label{fig:wilson}
\end{figure}

While each of $\exp[i\lambda_j(t)]$ is invariant under $U(\noc)$ transformations in the space of occupied bands, $\lambda_j(t)$ is defined modulo $2\pi$, and a branch is chosen for each $j$ such that $\lambda_j$ is a piecewise continuous function of $t$ in each of the intervals $[0,1]$, $[1,2]$, and $[2,3]$ (or $01$, $12$, and $23$ for short). For the glide-invariant intervals $01$ and $23$ (which are the projections of faces $a$ and $c$), $\{\lambda_j^{\pm}\}_{\sma{j{=}1}}^{\sma{\noc/2}}$ are defined as the quasienergies of occupied bands in the even ($\Delta_+$) and odd ($\Delta_-$) representations of glide. In more detail, each wavevector in $a$ and $c$ is mapped to itself under glide, and therefore Bloch wavefunctions at each wavevector transform in two representations of glide [$\Delta_{\pm}(k_x)$] according to their glide eigenvalues $\pm i \exp[-ik_x/2]$, with $k_x$ in the first Brillouin zone.\\

The simplest way to identify $\chi$ through the quasienergy spectrum is to draw a constant-${\lambda}$ reference line [for an arbitrarily chosen quasienergy $\bar{\lambda}$, as illustrated by a blue line in \fig{fig:wilson}(c)] and consider its intersections with Wilson bands (indicated by red dots). For each intersection between $12$, we calculate the sign of the velocity $d\lambda/dt$, and sum this quantity over all intersections to obtain $\cals_{12}(\bar{\lambda})$; for $01$ and $23$, we consider only intersections with Wilson bands in the $\Delta_{\pm}$ representation, and we similary sum over sgn[$d\lambda/dt$] to obtain $\cals^{\pm}_{01}(\bar{\lambda})$ and $\cals^{\pm}_{23}(\bar{\lambda})$ respectively. The following \emph{weighted} sum of $\cals^{\pm}_{ij}$ and $\cals_{12}$, 
\e{\cals^{\pm}(\bar{\lambda}) = 2\cals^{\pm}_{01}(\bar{\lambda})+\cals_{12}(\bar{\lambda})+2\cals^{\pm}_{23}(\bar{\lambda}),\la{z4inv}}
satisfies $(\cals^{\pm}(\bar{\lambda}_1){-}\cals^{\pm}(\bar{\lambda_2}))/4{\in}\ZZZ$ for any $\bar{\lambda}_1$ and $\bar{\lambda}_2$; e.g., compare $\cals^{+}(\bar{\lambda}_1){=}2(0){+}1{+}2({-}1){=}{-}1$ with $\cals^{+}(\bar{\lambda}_2){=}2({+}1){+}1{+}2(0){=}3$ in \fig{fig:wilson}(c). Hence if we are only interested in mod-four equivalence classes, we may drop the argument of $\cals^{\pm}$ completely. It was further shown in Ref.\ \cite{AA_Z4} that $\cals^{+} \equiv {-}\cals^- \mod 4$, so there is only one independent $\ZZZ_4$ invariant: $\chi := \cals^+$.\\

Since both $\cals^{+}_{01}$ and $\cals^{+}_{23}\in \ZZZ$, it follows that 
\e{ \chi \equiv \cals_{12}(\bar{\lambda}) \mod 2.\la{chitos12}} 
To clarify, the parity of $\cals_{12}$ (hence also of $\chi$) is independent of the reference quasienergy $\bar{\lambda}$, and is even independent of the velocities of the Wilson bands that intersect an arbitrary quasienergy reference line. Parity($\cals_{12}$) merely counts the parity of the number of intersection points, and  
distinguishes between trivial ($\cals_{12} \equiv 0 \mod 2$) and topological ($\cals_{12} \equiv 1 \mod 2$) insulators in the 2D AII class \cite{kane2005A}; we refer here to the 2D classification because we have restricted the base space of the vector bundle to the $k_x=0$ plane (which contains the face $b$). In identifying the parity of $\cals_{12}$ as the topological invariant for class AII in 2D, we have utilized earlier works \cite{fu2006,yu2011,soluyanov2011} which identified a partner-switching in the Wannier centers of the Kane-Mele quantum-spin-Hall phase \cite{kane2005A}.\\

On the other hand, it is also known that the 3D-AII strong invariant ($\nu_0$) is expressible as the sum of 2D-AII strong invariants for any two inequivalent, parallel planes that are time reversal-invariant \cite{fu2007b,moore2007,Rahul_3DTI,Inversion_Fu}.
Particularizing this claim to the $k_x=0$ and ${-}\pi$ planes, we get
  \e{ \nu_0  \equiv \cals_{12}(\bar{\lambda})+\cals_{30}(\bar{\lambda}) \mod 2, \la{nu0toS12}}
where $\cals_{30}(\bar{\lambda})\in \{0,1\}$ is defined as the parity of Wilson bands that intersect the $\lambda = \bar \lambda$ reference line along $t\in[3,4]$. Finally, we utilize the fact that $\cals_{30}$ is always even for glide-symmetric insulators. Thus $\cals_{30}(\bar \lambda)$ drops out of Eq.\,(\ref{nu0toS12}) and the desired Eq.\,(\ref{glideforget}) follows. The evenness of $\cals_{30}$ follows from the two-fold degeneracy of Wilson bands for  $t\in[3,4]$. Indeed, any point on this interval is invariant under $T\bar{M}_y$, which squares to a Bravais-lattice translation in $\vec{x}$. Consequently, the representation of $(T\bar{M}_y)^2$ with  Bloch functions is $e^{-ik_x}=-1$, leading to a Kramers degeneracy for all  $t\in[3,4]$. This quick argument for degeneracy has been developed more precisely in Ref.\,\cite{Cohomological}. 

\section{Proof of Proposition \ref{prp:SES_classifications}\label{app:proof}}

{
Let $\phi: \ZZZ \times G \fromto \braces{\pm 1}$ be the homomorphism tracking unitarity versus antiunitarity of symmetry representations, and let us use the same symbol for its restriction to subgroups. By the Addendum to Hypothesis we know that $\phi$ sends the generator of $\ZZZ$ to $-1$. For bosonic SPT phases, by the Hypothesis we have
\begin{eqnarray}
\SPT^d_b\paren{G,\phi} &\isomorphic& h^d_b\paren{BG,\phi}, \\
\SPT^d_b\paren{\ZZZ \times G,\phi} &\isomorphic& h^d_b\paren{B(\ZZZ\times G),\phi}.
\end{eqnarray}
For fermionic SPT phases, choosing $G_f = G$ in the factorization of both $G$ and $\ZZZ \times G$ [see Eq.\,(\ref{G=G_b_times_G_f})], we have
\begin{eqnarray}
\SPT^d_f\paren{G,\phi} &\isomorphic& h^d_{f,(G_f,\phi)}\paren{B0, \phi}, \\
\SPT^d_f\paren{\ZZZ \times G,\phi} &\isomorphic& h^d_{f,(G_f,\phi)}\paren{B\paren{\ZZZ \times 0},\phi},
\end{eqnarray}
where $h_{f,(G_f,\phi)}$ is the generalized cohomology theory associated with $(G_f,\phi)$ (see App.\,\ref{app:TGCH}), and $0$ denotes the trivial group. Since both $h_b$ and $h_{f,(G_f,\phi)}$ are generalized cohomology theories, to prove Proposition \ref{prp:SES_classifications} it will suffice, in either case, to establish the following property of an arbitrary generalized cohomology theory:

\begin{widetext}
\begin{prp}\label{prp:SES_h}
Let $h$ be any generalized cohomology theory, $n$ be any integer, $G$ be any group, and $\phi: \ZZZ \times G \fromto \braces{\pm 1}$ be any homomorphism sending the generator of $\ZZZ$ to $-1$. There is a short exact sequence
\begin{eqnarray}
0 \fromto h^{n-1}\paren{BG,\phi}/2h^{n-1}\paren{BG,\phi} \xfromto{\alpha} h^n\paren{B\paren{\ZZZ\times G},\phi} \xfromto{\beta} \braces{[c]\in h^n\paren{BG,\phi} | 2[c] = 0} \fromto 0,
\end{eqnarray}
where $\beta$ is induced by the inclusion $G \fromto \ZZZ \times G$.
\end{prp}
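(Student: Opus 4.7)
The plan is to realize the short exact sequence as the two-line filtration coming from the Atiyah–Hirzebruch–Serre spectral sequence of the fibration
\[
BG \oneone B(\ZZZ \times G) \onto B\ZZZ,
\]
in which the base $B\ZZZ \homotopic S^1$ is one-dimensional. Because the nonzero columns are confined to $p=0$ and $p=1$, there are no possible higher differentials, the spectral sequence collapses at $E_2$, and the induced filtration on $h^n(B(\ZZZ \times G),\phi)$ has exactly two steps, producing a short exact sequence.

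First I would set up the local-coefficient system. Writing $\phi_\ZZZ$ and $\phi_G$ for the restrictions of $\phi$, the $E_2$-page is $E_2^{p,q} = H^p(S^1;\, h^q(BG,\phi_G))$, with $\pi_1(S^1)=\ZZZ$ acting on $h^q(BG,\phi_G)$ via $\phi_\ZZZ$. By hypothesis the generator of $\ZZZ$ is sent to $-1\in\braces{\pm 1}$, and under the canonical $\braces{\pm 1}$-action on the $\Omega$-spectrum (loop-reversal on $\Omega F_{n+1}$) this acts as the group inversion $[a]\mapsto -[a]$ on $h^q(BG,\phi_G)$, using the standard fact that loop-reversal on a based loop space realizes the $H$-space inverse up to canonical homotopy.

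Second I would compute the two surviving columns. For any abelian group $M$ equipped with the monodromy $-1$ on $S^1$, the invariants and coinvariants are $H^0(S^1; M_{-1}) = \braces{m\in M : 2m=0}$ and $H^1(S^1; M_{-1}) = M/2M$. Applying this with $M=h^n(BG,\phi)$ and $M=h^{n-1}(BG,\phi)$ gives
\[
E_2^{0,n} = \braces{[c]\in h^n(BG,\phi) : 2[c]=0}, \qquad E_2^{1,n-1} = h^{n-1}(BG,\phi)/2h^{n-1}(BG,\phi).
\]
All other entries on the anti-diagonal $p+q=n$ vanish, so $E_\infty^{p,q}=E_2^{p,q}$, and the filtration yields
\[
0 \fromto E_\infty^{1,n-1} \fromto h^n(B(\ZZZ\times G),\phi) \fromto E_\infty^{0,n} \fromto 0,
\]
which is the claimed sequence. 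I would then identify the quotient map as the edge homomorphism associated with the fiber inclusion $\braces{\pt}\times BG \oneone S^1 \times BG$. On fundamental groups this is the inclusion $G \oneone \ZZZ\times G$, so by the naturality clause of the Hypothesis it matches the glide-forgetting map $\beta$; the other edge homomorphism supplies $\alpha$.

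The principal obstacle I expect is verifying that the monodromy on the local coefficient system is exactly the $-1$ automorphism of the abelian group $h^q(BG,\phi_G)$. This requires (i) unpacking the definition of twisted $h^*$ in (\ref{widetilde_X_Omega_F_n+1}) to confirm that elements of $\pi_1$ on which $\phi$ takes the value $-1$ act on the spectrum by loop-reversal on $\Omega F_{n+1}$, and (ii) identifying that loop-reversal with the $H$-space inversion that gives the group-theoretic $-[a]$. Beyond this, the remaining ingredients — the naturality and convergence of the twisted Atiyah–Hirzebruch–Serre spectral sequence over the one-dimensional base $S^1$, and the identification of edge maps with symmetry-forgetting — are standard.
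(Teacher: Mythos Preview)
Your proposal is correct and takes a genuinely different route from the paper's own proof. The paper does not invoke the Atiyah--Hirzebruch--Serre spectral sequence at all; instead it works by hand at the level of mapping spaces. Concretely, the paper rewrites both $h^n(BG,\phi)$ and $h^n(B(\ZZZ\times G),\phi)$ as sets of homotopy classes of (equivariant, pointed) maps into a single target $Y=\Map_\star^G(\widetilde{BG},F_{n+1})$: elements of the former are maps $I/\partial I\to Y$, and elements of the latter are $\ZZZ$-equivariant maps $(I\times\RRR)/(\partial I\times\RRR)\to Y$, where the $\ZZZ$-action on the domain is $(s,r)\mapsto(1-s,r+1)$. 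With this description in hand, the paper then computes $\image\beta'$ and $\kernel\beta'$ by explicit homotopies (e.g., dialing $r$ from $0$ to $1$ gives a homotopy $c\sim\bar c$, forcing $2[c]=0$), defines $\alpha'$ by extending a map on a square periodically over $\RRR$, and determines $\kernel\alpha'$ by a cube argument (Fig.~\ref{fig:proof_kernel_1}).

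Your spectral-sequence argument packages exactly the same information more compactly: the collapse over a one-dimensional base replaces the paper's explicit homotopy extensions, and your computation of $H^0(S^1;M_{-1})$ and $H^1(S^1;M_{-1})$ is the abstract version of the paper's $c\sim\bar c$ and cube arguments. What the paper's approach buys is an explicit model for $\alpha$ that visibly matches the physical alternating-layer construction (compare Fig.~\ref{fig:proof_a_b} with Fig.~\ref{fig:alternating_layer_construction}); what your approach buys is brevity, and a template that would generalize immediately to other one-dimensional base spaces or to iterated extensions. The obstacle you flag---that the monodromy on the fiber cohomology really is $-1$---is precisely the content of the paper's $(s,r)\mapsto(1-s,r+1)$ model for the $\ZZZ$-action together with the identification of loop reversal with the $H$-space inverse, so your concern is well placed but resolvable from the definitions in App.~\ref{app:TGCH}.
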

\end{widetext}

Any group homomorphism $f:G_1 \fromto G_2$ gives rise to a short exact sequence that reads \cite{artin2011algebra}
\begin{equation}
0 \fromto \kernel f \fromto G_1 \fromto \image f \fromto 0, \label{temp007}
\end{equation}
where the first map $\kernel f \fromto G_1$ is the inclusion and the second map $G_1 \fromto \image f$ is the same as $f$ but with possibly restricted codomain. According to Eqs.\,(\ref{temp020})(\ref{temp021}), the inclusion $G \fromto \ZZZ \times G$ induces a homomorphism
\begin{equation}
\beta': h^n\paren{B\paren{\ZZZ \times G},\phi} \fromto h^n\paren{BG, \phi}
\end{equation}
on generalized cohomology groups. Consequently, there is a short exact sequence
\begin{equation}
0 \fromto \kernel \beta' \fromto h^n\paren{B\paren{\ZZZ \times G},\phi} \fromto \image \beta' \fromto 0.
\end{equation}
To prove Proposition \ref{prp:SES_h}, the idea is to show that
\begin{eqnarray}
\image \beta' &=& \{ [c] \in h^n\paren{BG,\phi} \big| 2[c] = 0 \}, \label{proof_image} \\
\kernel \beta' &\isomorphic&  h^{n-1}\paren{BG} / 2 h^{n-1}\paren{BG}. \label{proof_kernel}
\end{eqnarray}

Let us represent $h$ by an $\Omega$-spectrum $\paren{F_n}_{n\in \ZZZ}$.
According to Eq.\,(\ref{widetilde_X_Omega_F_n+1}), we have
\begin{equation}
h^n\paren{BG, \phi} \coloneq \brackets{\widetilde{BG}, \Omega F_{n+1}}_{G}. \label{temp001}
\end{equation}
By definition, the loop space $\Omega F_{n+1}$ is the space of pointed maps from the circle $\SS^1$ to $F_{n+1}$ \cite{Hatcher}, i.e.
\begin{equation}
\Omega F_{n+1} \coloneq \Map_\star\paren{\SS^1, F_{n+1}}.
\end{equation}
We can regard $\SS^1 \coloneq I/\partial I$ as the unit interval $I \coloneq [0,1]$ with its endpoints identified ($\partial I \coloneq \braces{0,1}$ denotes the boundary of $I$), and move it to the other side of $[-,-]_G$ in Eq.\,(\ref{temp001}). That is,
\begin{equation}
h^n\paren{BG, \phi} \isomorphic \angles{ \paren{I \times \widetilde{BG}} / \paren{\partial I \times \widetilde{BG}}, F_{n+1} }_G. \label{temp015}
\end{equation}
Here, $\angles{-,-}_G$ is the same as $\brackets{-,-}_G$ but requires basepoint \cite{munkres2000topology} to be preserved: it is the set of homotopy classes of \emph{pointed}, $G$-equivariant maps. In turn, let us tuck $\widetilde{BG}$ away in the codomain of $\angles{-,-}_G$:
\begin{equation}
h^n\paren{BG, \phi} \isomorphic \angles{ I / \partial I, \Map_\star^G\paren{ \widetilde{BG}, F_{n+1} } }. \label{temp004}
\end{equation}
where $\Map_\star^G\paren{-,-}$ is the same as $\Map_\star\paren{-,-}$ but requires $G$-action to be preserved; it is the space of pointed, \emph{$G$-equivariant} \cite{DavisKirk} maps. Note that the $G$-equivariance requirement disappeared from $\angles{-,-}_G$ as a side effect \cite{Note71}. Introducing the shorthand
\begin{equation}
\paren{Y, y_0} \coloneq \Map_\star^G\paren{ \widetilde{BG}, F_{n+1} },
\end{equation}
where $y_0$ denotes the basepoint of $Y$, we now have
\begin{equation}
h^n\paren{BG, \phi} \isomorphic \angles{ I / \partial I, Y }. \label{hdGY}
\end{equation}
A similar derivation shows that
\begin{equation}
h^n\paren{B\paren{\ZZZ \times G},\phi} \isomorphic \angles{ \paren{I \times \RRR} / \paren{\partial I \times \RRR}, Y }_{\ZZZ}, \label{hdZGY}
\end{equation}
where $\RRR$ arose as the universal cover of the classifying space $\SS^1$ of $\ZZZ$ \cite{AdemMilgram}. Eq.\,(\ref{hdZGY}) is the set (actually, abelian group) of homotopy classes of $\ZZZ$-equivariant, pointed maps from $\paren{I \times \RRR} / \paren{\partial I \times \RRR}$ to $Y$. If we introduce coordinates $s$ and $r$ for $I$ and $\RRR$, respectively, then the generator of $\ZZZ$ will send $(s,r)$ to $(1-s, r+1)$. The action of $\ZZZ$ on $Y$ is trivial \cite{Note81}.%
\phantom{
\setcounter{footnote}{70}
\footnote{{With the formulation in App.\,\ref{app:TGCH}, $G$ would act nontrivially on $I$, and the $G$-equivariance requirement would not disappear from $\angles{-,-}_G$ as we pass from Eq.\,(\ref{temp015}) to Eq.\,(\ref{temp004}). Fortunately, there is an equivalent formulation where $G$ acts nontrivially on $F_{n+1}$ instead of $I$, for which the $G$-equivariance requirement does disappear.}}
\setcounter{footnote}{80}
\footnote{{Technically, in view of footnote \cite{Note71}, we ought to let the generator of $\ZZZ$ send $(s,r)$ to $(s,r+1)$ and act nontrivially on $Y$, but this is essentially equivalent to the alternative action considered here and would only incur minor changes to the proof.}}
}

With Eqs.\,(\ref{hdGY}) and (\ref{hdZGY}), we can now forget about $G$ and work directly with $Y$. Note that Eqs.\,(\ref{hdGY}) and (\ref{hdZGY}) are the codomain and domain of the homomorphism $\beta'$. Explicitly, an element of Eq.\,(\ref{hdZGY}) is represented by a pointed map
\begin{eqnarray}
b &:& \paren{I \times \RRR} / \paren{\partial I \times \RRR} \fromto Y, \nonumber\\
&& (s,r) \mapsto b(s,r) \label{temp002}
\end{eqnarray}
such that
\begin{eqnarray}
b(s,r) = b(1-s,r+1), ~\forall s, r, \label{temp003}
\end{eqnarray}
whereas an element of Eq.\,(\ref{hdGY}) is represented by a pointed map
\begin{eqnarray}
c &:& I/\partial I \fromto Y, \nonumber\\
&& s \mapsto c(s).
\end{eqnarray}
See Figs.\,\ref{fig:proof_maps}(b)(c) for an illustration.
Zero elements of the abelian groups (\ref{hdGY}) and (\ref{hdZGY}) are represented by constant maps, which map all points to $y_0$. The homomorphism $\beta'$ itself can be given by restricting to $r = 0$, that is, by setting
\begin{equation}
c(s) = b(s,0),
\end{equation}
as depicted in Fig.\,\ref{fig:proof_image_1}.

\begin{figure}[t]
\centering
\includegraphics[width=3in]{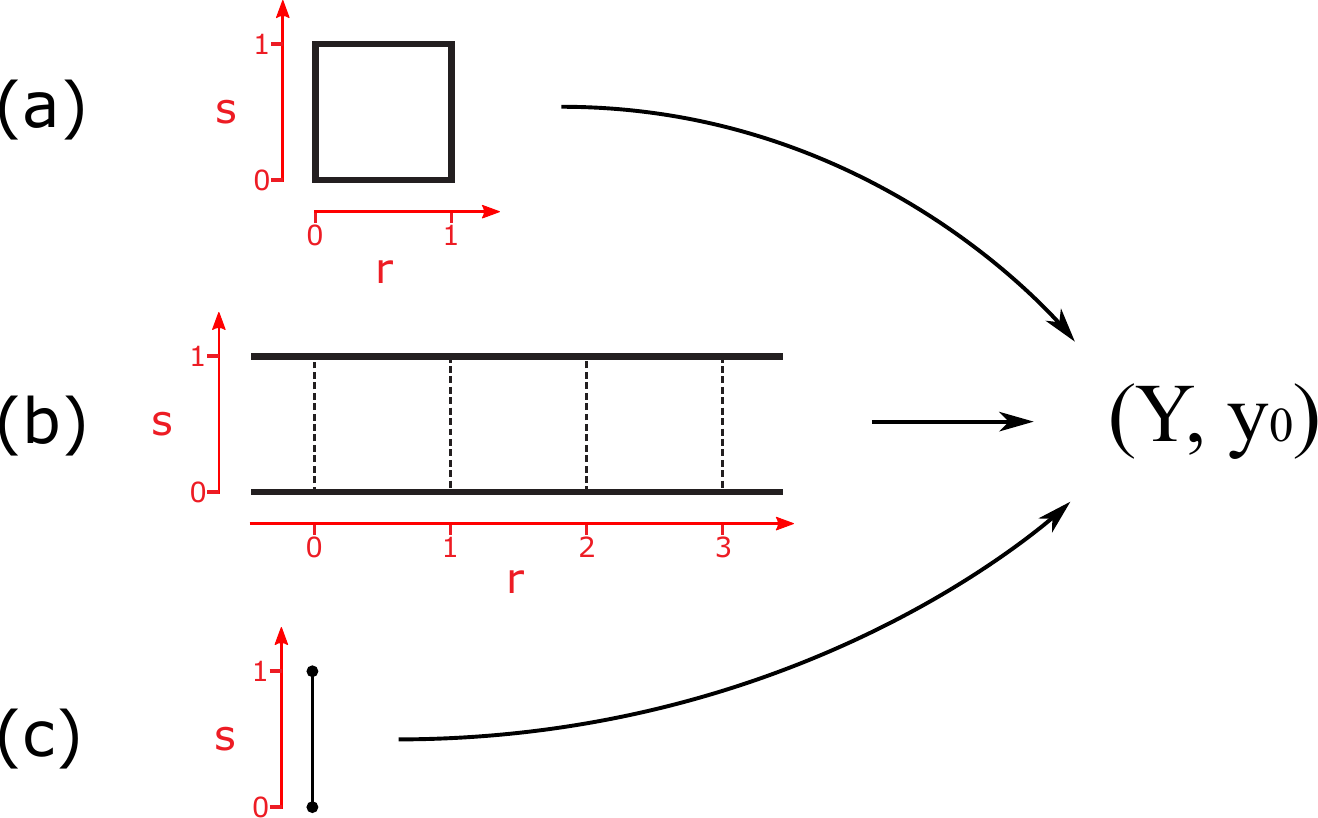}
\caption{{Elements of $h^{n-1}\paren{BG, \phi}$, $h^n\paren{B\paren{\ZZZ \times G},\phi}$, and $h^n\paren{BG, \phi}$ can be represented by maps from (a) $I \times I$, (b) $I \times \RRR$, and (c) $I$ to $Y$, respectively.}}
\label{fig:proof_maps}
\end{figure}

We are now ready to prove Eq.\,(\ref{proof_image}). Take any $b$ as in Eqs.\,(\ref{temp002})(\ref{temp003}) and let $c: I/\partial I \fromto Y$ be its restriction to $r=0$. Dialing $r$ from $0$ to $1$ then yields a homotopy $c \sim \bar c$ that preserves basepoint, where $\bar c(s) \coloneq c(1-s)$. This means that $[c] = -[c]$, or equivalently $2[c] = 0$, as elements of $h^n\paren{BG, \phi}$. Conversely, let $c: I/\partial I \fromto Y$ be any pointed map that is homotopic to $\bar c$ while preserving basepoint. Write $c_r: I/\partial I \fromto Y$ for the homotopy, with $r \in [0,1]$, $c_0 = c$, and $c_1 = \bar c$. Then the $b$ defined by
\begin{equation}
b(s,r) = \begin{cases}
c_{r-2n}(s), & r\in [2n, 2n+1],~n\in \ZZZ,\\
c_{r-2n+1}(1-s) , & r\in [2n-1, 2n],~n\in \ZZZ
\end{cases}
\end{equation}
will be a pointed map (\ref{temp002}) satisfying Eq.\,(\ref{temp003}) whose restriction to $r=0$ is $c$.

As for Eq.\,(\ref{proof_kernel}), we note that we have
\begin{equation}
h^{n-1}\paren{BG, \phi} \isomorphic \angles{ I/\partial I, \Map_\star^G\paren{\widetilde{BG}, F_n} },
\end{equation}
by shifting the degree in Eq.\,(\ref{temp004}). Thanks to Eq.\,(\ref{FOmegaF}), we can further replace $F_n$ by $\Omega F_{n+1}$ and trade $\Omega$ for an extra factor of $I$ in the domain of $\angles{-,-}$. This gives
\begin{equation}
h^{n-1}\paren{BG, \phi} \isomorphic \angles{ (I \times I)/ \partial (I \times I), Y }.\label{hd-1GY}
\end{equation}
See Fig.\,\ref{fig:proof_maps}(a). We define a map,
\begin{equation}
\alpha': h^{n-1}\paren{BG, \phi} \fromto  h^n\paren{B\paren{\ZZZ \times G},\phi}, \label{alpha'_h}
\end{equation}
by identifying the two factors of $I$ in $(I \times I)/ \partial (I \times I)$ with $I$ and $\RRR$ in $\paren{I \times \RRR} / \paren{\partial I \times \RRR}$, where we extend the second $I$ to $\RRR$ via the $\ZZZ$-action.
More explicitly, if a pointed map
\begin{equation}
a: (I \times I)/ \partial (I \times I)\fromto Y \label{temp005}
\end{equation}
represents an element $[a]$ of $h^{n-1}\paren{BG, \phi}$, then the $b$ defined by
\begin{equation}
b(s,r) = \begin{cases}
a(s,r-2n), & r\in [2n, 2n+1],~n\in \ZZZ,\\
a(1-s,r-2n+1) , & r\in [2n-1, 2n],~n\in \ZZZ
\end{cases}\label{temp012}
\end{equation}
will represent $\alpha'([a])$. See Fig.\,\ref{fig:proof_a_b} for an illustration. This $\alpha'$ is a homomorphism because addition can be defined using $Y$.

\begin{figure}[t]
\centering
\includegraphics[width=2in]{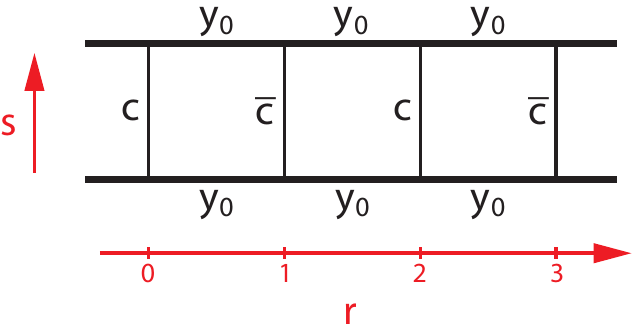}
\caption{{A representative $b: \paren{I \times \RRR} / \paren{\partial I \times \RRR} \fromto Y$ of an element of Eq.\,(\ref{hdGY}) and its restriction $c: I/\partial I \fromto Y$ to $r=0$.}}
\label{fig:proof_image_1}
\end{figure}

\begin{figure}[t]
\centering
\includegraphics[width=2.4in]{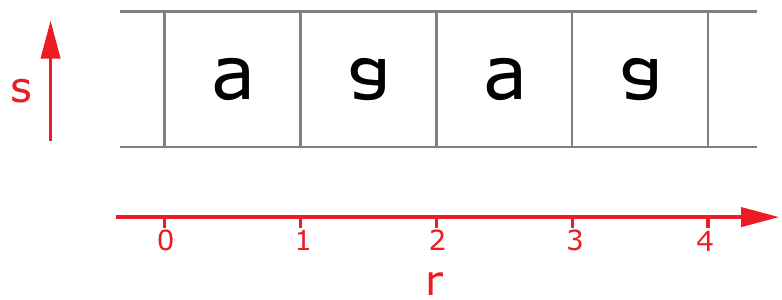}
\caption{{A representative $a: (I \times I)/ \partial (I \times I)\fromto Y$ of an element of  Eq.\,(\ref{hd-1GY}) and the map $b: \paren{I \times \RRR} / \paren{\partial I \times \RRR} \fromto Y$ defined by Eq.\,(\ref{temp012}).}}
\label{fig:proof_a_b}
\end{figure}

\begin{figure}[t]
\centering
\includegraphics[width=2in]{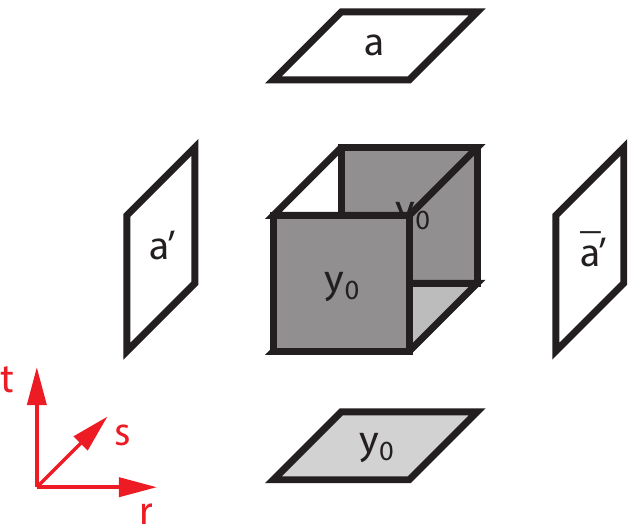}
\caption{A homotopy $a_\bullet$ as in Eqs.\,(\ref{aIIY})-(\ref{aIIY_p3}). Its restrictions to $t=1$, $r=0$, $r=1$, and $t=0$ are $a$, $a'$, $\overline{a'}$, and the constant map at $y_0$, respectively.}
\label{fig:proof_kernel_1}
\end{figure}

To prove Eq.\,(\ref{proof_kernel}), we first note that $\image \alpha'$ must be contained in $\kernel \beta'$. This is because a pointed map (\ref{temp005}) by definition maps the entire $\partial (I \times I)$ to the basepoint $y_0$ of $Y$. Conversely, $\kernel \beta'$ must be contained in $\image \alpha'$. Indeed, if the restriction $c$ of a given $b$ to $r=0$ is homotopic (while preserving basepoint) to the constant map at $y_0$, then $b$ itself must be homotopic (while preserving basepoint and $\ZZZ$-action) to a map whose restriction to $r=0$ is the constant map at $y_0$; that is, the restriction to $r=0$ satisfies the homotopy extension property \cite{Hatcher,DavisKirk}. Therefore, we have
\begin{equation}
\image \alpha' = \kernel \beta'. \label{temp008}
\end{equation}
This boils down the proof of Eq.\,(\ref{proof_kernel}) to the computation of $\image \alpha'$.  

We will determine $\image \alpha'$ by combining Eq.\,(\ref{temp007}) with an explicit computation of $\kernel \alpha'$. Suppose a pointed map $a: (I \times I)/ \partial(I \times I) \fromto Y$ represents an element of $\kernel \alpha'$. This implies that there is a homotopy,
\begin{equation}
a_t: I \times I \fromto Y, \label{aIIY}
\end{equation}
with $t \in [0,1]$, such that
\begin{eqnarray}
&&a_0(s,r) = y_0, ~\forall s, r, \label{aIIY_p1}\\
&&a_t(0,r) = a_t(1,r) = y_0, ~\forall t, r, \label{aIIY_p2} \\
&&a_1(s, r) = a(s,r), ~\forall s, r. \label{aIIY_p3}
\end{eqnarray}
We can visualize this homotopy $a_\bullet$ by treating $t$, $s$, and $r$ as the three coordinates of a hypothetical cube, shown in Fig.\,\ref{fig:proof_kernel_1}. Now, the key observation is that 
the restriction of $a_\bullet$ to $r=0$ can also be viewed as a pointed map $(I \times I)/ \partial(I \times I) \fromto Y$, which we shall denote by $a'$. The restriction of $a_\bullet$ to $r=1$ will then be $\overline{a'}$, where $\overline{a'}$ is the same as $a$ but has the opposite orientation. As a result, $a$ is homotopic (while preserving basepoint) to $a'+a'$, and $[a] = [a'] + [a']$ as elements of $h^{n-1}\paren{BG, \phi}$. That is, $[a] \in 2 h^{n-1}\paren{BG,\phi}$ [see Eq.\,(\ref{2A}) for the meaning of 2]. This shows that $\kernel \alpha' \subset 2 h^{n-1}\paren{BG,\phi}$. Conversely, if $[a] = [a'] + [a']$ for some $a'$, then one can construct an $a_\bullet$ that satisfies Eqs.\,(\ref{aIIY_p1})(\ref{aIIY_p2})(\ref{aIIY_p3}) by putting $a'$ and $\overline{a'}$ on $r=0$ and $r=1$, respectively. This shows that $2 h^{n-1}\paren{BG,\phi} \subset \kernel \alpha'$. Therefore, we have
\begin{equation}
\kernel \alpha' = 2 h^{n-1}\paren{BG,\phi}. \label{temp009}
\end{equation}
To complete the proof, we note that
\begin{eqnarray}
\kernel \beta' &=& \image \alpha' \nonumber\\
&\isomorphic& h^{n-1}\paren{BG,\phi} / \kernel \alpha' \nonumber\\
&=& h^{n-1}\paren{BG,\phi} / 2 h^{n-1}\paren{BG,\phi},
\end{eqnarray}
where we have used Eqs.\,(\ref{temp008}), (\ref{temp007}), and (\ref{temp009}) in the first, second, and third lines, respectively.}

\section{Proof of Corollaries \ref{cor:quad-chotomy} and \ref{cor:direct_sum_decomposition} \label{app:proof_corollaries}}

\begin{proof}[Proof of Corollary \ref{cor:quad-chotomy}]
It is clear that the four types are mutually exclusive. Take any $[b]\in \SPT^d\paren{\ZZZ \times G}$. If $[b]=0$, then it is of type (i). Now assume $[b]\neq 0$. There are two cases: either $\beta ([b]) = 0$ or $\beta([b]) \neq 0$. In the first case, $[b]\in \kernel \beta = \image \alpha$, so there is an $[a]\in A$ such that $\alpha([a]) = [b]$. This means $2[b] = \alpha(2[a]) = \alpha(0) = 0$. Thus $[b]$ is of type (ii). In the second case, we have $\beta(2[b]) = 2 \beta([b]) = 0$. By the same argument, $2(2[b]) = 4[b] = 0$. It follows that either $2[b] = 0$ and $[b]$ is of type (iii), or $2[b] \neq 0$, in which case we know by the same argument that $2[b]$ is of type (ii) and so $[b]$ is of type (iv).
\end{proof}

\begin{proof}[Proof of Corollary \ref{cor:direct_sum_decomposition}]
By Corollary \ref{cor:quad-chotomy}, any nontrivial element of $\SPT^d\paren{\ZZZ \times G}$ must have order 2 or 4.
The proof is trivial in case $\SPT^d\paren{\ZZZ \times G}$ is finitely generated, thanks to the fundamental theorem of finitely generated abelian groups.
In general, we note that the first remark in the proof implies that $\SPT^d\paren{\ZZZ \times G}$ is an abelian $p$-group for $p=2$ with bounded exponent [the smallest positive integer $k$ such that $k [b] = 0$ for all $[b]\in \SPT^d\paren{\ZZZ \times G}$]. The desired result then follows from Pr\"ufer's first theorem, which states that an abelian $p$-group with bounded exponent must be a (possibly infinite, or even uncountable) direct sum of cyclic subgroups \cite{Kargapolov}.
\end{proof}

\section{Proof of Proposition \ref{prp:glide_to_translation}\label{app:proof_translation}}

Proposition \ref{prp:glide_to_translation} amounts to the statement that {the kernel of Eq.\,(\ref{beta}) is contained in the kernel of Eq.\,(\ref{gamma}). Similarly to App.\,\ref{app:proof}, for bosonic SPT phases, by the Hypothesis we have
\begin{eqnarray}
\SPT^d_b\paren{G,\phi} &\isomorphic& h^d_b\paren{BG,\phi}, \\
\SPT^d_b\paren{\ZZZ \times G,\phi} &\isomorphic& h^d_b\paren{B(\ZZZ\times G),\phi}, \\
\SPT^d_b\paren{2\ZZZ \times G,\phi} &\isomorphic& h^d_b\paren{B(2\ZZZ\times G),\phi},
\end{eqnarray}
For fermionic SPT phases, choosing $G_f = G$ in the factorization of $G$, $\ZZZ \times G$, and $2\ZZZ \times G$ [see Eq.\,(\ref{G=G_b_times_G_f})], we have
\begin{eqnarray}
\SPT^d_f\paren{G,\phi} &\isomorphic& h^d_{f,(G_f,\phi)}\paren{B0, \phi}, \\
\SPT^d_f\paren{\ZZZ \times G,\phi} &\isomorphic& h^d_{f,(G_f,\phi)}\paren{B\paren{\ZZZ \times 0},\phi}, \\
\SPT^d_f\paren{2\ZZZ \times G,\phi} &\isomorphic& h^d_{f,(G_f,\phi)}\paren{B\paren{2\ZZZ \times 0},\phi},
\end{eqnarray}
where $h_{f,(G_f,\phi)}$ is the generalized cohomology theory associated with $(G_f,\phi)$ (see App.\,\ref{app:TGCH}), and $0$ denotes the trivial group. Since both $h_b$ and $h_{f,(G_f,\phi)}$ are generalized cohomology theories, to prove Proposition \ref{prp:glide_to_translation} it will suffice, in either case, to establish the following property of an arbitrary generalized cohomology theory:
\begin{prp}\label{prp:glide_to_translation_h}
Let $h$ be any generalized cohomology theory, $n$ be any integer, $G$ be any group, and $\phi: \ZZZ \times G \fromto \braces{\pm 1}$ be any homomorphism sending the generator of $\ZZZ$ to $-1$. Let
\begin{eqnarray}
&&\beta' : h^n\paren{B\paren{\ZZZ\times G},\phi} \fromto h^n\paren{BG,\phi}, \\
&&\gamma: h^n\paren{B\paren{\ZZZ\times G},\phi} \fromto h^n\paren{B\paren{2\ZZZ \times G},\phi}
\end{eqnarray}
be induced by the inclusions $G \fromto \ZZZ \times G$ and $2\ZZZ \times G \fromto \ZZZ \times G$. Then
\begin{equation}
\kernel \beta' \subset \kernel \gamma. \label{temp011}
\end{equation}
\end{prp}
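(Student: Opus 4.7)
The plan is to first reduce Proposition \ref{prp:glide_to_translation_h} to the statement $\image \alpha' \subset \kernel \gamma$, where $\alpha'$ is the point-set ``alternating-layer'' map constructed in App.\,\ref{app:proof} as Eq.\,(\ref{alpha'_h}). This is because Proposition \ref{prp:SES_h}, whose proof in that appendix I would treat as a black box, already identifies $\kernel \beta' = \image \alpha'$; the inclusion $\kernel \gamma \subset \kernel \beta'$ being automatic by functoriality (the inclusion $G \hookrightarrow \ZZZ\times G$ factors as $G \hookrightarrow 2\ZZZ\times G \hookrightarrow \ZZZ\times G$), the content of Proposition \ref{prp:glide_to_translation_h} is the reverse inclusion $\kernel \beta' \subset \kernel \gamma$. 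The passage from Proposition \ref{prp:glide_to_translation} to Proposition \ref{prp:glide_to_translation_h} proceeds exactly as at the beginning of App.\,\ref{app:proof}: for the bosonic case apply the Hypothesis directly, and for the fermionic case choose $G_f = G$ so that the three ambient symmetry groups in question have bosonic factors $0$, $\ZZZ$, and $2\ZZZ$, on which $\beta'$ and $\gamma$ are the maps induced by $0 \hookrightarrow \ZZZ$ and $2\ZZZ \hookrightarrow \ZZZ$.

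So the content to be supplied is $\image \alpha' \subset \kernel \gamma$. Using the explicit formula for $\alpha'$ from App.\,\ref{app:proof}, any class in $\image \alpha'$ is represented by a pointed map $b:(I\times\RRR)/(\partial I\times\RRR)\to Y$ which, on a $2\ZZZ$-fundamental domain $r\in[0,2]$, consists of an arbitrary pointed map $a:(I\times I)/\partial(I\times I)\to Y$ on $r\in[0,1]$ followed by its $s$-reflection $\bar a$ on $r\in[1,2]$; the two halves meet at $r=1$ at the basepoint $y_0$, and the endpoints $r\in\{0,2\}$ likewise evaluate to $y_0$ by the boundary-collapsing condition on $a$. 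Under $\gamma$ we discard the glide identification $b(s,r)=b(1-s,r+1)$ and retain only the $2\ZZZ$-periodicity $b(s,r)=b(s,r+2)$, which acts trivially on $Y$. The key observation is that the resulting ``$a$ glued to $\bar a$'' configuration is nullhomotopic through pointed, $2\ZZZ$-equivariant maps, precisely because $\bar a$ provides the additive inverse of $a$ once the seams at $r=0$, $r=1$, $r=2$ are all at the basepoint.

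I would realize this null-homotopy by a concrete formula that, as $t$ runs from $0$ to $1$, widens a constant buffer at $y_0$ around $r=1$ and around $r\in\{0,2\}$ while compressing the supports of $a$ and $\bar a$ into shrinking subintervals of $[0,2]$, and then extends $2\ZZZ$-periodically in $r$. Continuity at the seams is forced by $a|_{\partial(I\times I)} = y_0$, basepoint-preservation by $a(0,\cdot) = a(1,\cdot) = y_0$, and $2\ZZZ$-equivariance holds by construction. Once this homotopy is exhibited, $\gamma([b]) = 0$ follows and the inclusion $\kernel \beta' \subset \kernel \gamma$ is established. The main obstacle is essentially book-keeping rather than conceptual: one must verify that at no stage the full glide relation (rather than just $2\ZZZ$-periodicity) is tacitly used, which is precisely what makes the statement a genuine distinction between glide and pure translation and matches the physical intuition given via Fig.\,\ref{fig:image_alpha_subset_kernel_beta} and invoked in Sec.\,\ref{sec:puretranslation}.
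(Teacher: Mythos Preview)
Your reduction is exactly the paper's: factor $\beta' = \delta\circ\gamma$ to get $\kernel\gamma\subset\kernel\beta'$, then invoke $\kernel\beta'=\image\alpha'$ from App.\,\ref{app:proof} and show $\image\alpha'\subset\kernel\gamma$ by exhibiting a $2\ZZZ$-equivariant null-homotopy of the map $b$ in Eq.\,(\ref{temp012}).

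The one place your sketch falls short is the null-homotopy itself. ``Widening constant buffers at $r\in\{0,1,2\}$ while compressing the supports of $a$ and $\bar a$ into shrinking subintervals'' does not, by itself, produce a null-homotopy: if the supports shrink to points the family is discontinuous at $t=1$ (for fixed $s$ the value at the limiting center stays at $a(s,1/2)$, not $y_0$), and if they shrink to small positive intervals you have merely reparametrized $b$. You correctly note that $\bar a$ represents the additive inverse of $a$, but the cancellation has to be realized geometrically. The paper does this in the four moves pictured in Fig.\,\ref{fig:proof_kernel_gamma}: shrink the squares, \emph{rotate} them (so the $s$-reflected copy $\bar a$ becomes, up to homotopy, the $r$-reversed copy which is the concatenation-inverse), \emph{move} adjacent pairs together across a seam, and then \emph{annihilate} via the standard null-homotopy of a path followed by its reverse. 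Your buffer-widening is essentially just the ``shrink'' step; the remaining three are what actually effect $[a]+[\bar a]=0$ at the level of maps rather than homotopy classes. Once you supply those (or equivalently, invoke that reflection in either coordinate gives the inverse in $\pi_2(Y)$ and that the two reflections are homotopic through $SO(2)$), the argument is complete and matches the paper's.
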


\begin{figure}[t]
\centering
\includegraphics[width=1.8in]{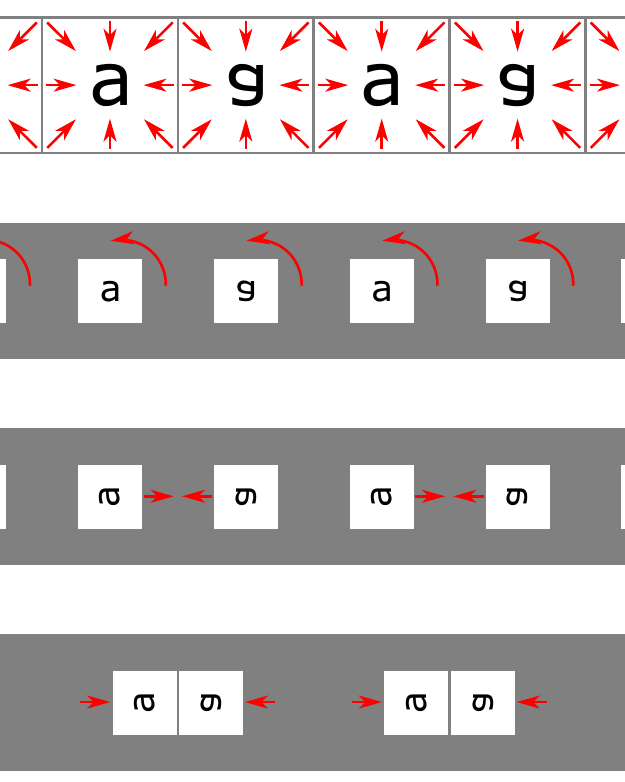}
\caption{A homotopy of map (\ref{temp012}) to the constant map at $y_0$ while preserving basepoint respecting condition (\ref{relaxed_condition_b}).}
\label{fig:proof_kernel_gamma}
\end{figure}

Let $\alpha'$ be defined as in Eq.\,(\ref{alpha'_h}), and let
\begin{equation}
\delta: h^n\paren{B\paren{2\ZZZ \times G},\phi} \fromto h^n\paren{BG,\phi}
\end{equation}
be induced by the inclusion $G \fromto 2\ZZZ \times G$. By the functoriality of a generalized cohomology theory [see remarks following Eq.\,(\ref{temp010})], we must have $\beta' = \delta \circ \gamma$. This implies that
\begin{equation}
\kernel \gamma \subset \kernel \beta'. \label{temp017}
\end{equation}
Below, we will show that
\begin{equation}
\image \alpha' \subset \kernel \gamma. \label{temp018}
\end{equation}
Combined with the equality $\image \alpha' = \kernel \beta'$ established in App.\,\ref{app:proof}, Eqs.\,(\ref{temp017})(\ref{temp018}) would then imply that $\image \alpha'$, $\kernel \gamma$, and $\kernel \beta'$ are actually all the same, from which Proposition \ref{prp:glide_to_translation_h} would follow.

To show $\image \alpha' \subset \kernel \gamma$, or equivalently $\gamma \circ \alpha' = 0$, we note that we can express
\begin{eqnarray}
h^n\paren{B\paren{2\ZZZ \times G},\phi} \isomorphic \angles{ \paren{I \times \RRR} / \paren{\partial I \times \RRR}, Y }_{2\ZZZ}, \label{hd2ZGY}
\end{eqnarray}
in much the same vein as Eqs.\,(\ref{hdGY}) and (\ref{hdZGY}). Here, $2\ZZZ$ is the subgroup of $\ZZZ$ of even integers, which means its generator is twice the generator of $\ZZZ$. Like for Eq.\,(\ref{hdZGY}), an element of Eq.\,(\ref{hd2ZGY}) is represented by a pointed map
\begin{eqnarray}
b &:& \paren{I \times \RRR} / \paren{\partial I \times \RRR} \fromto Y, \nonumber\\
&& (s,r) \mapsto b(s,r).
\end{eqnarray}
This map, however, only needs to satisfy the weaker condition,
\begin{equation}
b(s,r) = b(s, r+2), ~\forall s, r, \label{relaxed_condition_b}
\end{equation}
associated with twice the generator of $\ZZZ$. The homomorphism $\gamma$ corresponds to the relaxation of constraint (\ref{temp003}) to constraint (\ref{relaxed_condition_b}).

Now, take any $[a] \in h^{n-1}\paren{BG, \phi}$ and let $[b]$ be the element $\alpha'([a])$ of $h^n\paren{B\paren{\ZZZ \times G}, \phi}$. Recall from App.\,\ref{app:proof} that we can represent $[a]$ by a pointed map $a$ of the form (\ref{temp005}), and $[b]$ by corresponding the map $b$ defined in Eq.\,(\ref{temp012}); see Fig.\,\ref{fig:proof_a_b}. Depending on whether constraint (\ref{temp003}) or (\ref{relaxed_condition_b}) is imposed, this $b$ can be viewed as representing either the element $[b]$ of $h^n\paren{B\paren{\ZZZ \times G}, \phi}$ or the element $\gamma([b])$ of $h^n\paren{B\paren{2\ZZZ \times G}, \phi}$. In general, it may not be possible to homotope (i.e.\,deform) $b$ to the constant map at $y_0$ while preserving basepoint and respecting constraint (\ref{temp003}); this reflects the fact that $[b]$ may be nontrivial. However, if constraint (\ref{temp003}) is relaxed to (\ref{relaxed_condition_b}), then a homotopy always exists. Indeed, as illustrated in Fig.\,\ref{fig:proof_kernel_gamma}, starting with the picture in Fig.\,\ref{fig:proof_a_b}, we can first shrink the squares of $a$ (and its upside-down partners), rotate the shrunk squares, moving pairs of the rotated squares towards each other, and finally annihilating them. This shows that $\gamma([b])$ is the trivial element of $h^n\paren{B\paren{2\ZZZ \times G}, \phi}$, and so $\gamma \circ \alpha' = 0$.
}

%\section*{References}

%\bibliographystyle{elsarticle-num}

%\bibliographystyle{plainnat}
%\bibliographystyle{abbrvnat}
%\bibliographystyle{unsrtnat}

\bibliographystyle{apsrev4-1}

%\renewcommand{\bibfont}{\footnotesize}

%merlin.mbs apsrev4-1.bst 2010-07-25 4.21a (PWD, AO, DPC) hacked
%Control: key (0)
%Control: author (72) initials jnrlst
%Control: editor formatted (1) identically to author
%Control: production of article title (-1) disabled
%Control: page (0) single
%Control: year (1) truncated
%Control: production of eprint (0) enabled
%

%\bibliography{glide_spt_bib_01,bib_June2017,tensor_network_bib_01}

\end{document}